\crefname{figure}{Figure}{Figure}
\newcommand{\ov}{\overline}
\newcommand{\Nats}{\mathbb{N}}
\newcommand{\Level}{\mathsf{Lv}}
\newcommand{\OpenParenthesis}{\alpha}
\newcommand{\CloseParenthesis}{\ov{\alpha}}
\newcommand{\OpenNonTerminal}{\mathcal{A}}
\newcommand{\CloseNonTerminal}{\ov{\mathcal{A}}}
\newcommand{\StartNonTerminal}{\mathcal{S}}
\newcommand{\Label}{\lambda}
\newcommand{\Path}{\rightsquigarrow}
\newcommand{\Dyck}{\mathcal{L}}
\newcommand{\ParSet}{\Sigma}
\newcommand{\Alphabet}{\Sigma}
\newcommand{\Grammar}{\mathcal{G}}
\newcommand{\BidirectedAlgo}{\mathsf{BidirectedReach}}
\newcommand{\Edges}{\mathsf{Edges}}
\newcommand{\Queue}{\mathcal{Q}}
\newcommand{\SetClosedParenthesis}{\Sigma^C}
\newcommand{\SetOpenParenthesis}{\Sigma^O}
\newcommand{\Produces}{\vdash}
\newcommand{\Tree}{\mathsf{Tree}}
\newcommand{\Bag}{B}
\newcommand{\Shorten}{\mathsf{Shorten}}
\newcommand{\DisjointSets}{\mathsf{DisjointSets}}
\newcommand{\Union}{\mathsf{Union}}
\newcommand{\Find}{\mathsf{Find}}
\newcommand{\MakeSet}{\mathsf{MakeSet}}
\mathchardef\mhyphen="2D
\newcommand{\DataStructure}{\mathcal{D}}
\newcommand{\True}{\mathsf{True}}
\newcommand{\ClosingLabel}{\ov{\Label}}
\newcommand{\Potential}{\Phi}
\newcommand{\restr}[1]{[#1]}
\newcommand{\Comp}{\mathcal{C}}
\newcommand{\Partition}{\mathcal{V}}
\newcommand{\DynamicDyck}{\mathsf{DynamicDyck}}
\newcommand{\Preprocessalgo}{\mathsf{Build}}
\newcommand{\Updatealgo}{\mathsf{Update}}
\newcommand{\Queryalgo}{\mathsf{Query}}
\newcommand{\ReachMap}{R}
\newcommand{\RSMalgo}{\mathsf{Process}}
\newcommand{\Pool}{\mathsf{Pool}}
\newtheorem{remark}{Remark}
\begin{document}

\title{Optimal Dyck Reachability for Data-Dependence and Alias Analysis}

\author{Krishnendu Chatterjee}
\affiliation{
  \institution{Institute of Science and Technology, Austria}
  \streetaddress{Am Campus 1}
  \city{Klosterneuburg}
  \postcode{3400}
  \country{Austria}
}
\email{krishnendu.chatterjee@ist.ac.at} 
\author{Bhavya Choudhary}
\affiliation{
  \institution{Indian Institute of Technology, Bombay}
  \streetaddress{IIT Area, Powai}
  \city{Mumbai}
  \postcode{400076}
  \country{India}
}
\email{bhavya@cse.iitb.ac.in}
\author{Andreas Pavlogiannis}
\affiliation{
  \institution{Institute of Science and Technology, Austria}
  \streetaddress{Am Campus 1}
  \city{Klosterneuburg}
  \postcode{3400}
  \country{Austria}
}
\email{pavlogiannis@ist.ac.at}

\begin{abstract}
A fundamental algorithmic problem at the heart of static analysis is Dyck reachability.
The input is a graph where the edges are labeled with different 
types of opening and closing parentheses, and the reachability information is computed 
via paths whose parentheses are properly matched.
We present new results for Dyck reachability problems with applications to 
alias analysis and data-dependence analysis.
Our main contributions, that include improved upper bounds as well as lower bounds that 
establish optimality guarantees, are as follows.

First, we consider Dyck reachability on bidirected graphs,
which is the standard way of performing field-sensitive points-to analysis.
Given a bidirected graph with $n$ nodes and $m$ edges, we present:
(i)~an algorithm with worst-case running time $O(m + n \cdot \alpha(n))$,
where $\alpha(n)$ is the inverse Ackermann function, improving the previously
known $O(n^2)$ time bound;
(ii)~a matching lower bound that shows that our algorithm is optimal
wrt to worst-case complexity;
and (iii)~an optimal average-case upper bound of $O(m)$ time, improving the
previously known $O(m \cdot \log n)$ bound.

Second, we consider the problem of context-sensitive data-dependence analysis,
where the task is to obtain analysis summaries of library code in the presence 
of callbacks.
Our algorithm preprocesses libraries in almost linear time,
after which the contribution of the library in the complexity of the client analysis
is only linear, and only wrt the number of call sites.

Third, we prove that combinatorial algorithms for Dyck reachability on general graphs 
with truly sub-cubic bounds cannot be obtained without obtaining sub-cubic 
combinatorial algorithms for Boolean Matrix Multiplication, which is a long-standing open problem.
Thus we establish that the existing combinatorial algorithms for Dyck reachability 
are (conditionally) optimal for general graphs.
We also show that the same hardness holds for graphs of constant treewidth.

Finally, we provide a prototype implementation of our algorithms for 
both alias analysis and data-dependence analysis.
Our experimental evaluation demonstrates that the new algorithms
significantly outperform all existing methods on the two problems,
over real-world benchmarks.
\end{abstract}

\begin{CCSXML}
<ccs2012>
<concept>
<concept_id>10003752.10010124.10010138.10010143</concept_id>
<concept_desc>Theory of computation~Program analysis</concept_desc>
<concept_significance>500</concept_significance>
</concept>
<concept>
<concept_id>10003752.10003809.10003635</concept_id>
<concept_desc>Theory of computation~Graph algorithms analysis</concept_desc>
<concept_significance>500</concept_significance>
</concept>
</ccs2012>
\end{CCSXML}

\ccsdesc[500]{Theory of computation~Program analysis}
\ccsdesc[500]{Theory of computation~Graph algorithms analysis}

\keywords{Data-dependence analysis, CFL reachability, Dyck reachability, Bidirected graphs, treewidth}

\parskip=0.0\baselineskip \advance\parskip by 0pt plus 0pt
\maketitle
\parskip=0.5\baselineskip \advance\parskip by 0pt plus 2pt

\section{Introduction}\label{sec:intro}
In this work we present improved upper bounds, lower bounds, and 
experimental results for algorithmic problems related to Dyck reachability, 
which is a fundamental problem in static analysis. 
We present the problem description, its main applications, 
the existing results, and our contributions.

\noindent{\bf Static analysis and language reachability.}
Static analysis techniques obtain information about programs without running them on specific inputs. 
These techniques explore the program behavior for all possible inputs and all possible executions. 
For non-trivial programs, it is impossible to  explore all the possibilities, and hence various approximations are used.
A standard way to express a plethora of static analysis problems is via {\em language reachability} that generalizes
graph reachability. 
The input consists of an underlying graph with labels on its edges from a fixed alphabet, and a language,
and reachability paths between two nodes must produce strings that belong to the given
language~\cite{Yannakakis90,Reps97}.

\noindent{\bf CFL and Dyck reachability.}
An extremely important case of language reachability in static analysis is {\em CFL} reachability,
where the input language is context-free, which can be used to model, e.g., 
context-sensitivity or field-sensitivity.
The CFL reachability formulation has applications to a very wide range of static analysis problems,
such as interprocedural data-flow analysis~\cite{PLDI29}, slicing~\cite{Reps94}, 
shape analysis~\cite{PLDI26}, impact analysis~\cite{Arnold96}, type-based flow analysis~\cite{Rehof01} and 
alias/points-to analysis~\cite{PLDI31,PLDI32,PLDI33,PLDI36,PLDI37,PLDI40}, etc.
In practice, widely-used large-scale analysis tools, such as Wala~\cite{Wala} and Soot~\cite{Soot,Bodden12}, 
equip CFL reachability techniques to perform such analyses.
In most of the above cases, the languages used to define the problem are those of properly-matched parenthesis, 
which are known as Dyck languages, and form a proper subset of context-free languages.
Thus Dyck reachability is at the heart of many problems in static analysis.

\noindent{\bf Alias analysis.} 
Alias analysis has been one of the major types of static analysis and a subject of 
extensive study~\cite{Sridharan13,PLDI8,PLDI20,PLDI16}.
The task is to decide whether two pointer variables may point to the same object during program execution.
As the problem is computationally expensive~\cite{PLDI17,PLDI24}, 
practically relevant results are obtained via approximations.
One popular way to perform alias analysis is via points-to analysis, where two variables may alias if their points-to sets intersect.
Points-to analysis is typically phrased as a Dyck reachability problem on Symbolic Points-to  Graphs (SPGs), 
which contain information about variables, heap objects and parameter passing due to method 
calls~\cite{PLDI36,PLDI37}.
In alias analysis there is an important distinction between context and field sensitivity, which we describe below.
\begin{compactitem}
\item {\em Context vs field sensitivity.}
Typically, the Dyck parenthesis are used in SPGs to specify two types of constraints.
\emph{Context sensitivity} refers to the requirement that reachability paths must respect the calling context due to method calls and returns.
\emph{Field sensitivity} refers to the requirement that reachability paths must respect field accesses of composite types in Java~\cite{PLDI32,PLDI33,PLDI36,PLDI37}, or references and dereferences of pointers~\cite{PLDI40} in C.
Considering both types of sensitivity makes the problem undecidable~\cite{Reps00}.
Although one recent workaround is approximation algorithms~\cite{Zhang17}, 
the standard approach has been to consider only one type of sensitivity.
Field sensitivity has been reported to produce better results, and being
more scalable~\cite{Lhotak06}.
We focus on context-insensitive, but field-sensitive points-to analysis.
\end{compactitem}

\noindent{\bf Data-dependence analysis.}
Data-dependence analysis aims to identify the def-use chains in a program.
It has many applications, including slicing~\cite{Reps94}, impact analysis~\cite{Arnold96} and bloat detection~\cite{Xu10}.
It is also used in compiler optimizations, where data dependencies are used to infer whether it is safe to reorder or parallelize program statements~\cite{Kuck81}.
Here we focus on the distinction between \emph{library vs client analysis} and the challenge of \emph{callbacks}.
\begin{compactitem}

\item {\em Library vs Client.} Modern-day software is developed in multiple stages and is interrelated.
The vast majority of software development relies on existing libraries and third-party components which are typically huge and complex.
At the same time, the analysis of client code is ineffective if not performed in conjunction with the library code.
These dynamics give rise to the potential of analyzing library code once, in an offline stage, and creating suitable 
analysis summaries that are relevant to client behavior only. The benefit of such a process is two-fold.
First, library code need only be analyzed once, regardless of the number of clients that link to it.
Second, it offers fast client-code analysis, since the expensive cost of analyzing the huge libraries has been spent offline, in an earlier stage.
Data-dependence analysis admits a nice separation between library and client 
code, and has been studied in~\cite{Tang15,Palepu17}.

\item {\em The challenge of callbacks.}
As pointed out recently in~\cite{Tang15}, one major obstacle to effective library summarization is the presence of callbacks.
Callback functions are declared and used by the library, but are implemented by the client.
Since these functions are missing when the library code is analyzed, library summarization is ineffective 
and the whole library needs to be reanalyzed on the client side, when callback functions become available.
\end{compactitem}

\noindent{\bf Algorithmic formulations and existing results.} 
We describe below the key algorithmic problems in the applications mentioned above and the existing results.
We focus on data-dependence and alias analysis via Dyck reachability, which is the standard way for performing such analysis.
Recall that the problem of Dyck reachability takes as input a (directed) graph, where some edges are marked with opening and closing parenthesis, and the task is to compute for every pair of nodes whether there exists a path between them such that the parenthesis along its edges are matched.

\begin{compactenum}
\item \emph{Points-to analysis.}
Context-insensitive, field-sensitive points-to analysis via Dyck reachability is phrased on an SPG $G$ with $n$ nodes and $m$ edges.
Additionally, the graph is \emph{bidirected}, meaning that if $G$ has an edge $(u,v)$ labeled with an opening parenthesis,
then it must also have the edge $(v,u)$ labeled with the corresponding closing parenthesis.
Bidirected graphs are found in most existing works on on-demand alias analysis via Dyck reachability, and their importance has been remarked 
in various works~\cite{Yuan09,Zhang13}.

The best existing algorithms for the problem appear in the recent work of~\cite{Zhang13}, where two algorithms are proposed.
The first has $O(n^2)$ \emph{worst-case} time complexity; and 
the second has $O(m\cdot \log n)$ \emph{average-case} time complexity and $O(m \cdot n \cdot \log n)$ \emph{worst-case} complexity.
Note that for dense graphs $m=\Theta(n^2)$, and the first algorithm has better average-case complexity too.

\item \emph{Library/Client data-dependence analysis.}
The standard algorithmic formulation of context-sensitive data-dependence analysis is via Dyck reachability, 
where the parenthesis are used to properly match method calls and returns in a context-sensitive way~\cite{Reps00,Tang15}.
The algorithmic approach to Library/Client Dyck reachability consists of considering two graphs $G_1$ and $G_2$, for the library and 
client code respectively.
The computation is split into two phases.
In the \emph{preprocessing phase}, the Dyck reachability problem is solved on $G_1$ (using a CFL/Dyck reachability algorithm), 
and some summary information is maintained, 
which is typically in the form of some subgraph $G'_1$ of $G_1$.
In the \emph{query phase}, the Dyck reachability is solved on the combination of the two graphs $G'_1$ and $G_2$.
Let $n_1$, $n_2$ and $n'_1$ be the sizes of $G_1$, $G_2$ and $G'_1$ respectively.
The algorithm spends $O(n_1^3)$ time in the preprocessing phase, and $O((n'_1+n_2)^3)$ time in the query phase.
Hence we have an improvement if $n'_1>>n_1$.

In the presence of callbacks, library summarization via CFL reachability is ineffective, as $n'_1$ can be as large as $n_1$.
To face this challenge, the recent work of~\cite{Tang15} introduced TAL reachability. 
This approach spends $O(n_1^6)$ time on the client code (hence more than the CFL reachability algorithm),
and is able to produce a summary of size $s<n_1$ even in the presence of callbacks.
Afterwards, the client analysis is performed in $O((s+n_2)^6)$ time, and hence the cost due to the library only appears in terms of its summary.

\item \emph{Dyck reachability on general graphs.}
As we have already mentioned, Dyck reachability is a fundamental algorithmic formulation of many types of static analysis.
For general graphs (not necessarily bidirected), the existing algorithms require $O(n^3)$ time, and they essentially 
solve the more general CFL reachability problem~\cite{Yannakakis90}.  
The current best algorithm is due to~\cite{Chaudhuri08}, which utilizes the well-knwon Four Russians' Trick to exhibit 
complexity $O(n^3/\log n)$.
The problem has been shown to be 2NPDA-hard~\cite{Heintze97}, which yields a conditional cubic lower bound in its complexity.
\end{compactenum}

\noindent{\bf Our contributions.} Our main contributions can be characterized in three parts:
(a)~improved upper bounds; (b)~lower bounds with optimality guarantees; and (c)~experimental results.
We present the details of each of them below.

\noindent{\em Improved upper bounds.} Our improved upper bounds are as follows:

\begin{compactenum}
\item For Dyck reachability on bidirected graphs with $n$ nodes and $m$ edges, we present an algorithm with the following bounds:
(a)~The worst-case complexity bound is $O(m + n\cdot \alpha(n))$ time and $O(m)$ space, 
where $\alpha(n)$ is the inverse Ackermann function, improving the previously known $O(n^2)$ time bound. 
Note that $\alpha(n)$ is an extremely slowly growing function, and for all practical purposes, $\alpha(n) \leq 4$, 
and hence practically the worst-case bound of our algorithm is linear.
(b)~The average-case complexity is $O(m)$ improving the previously known $O(m \cdot \log n)$ bound.
See Table~\ref{tab:bidirected_comparison} for a summary.

\item For \emph{Library/Client Dyck reachability} we exploit the fact that the data-dependence graphs that arise in practice have 
special structure, namely they contain components of small treewidth.
We denote by $n_1$ and $n_2$ the size of the library graph and client graph, and  by $k_1$ and $k_2$ the number of call sites in the library graph and client graph, 
respectively.
We present an algorithm that analyzes the library graph in  $O(n_1+ k_1\cdot \log n_1)$ time and $O(n_1)$ space.
Afterwards, the library and client graphs are analyzed together only in $O(n_2 + k_1\cdot \log n_1 + k_2\cdot \log n_2)$ time and $O(n_1+n_2)$ space.
Hence, since typically $n_1>> n_2$ and $n_i>>k_i$, the cost of analyzing the large library occurs only in the preprocessing phase.
When the client code needs to be analyzed, the cost incurred due to the library code is small.
See Table~\ref{tab:library_comparison} for a summary.

\end{compactenum}

\noindent{\em Lower bounds and optimality guarantees.} 
Along with improved upper bounds we present lower bound and conditional lower bound results
that imply optimality guarantees. 
Note that optimal guarantees for graph algorithms are extremely rare, and we show the algorithms 
we present have certain optimality guarantees.

\begin{compactenum}
\item For Dyck reachability on bidirected graphs we present a matching lower bound of $\Omega(m + n \cdot \alpha(n))$ 
for the worst-case time complexity.
Thus we obtain matching lower and upper bounds for the worst-case complexity, and thus our algorithm is 
optimal wrt to worst-case complexity.
Since the average-case complexity of our algorithm is linear, the algorithm is also optimal wrt the average-case complexity.

\item For \emph{Library/Client Dyck reachability} note that $k_1\leq n_1$ and $k_2\leq n_2$. 
Hence our algorithm for analyzing library and client code is almost linear time, and hence optimal wrt 
polynomial improvements.

\item For Dyck reachability on general graphs we present a conditional lower bound. 
In algorithmic study, a standard problem for showing conditional cubic lower bounds
is Boolean Matrix Multiplication (BMM)~\cite{Lee02,HKNS15,WilliamsW10,AbboudW14}.
While fast matrix multiplication algorithms exist (such as Strassen's algorithm~\cite{Strassen69}), 
these algorithms are not ``combinatorial''\footnote{Not combinatorial means algebraic methods~\cite{LeGall14}, 
which are algorithms with large constants. In contrast, combinatorial algorithms are discrete and non-algebraic; for detailed discussion see~\cite{HKNS15}}. 
The standard conjecture (called the BMM conjecture) is that there is no truly sub-cubic\footnote{Truly sub-cubic means polynomial
improvement, in contrast to improvement by logarithmic factors such as $O(n^3/\log n)$} 
combinatorial algorithm  for BMM, which has been widely used in algorithmic studies for obtaining various types of hardness results~\cite{Lee02,HKNS15,WilliamsW10,AbboudW14}.
We show that Dyck reachability on general graphs even for a single pair is BMM hard. More precisely, 
we show that for any $\delta>0$, any algorithm that solves pair Dyck reachability on general graphs
in $O(n^{3-\delta})$ time implies an algorithm that solves BMM in $O(n^{3-\delta/3})$ time.
Since all algorithms for Dyck reachability are combinatorial, it establishes a 
conditional hardness result (under the BMM conjecture) for general Dyck reachability. 
Additionally, we show that the same hardness result holds for Dyck reachability on graphs of constant treewidth.
Our hardness shows that the existing cubic algorithms are optimal (modulo logarithmic-factor improvements), 
under the BMM conjecture.
Existing work establishes that Dyck reachability is 2NPDA-hard~\cite{Heintze97}, which yields a a conditional lower bound.
Our result shows that Dyck reachability is also BMM-hard, and even on constant-treewidth graphs, and thus strengthens the conditional cubic lower bound for the problem.
\end{compactenum}

\begin{table}
\centering
\renewcommand{\arraystretch}{1.5}
\caption{Comparison of our results with existing work for Dyck reachability on bidirected graphs with $n$ nodes and $m$ edges.
We also prove a matching lower-bound for the worst-case analysis. 
}
\label{tab:bidirected_comparison}
\small
\begin{tabular}{|c||c|c|c|c|c|}
\hline
& \textbf{Worst-case Time} & \textbf{Average-case Time} & \textbf{Space} & \textbf{Reference}\\
\hline
\hline
Existing & $O\left(n^2\right)$ & $O\left(\min\left(n^2, m\cdot \log n\right)\right)$ & $O(m)$ & \cite{Zhang13}\\
\hline
Our Result & $O(m+n\cdot \alpha(n))$ & $O(m)$ & $O(m)$ & Theorem~\ref{them:bidirected_upper} , Corollary~\ref{cor:bidirected_expected_linear} \\
\hline
\end{tabular}
\end{table}

\begin{table}
\centering
\small
\renewcommand{\arraystretch}{1.5}
\setlength\tabcolsep{2.5pt}
\caption{
Library/Client CFL reachability on the library graph of size $n_1$ and the client graph of size $n_2$.\\
{
$s$ is the number of library summary nodes, as defined in~\cite{Tang15}.\\
$k_1$ is the number of call sites in the library code, with  $k_1<s$.\\
$k_2$ is the number of call sites in the client code.
}
}
\label{tab:library_comparison}
\begin{tabular}{|c||c|c|c|c|c|}
\hline
\textbf{Approach} & \multicolumn{2}{c|}{\textbf{Time}} & \multicolumn{2}{c|}{\textbf{Space}} & Reference\\
\hline
& Library & Client & Library & Client & \\
\hline
\hline
CFL & $O\left(n_1^3\right)$ & $O\left((n_1+n_2)^3\right)$ & $O\left(n_1^2\right)$ & $O\left((n_1+n_2)^2\right )$ & \cite{Tang15} \\
\hline
TAL & $O\left(n_1^6\right)$ & $O\left((s+n_2)^6\right)$ & $O\left(n_1^4\right)$ & $O\left((s+n_2)^4\right )$ & \cite{Tang15} \\
\hline
Our Result & $O\left(n_1 + k_1\cdot \log n_1\right)$ & $O\left(n_2+k_1\cdot \log n_1 + k_2\cdot \log n_2\right)$ & $O(n_1)$ & $O(n_1+n_2)$ & Theorem~\ref{them:library}\\
\hline
\end{tabular}
\end{table}

\noindent{\em Experimental results.}
A key feature of our algorithms are that they are simple to implement. 
We present experimental results both on alias analysis (see Section~\ref{subsec:experiments_bidirected})
and library/client data-dependence analysis (see Section~\ref{subsec:experiments_library})
and show that our algorithms outperform previous approaches for the problems 
on real-world benchmarks.

Due to lack of space, full proofs can be found in full version of this paper~\cite{istreport}.

\subsection{Other Related Work}
We have already discussed in details the most relevant works 
related to language reachability, alias analysis and data-dependence analysis.
We briefly discuss works related to treewidth in program analysis and 
verification.

\noindent{\bf Treewidth in algorithms and program analysis.}
In the context of programming languages, it was shown by~\cite{Thorup98}
that the control-flow graphs for goto-free programs for many programming 
languages have constant treewidth, which has been followed by practical 
approaches as well (such as~\cite{Gustedt02}).
The treewidth property has received a lot of attention in algorithm community,
for NP-complete problems~\cite{Arnborg89,Bern1987216,Bodlaender88}, 
combinatorial optimization problems~\cite{Bertele72}, 
graph problems such as shortest path~\cite{Chaudhuri95,CIP16}.
In algorithmic analysis of programming languages and verification the treewidth
property has been exploited in interprocedural analysis~\cite{CIP15},
concurrent intraprocedural analysis~\cite{CGIP16}, quantitative verification of 
finite-state graphs~\cite{CIP15b}, etc. 
To the best of our knowledge the constant-treewidth property has not be considered
for data-dependence analysis. 
Our experimental results show that in practice many real-world benchmarks 
have the constant-treewidth property, and our algorithms for data-dependence analysis 
exploit this property to present faster algorithms.

\section{Preliminaries}\label{sec:preliminaries}

\noindent{\bf Graphs and paths.}
We denote by $G=(V,E)$ a finite directed graph (henceforth called simply a graph) where $V$ is a set of $n$ nodes and
$E\subseteq V\times V$ is an edge relation of $m$ edges.
Given a set of nodes $X\subseteq V$, we denote by $G\restr{X}=(X, E\cap (X\times X))$
the subgraph of $G$ induced by $X$.
A \emph{path} $P$ is a sequence of edges $(e_1,\dots, e_r)$ and each $e_i=(x_i,y_i)$ is such that $x_1=u$, $y_r=v$, and for all $1 \leq i \leq r-1$ we have $y_i=x_{i+1}$. 
The length of $P$ is $|P|=r$.
A path $P$ is \emph{simple} if no node repeats in $P$ 
(i.e., the path does not contain a cycle). 
Given two paths $P_1=(e_1, \dots e_{r_1})$ and $P_2=(e'_1, \dots e'_{\ell})$ with $e_r=(x,y)$ and $e'_1=(y,z)$,
we denote by $P_1\circ P_2$ the \emph{concatenation} of $P_2$ on $P_1$.
We use the notation $x\in P$ to denote that a node $x$ appears in $P$,
and $e\in P$ to denote that an edge $e$ appears in $P$.
Given a set $\Bag\subseteq V$, we denote by $P\cap \Bag$ the set of nodes of $\Bag$ that appear in $P$.
We say that a node $u$ is \emph{reachable} from node $v$ if there exists a path $P:u\Path v$.

\noindent{\bf Dyck Languages.}
Given a nonnegative integer $k\in \Nats$, we denote by $\Alphabet_k=\{\epsilon\}\cup \{\OpenParenthesis_i, \CloseParenthesis_i\}_{i=1}^k$
a finite \emph{alphabet} of $k$ parenthesis types, together with a null element $\epsilon$.
We denote by $\Dyck_k$ the Dyck language over $\Alphabet_k$,
defined as the language of strings generated by the following context-free grammar $\Grammar_k$:
\[
\StartNonTerminal \to \StartNonTerminal ~ \StartNonTerminal ~ | ~ \OpenNonTerminal_1 ~ \CloseNonTerminal_1 ~ | ~ \dots ~ | ~ \OpenNonTerminal_k ~ \CloseNonTerminal_k ~ | ~\epsilon \ ;
\qquad 
\OpenNonTerminal_i \to \OpenParenthesis_i ~ \StartNonTerminal \ ;\qquad
\CloseNonTerminal_i \to \StartNonTerminal ~ \CloseParenthesis_i
\]
Given a string $s$ and a non-terminal symbol $X$ of the above grammar, we write $X\Produces s$
to denote that $X$ produces $s$ according to the rules of the grammar.
In the rest of the document we consider an alphabet $\Alphabet_k$ and the corresponding Dyck language $\Dyck_k$.
We also let $\SetOpenParenthesis_k= \{\OpenParenthesis_i\}_{i=1}^k$ and  $\SetClosedParenthesis_k= \{\CloseParenthesis_i\}_{i=1}^k$ 
be the subsets of $\Alphabet_k$ of only opening and closing parenthesis, respectively.

\noindent{\bf Labeled graphs, Dyck reachability, and Dyck SCCs (DSCCs).}
We denote by $G=(V,E)$ a $\Alphabet_k$-labeled directed graph where $V$ is the set of nodes 
and $E\subseteq V\times V \times \Alphabet_k$ is the set of edges labeled with symbols from $\ParSet_k$.
Hence, an edge $e$ is of the form $e=(u,v,\Label)$ where $u,v\in V$ and $\Label\in \ParSet_k$.
We require that for every $u,v\in V$, there is a unique label $\Label$ such that $(u,v,\Label)\in E$.
Often we will be interested only on the endpoints of an edge $e$, in which case we represent $e=(u,v)$,
and will denote by $\Label(e)$ the label of $e$.
Given a path $P$, we define the \emph{label} of $P$ as $\Label(P)=\Label(e_1)\dots \Label(e_r)$.
Given two nodes $u,v$, we say that $v$ is Dyck-reachable from $u$ if there exists a path $P:u\Path v$ such that
$\Label(P)\in \Dyck_k$. In that case, $P$ is called a \emph{witness path} of the reachability.
A set of nodes $X\subseteq V$ is called a \emph{Dyck SCC} (or \emph{DSCC}) if for every pair of nodes
$u,v\in X$, we have that $u$ reaches $v$ and $v$ reaches $u$.
Note that there might exist a DSCC $X$ and a pair of nodes $u,v\in X$ such that
every witness path $P:u\Path v$ might be such that $P\not \subseteq X$, i.e.,
the witness path contains nodes outside the DSCC.

\section{Dyck Reachability on Bidirected Graphs}\label{sec:bidirected}

In this section we present an optimal algorithm for solving the Dyck 
reachability problem on $\Alphabet_k$-labeled \emph{bidirected} graphs $G$.
First, in Section~\ref{subsec:bidirected_problem}, we formally define 
the problem.
Second, in Section~\ref{subsec:bidirected_upper}, we describe an algorithm 
$\BidirectedAlgo$ that solves the problem in time $O(m+n\cdot \alpha(n))$,
where $n$ is the number of nodes of $G$, $m$ is the number of edges of $G$, 
and $\alpha(n)$ is the inverse Ackermann function.
Finally, in Section~\ref{subsec:bidirected_lower}, we present an 
$\Omega(m+n\cdot \alpha(n))$ lower bound.

\subsection{Problem Definition}\label{subsec:bidirected_problem}

We start with the problem definition of Dyck reachability on bidirected graphs.
For the modeling power of bidirected graphs we refer to~\cite{Yuan09,Zhang13}
and our Experimental Section~\ref{subsec:experiments_bidirected}.

\noindent{\bf Bidirected Graphs.}
A $\Alphabet_k$ labeled graph $G=(V,E)$ is called \emph{bidirected} if for 
every pair of nodes $u,v\in V$, the following conditions hold.
(1)~$
(u,v,\epsilon)\in E \quad \text{iff} \quad (v,u,\epsilon)\in E
$; 
and (2)~ for all $1\leq i\leq k$ we have that
$
(u,v,\OpenParenthesis_i)\in E \quad \text{iff} \quad (v,u,\CloseParenthesis_i)\in E
$.
Informally, the edge relation is symmetric, and the labels of symmetric edges 
are complimentary wrt to opening and closing parenthesis.
The following remark captures a key property of bidirected graphs that can be exploited to lead to faster algorithms.

\begin{remark}[\cite{Zhang13}]\label{rem:bidirected_equivalence}
For bidirected graphs the Dyck reachability relation forms an equivalence, i.e., 
for all bidirected graphs $G$, for every pair of nodes $u,v$,
we have that $v$ is Dyck-reachable from $u$ iff $u$ is Dyck-reachable from $v$.
\end{remark}

\begin{remark}\label{rem:bidirected_labels}
We consider without loss of generality that a bidirected graph $G$ 
has no edge $(u,v)$ such that $ \Label(u,v)=\epsilon$, i.e., there are no $\epsilon$-labeled edges.
This is because in such a case, $u,v$ form a DSCC, and can be merged into a single node.
Merging all nodes that share an $\epsilon$-labeled edge requires only linear time,
and hence can be applied as a preprocessing step at (asymptotically) no extra cost.
\end{remark}

\noindent{\bf Dyck reachability on bidirected graphs.}
We are given a $\Alphabet_k$-labeled bidirected graph $G=(V,E)$,
and our task is to compute for every pair of nodes $u,v$ whether $v$ is Dyck-reachable from $u$.
As customary, we consider that $k=O(1)$, i.e., $k$ is fixed wrt to the input graph~\cite{Chaudhuri08}.
In view of Remark~\ref{rem:bidirected_equivalence}, it suffices that the output is a list of DSCCs.
Note that the output has size $\Theta(n)$ instead of $\Theta(n^2)$ that would be required for storing one bit of information per $u,v$ pair. Additionally, the pair query time is $O(1)$, by testing whether the two nodes belong to the same DSCC.

\subsection{An Almost Linear-time Algorithm}\label{subsec:bidirected_upper}
We present our algorithm $\BidirectedAlgo$, for Dyck reachability on bidirected graphs, 
with almost linear-time complexity.

\noindent{\bf Informal description of $\BidirectedAlgo$.}
We start by providing a high-level description of $\BidirectedAlgo$.
The main idea is that for any two distinct nodes $u, v$ to belong to some DSCC $X$,
there must exist two (not necessarily distinct) nodes $x,y$ that belong to some DSCC $Y$ (possibly $X=Y$)\footnote{That is, $x$ and $y$ might refer to the same node, and $X$ and $Y$ to the same DSCC.}
and a closing parenthesis $\CloseParenthesis_i\in \SetClosedParenthesis_k$ such that $(x,u,\CloseParenthesis_i), (y,v,\CloseParenthesis_i)\in E$.
See Figure~\ref{fig:bidirected_principle} for an illustration.
The algorithm uses a Disjoint Sets data structure to maintain DSCCs discovered so far.
Each DSCC is represented as a tree $T$ rooted on some node $x\in V$,
and $x$ is the only node of $T$ that has outgoing edges.
However, any node of $T$ can have incoming edges.
See Figure~\ref{fig:bidirected_state} for an illustration.
Upon discovering that a root node $x$ of some tree $T$ has two or more outgoing edges
$(x,u_1,\CloseParenthesis_i), (x,u_2,\CloseParenthesis_i), \dots (x,u_r,\CloseParenthesis_i) $, for some $\CloseParenthesis_i\in \SetClosedParenthesis_k$,
the algorithm uses $r$ $\Find$ operations of the Disjoint Sets data structure to determine
the trees $T_i$ that the nodes $u_i$ belong to.
Afterwards, a $\Union$ operation is performed between all $T_i$ to form a new tree $T$,
and all the outgoing edges of the root of each $T_i$ are merged to the outgoing edges of the root of $T$.

\begin{figure}[!h]
\begin{subfigure}[b]{.25\textwidth}
\centering
\begin{tikzpicture}[thick, >=latex,
pre/.style={<-,shorten >= 1pt, shorten <=1pt, thick},
post/.style={->,shorten >= 1pt, shorten <=1pt,  thick},
und/.style={very thick, draw=gray},
node1/.style={circle, minimum size=3.5mm, draw=black!100, line width=1pt, inner sep=0},
node2/.style={circle, minimum size=3.5mm, draw=black!100, fill=white!100, very thick, inner sep=0},
virt/.style={circle,draw=black!50,fill=black!20, opacity=0}]

\newcommand{\xdisposition}{0}
\newcommand{\ydisposition}{0}
\newcommand{\xstep}{1}
\newcommand{\ystep}{0.7}
\def\bend{20}

\node	[node2]		(x)	at	(\xdisposition + 0*\xstep, \ydisposition + 0*\ystep)	{$x$};
\node	[node2]		(u)	at	(\xdisposition + -1*\xstep, \ydisposition + 0*\ystep)	{$u$};
\node	[node2]		(v)	at	(\xdisposition + 1*\xstep, \ydisposition + 0*\ystep)	{$v$};
\node	[node2]		(z)	at	(\xdisposition - 2*\xstep, \ydisposition + 0*\ystep)	{$z$};

\draw[->,  very thick, bend right=\bend] (x) to node[above]{$\CloseParenthesis$} (u);
\draw[->,  very thick, bend right=\bend] (u) to node[below]{$\OpenParenthesis$} (x);
\draw[->,  very thick, bend right=\bend] (x) to node[below]{$\CloseParenthesis$} (v);
\draw[->,  very thick, bend right=\bend] (v) to node[above]{$\OpenParenthesis$} (x);
\draw[->,  very thick, bend right=\bend] (u) to node[above]{$\CloseParenthesis$} (z);
\draw[->,  very thick, bend right=\bend] (z) to node[below]{$\OpenParenthesis$} (u);

\draw[->, thick, loop above] (v) to node[above]{$\OpenParenthesis$,$\CloseParenthesis$}   (v);

\end{tikzpicture}
\caption{}
\label{subfig:bidirected_principle1}
\end{subfigure}
\qquad
\begin{subfigure}[b]{.25\textwidth}
\centering
\begin{tikzpicture}[thick, >=latex,
pre/.style={<-,shorten >= 1pt, shorten <=1pt, thick},
post/.style={->,shorten >= 1pt, shorten <=1pt,  thick},
und/.style={very thick, draw=gray},
node1/.style={circle, minimum size=3.5mm, draw=black!100, line width=1pt, inner sep=0},
node2/.style={circle, minimum size=3.5mm, draw=black!100, fill=white!100, very thick, inner sep=0},
virt/.style={circle,draw=black!50,fill=black!20, opacity=0}]

\newcommand{\xdisposition}{0}
\newcommand{\ydisposition}{0}
\newcommand{\xstep}{1}
\newcommand{\ystep}{0.7}
\def\bend{20}

\node	[node2]		(x)	at	(\xdisposition + 0*\xstep, \ydisposition + 0*\ystep)	{$x$};
\node	[node2]		(u)	at	(\xdisposition + -1*\xstep, \ydisposition + 0*\ystep)	{$u$};
\node	[node2]		(v)	at	(\xdisposition + 1*\xstep, \ydisposition + 0*\ystep)	{$v$};
\node	[node2]		(z)	at	(\xdisposition - 2*\xstep, \ydisposition + 0*\ystep)	{$z$};

\draw[->,  very thick, bend right=\bend] (x) to node[above]{$\CloseParenthesis$} (u);
\draw[->,  very thick, bend right=\bend] (u) to node[below]{$\OpenParenthesis$} (x);
\draw[->,  very thick, bend right=\bend] (x) to node[below]{$\CloseParenthesis$} (v);
\draw[->,  very thick, bend right=\bend] (v) to node[above]{$\OpenParenthesis$} (x);
\draw[->,  very thick, bend right=\bend] (u) to node[above]{$\CloseParenthesis$} (z);
\draw[->,  very thick, bend right=\bend] (z) to node[below]{$\OpenParenthesis$} (u);

\draw[->, thick, loop above] (v) to node[above]{$\OpenParenthesis$,$\CloseParenthesis$}   (v);

\draw [black, dashed] plot [smooth cycle] coordinates {(  -1*\xstep-0.2,  + 0*\ystep-0.3) ( -1*\xstep,  + 0*\ystep-0.3+1.3) ( -1*\xstep+2.4,  + 0*\ystep-0.3+1.3) ( -1*\xstep+2.2,  + 0*\ystep-0.3) (-1*\xstep+1.7,  + 0*\ystep-0.3) ( -1*\xstep+1.7,  + 0*\ystep-0.3+0.9) ( -1*\xstep+0.3,  + 0*\ystep-0.3+0.9) ( -1*\xstep+0.3,  + 0*\ystep-0.3) };

\end{tikzpicture}

\caption{}
\label{subfig:bidirected_principle2}
\end{subfigure}
\qquad
\begin{subfigure}[b]{.25\textwidth}
\centering
\begin{tikzpicture}[thick, >=latex,
pre/.style={<-,shorten >= 1pt, shorten <=1pt, thick},
post/.style={->,shorten >= 1pt, shorten <=1pt,  thick},
und/.style={very thick, draw=gray},
node1/.style={circle, minimum size=3.5mm, draw=black!100, line width=1pt, inner sep=0},
node2/.style={circle, minimum size=3.5mm, draw=black!100, fill=white!100, very thick, inner sep=0},
virt/.style={circle,draw=black!50,fill=black!20, opacity=0}]

\newcommand{\xdisposition}{0}
\newcommand{\ydisposition}{0}
\newcommand{\xstep}{1}
\newcommand{\ystep}{0.7}
\def\bend{20}

\node	[node2]		(x)	at	(\xdisposition + 0*\xstep, \ydisposition + 0*\ystep)	{$x$};
\node	[node2]		(u)	at	(\xdisposition + -1*\xstep, \ydisposition + 0*\ystep)	{$u$};
\node	[node2]		(v)	at	(\xdisposition + 1*\xstep, \ydisposition + 0*\ystep)	{$v$};
\node	[node2]		(z)	at	(\xdisposition - 2*\xstep, \ydisposition + 0*\ystep)	{$z$};

\draw[->,  very thick, bend right=\bend] (x) to node[above]{$\CloseParenthesis$} (u);
\draw[->,  very thick, bend right=\bend] (u) to node[below]{$\OpenParenthesis$} (x);
\draw[->,  very thick, bend right=\bend] (x) to node[below]{$\CloseParenthesis$} (v);
\draw[->,  very thick, bend right=\bend] (v) to node[above]{$\OpenParenthesis$} (x);
\draw[->,  very thick, bend right=\bend] (u) to node[above]{$\CloseParenthesis$} (z);
\draw[->,  very thick, bend right=\bend] (z) to node[below]{$\OpenParenthesis$} (u);

\draw[->, thick, loop above] (v) to node[above]{$\OpenParenthesis$,$\CloseParenthesis$}   (v);

\draw [black, dashed] plot [smooth cycle] coordinates {(  -2*\xstep-0.2,  + 0*\ystep-0.4) ( -2*\xstep,  + 0*\ystep-0.3+1.3) ( -1*\xstep+2.4,  + 0*\ystep-0.3+1.3) ( -1*\xstep+2.2,  + 0*\ystep-0.3) (-1*\xstep+1.7,  + 0*\ystep-0.3) ( -1*\xstep+1.7,  + 0*\ystep-0.3+0.9) ( -1*\xstep+0.3,  + 0*\ystep-0.3+0.9) ( -1*\xstep+0.3,  + 0*\ystep-0.4) };

\end{tikzpicture}

\caption{}
\label{subfig:bidirected_principle3}
\end{subfigure}

\caption{Illustration of the merging principle of $\BidirectedAlgo$.
\textbf{(\ref{subfig:bidirected_principle1})} The nodes $u$ and $v$ are in the same DSCC since node $x$ has an outgoing edge to each of $u$ and $v$ labeled with the closing parenthesis $\CloseParenthesis$.
\textbf{(\ref{subfig:bidirected_principle2})} Similarly, nodes $z$ and $v$ belong to the same DSCC, since there exist two nodes $u$ and $v$ such that (i)~ $u$ and $v$ belong to the same DSCC, (ii)~$u$ has an outgoing edge to $z$, and $v$ has an outgoing edge to itself, and (iii)~both outgoing edges are labeled with the same closing parenthesis symbol.
\textbf{(\ref{subfig:bidirected_principle3})} The final DSCC formation.
}
\label{fig:bidirected_principle}
\end{figure}
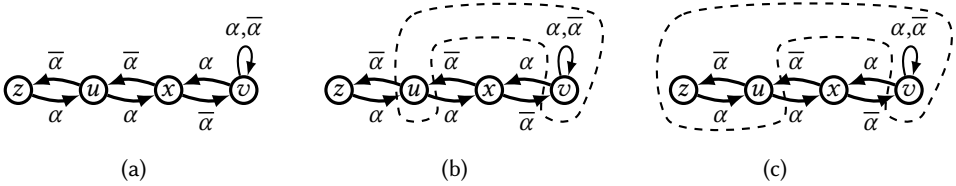

\begin{figure}[!h]
\centering
\begin{tikzpicture}[thick, >=latex,
pre/.style={<-,shorten >= 1pt, shorten <=1pt, thick},
post/.style={->,shorten >= 1pt, shorten <=1pt,  thick},
und/.style={very thick, draw=gray},
node1/.style={circle, minimum size=3.5mm, draw=black!100, line width=1pt, inner sep=0},
node2/.style={circle, minimum size=3.5mm, draw=black!100, fill=white!100, very thick, inner sep=0},
virt/.style={circle,draw=black!50,fill=black!20, opacity=0}]

\newcommand{\xdisposition}{0}
\newcommand{\ydisposition}{0}
\newcommand{\xstep}{0.45}
\newcommand{\ystep}{0.7}

\node	[node2]		(1)	at	(\xdisposition + 0*\xstep, \ydisposition + 0*\ystep)	{$s$};
\node	[node2]		(2)	at	(\xdisposition + -2*\xstep, \ydisposition + -1*\ystep)	{};
\node	[node2]		(3)	at	(\xdisposition + -0*\xstep, \ydisposition + -1*\ystep)	{};
\node	[node2]		(4)	at	(\xdisposition + 2*\xstep, \ydisposition + -1*\ystep)	{};
\node	[node2]		(5)	at	(\xdisposition + -2*\xstep, \ydisposition + -2*\ystep)	{};
\node	[node2]		(6)	at	(\xdisposition + -1.5*\xstep, \ydisposition + -3*\ystep)	{$t$};
\node	[node2]		(7)	at	(\xdisposition + -2.5*\xstep, \ydisposition + -3*\ystep)	{$u$};
\node	[node2]		(8)	at	(\xdisposition + -0.5*\xstep, \ydisposition + -2*\ystep)	{};
\node	[node2]		(9)	at	(\xdisposition + 0.5*\xstep, \ydisposition + -2*\ystep)	{};

\draw[-,  very thick] (1) to (2);
\draw[-,  very thick] (1) to (3);
\draw[-,  very thick] (1) to (4);
\draw[-,  very thick] (2) to (5);
\draw[-,  very thick] (5) to (6);
\draw[-, very  thick] (5) to (7);
\draw[-,  very thick] (3) to (8);
\draw[-,  very thick] (3) to (9);

\renewcommand{\xdisposition}{4}

\node	[node2]		(19)	at	(\xdisposition + 0*\xstep, \ydisposition + 0*\ystep)	{$v$};
\node	[node2]		(20)	at	(\xdisposition + 0*\xstep, \ydisposition + -1*\ystep)	{};
\node	[node2]		(22)	at	(\xdisposition + -1.5*\xstep, \ydisposition + -1*\ystep)	{$w$};
\node	[node2]		(23)	at	(\xdisposition + 1.5*\xstep, \ydisposition + -1*\ystep)	{$x$};
\node	[node2]		(25)	at	(\xdisposition + -2*\xstep, \ydisposition + -2*\ystep)	{};
\node	[node2]		(26)	at	(\xdisposition + -1*\xstep, \ydisposition + -2*\ystep)	{};
\node	[node2]		(27)	at	(\xdisposition + 1.5*\xstep, \ydisposition + -2*\ystep)	{$y$};

\draw[-,  very thick] (19) to (20);
\draw[-,  very thick] (19) to (22);
\draw[-,  very thick] (19) to (23);
\draw[-,  very thick] (22) to (25);
\draw[-,  very thick] (22) to (26);
\draw[-,  very thick] (23) to (27);

\renewcommand{\xdisposition}{7.5}

\node	[node2]		(28)	at	(\xdisposition + 0*\xstep, \ydisposition + 0*\ystep)	{$z$};
\node	[node2]		(29)	at	(\xdisposition + 0*\xstep, \ydisposition + -1*\ystep)	{};
\node	[node2]		(30)	at	(\xdisposition + -0.5*\xstep, \ydisposition + -2*\ystep)	{};
\node	[node2]		(31)	at	(\xdisposition + 1.5*\xstep, \ydisposition + -1*\ystep)	{};
\node	[node2]		(32)	at	(\xdisposition + 0.5*\xstep, \ydisposition + -2*\ystep)	{};
\node	[node2]		(33)	at	(\xdisposition - 1.5*\xstep, \ydisposition + -1*\ystep)	{};
\node	[node2]		(34)	at	(\xdisposition - 1.5*\xstep, \ydisposition + -2*\ystep)	{};

\draw[-,  very thick] (28) to (29);
\draw[-,  very thick] (28) to (31);
\draw[-,  very thick] (29) to (30);
\draw[-,  very thick] (29) to (32);
\draw[-,  very thick] (28) to (33);
\draw[-,  very thick] (33) to (34);

\draw[->, thick, out=180, in=90] (1) -- (-2.2*\xstep, + -0.3*\ystep) to node[left]{$\CloseParenthesis_1$}   (7);
\draw[->, thick, out=0, in=90] (1) to node[above]{$\CloseParenthesis_2$} (22);
\draw[->, thick, out=0, in=150] (1) to node[above]{$\CloseParenthesis_1$}  (28);

\draw[->, thick, out=160, in=340] (19) to  node[below, pos=0.5, label={[yshift=-20]$\CloseParenthesis_2$}]{} (6);

\draw[->, thick, out=180, in=50] (28) to node[above, pos=0.6] {$\CloseParenthesis_3$}  (23);
\draw[->, thick, out=180, in=50] (28) to node[below, pos=0.6] {$\CloseParenthesis_3$} (27);
\draw[->, thick, loop above] (28) to node[above]{$\CloseParenthesis_2$}   (28);

\end{tikzpicture}
\caption{A state of $\BidirectedAlgo$ consists of a set of trees, with outgoing edges coming only from the root of each tree.}\label{fig:bidirected_state}
\end{figure}
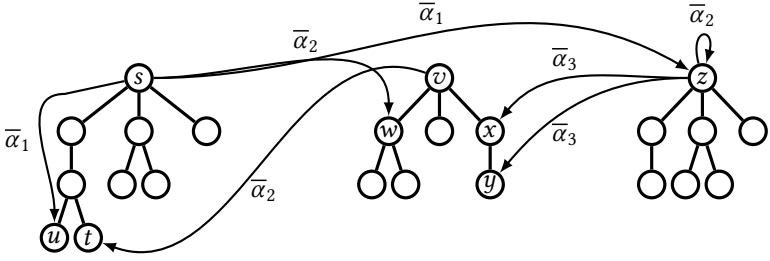

\noindent{\bf Complexity overview.}
The cost of every $\Find$ and $\Union$ operation is bounded by the inverse Ackermann function $\alpha(n)$ (see~\cite{Tarjan75}), which, for all practical purposes, can be considered constant.
Additionally, every edge-merge operation requires constant time, using a linked list for storing the outgoing edges.
Although list merging in constant time creates the possibility of duplicate edges,
such duplicates come at no additional complexity cost.
Since every $\Union$ of $k$ trees reduces the number of existing edges by $k-1$, 
the overall complexity of $\BidirectedAlgo$ is $O(m\cdot \alpha(n))$.
We later show how to obtain the $O(m + n \cdot \alpha(n))$ complexity.

We are now ready to give the formal description of $\BidirectedAlgo$.
We start with introducing the Union-Find problem, and its solution given by a disjoint sets data structure.

\noindent{\bf The Union-Find problem.}
The Union-Find problem is a well-studied problem in the area of algorithms and data structures~\cite{Galil91,Cormen01}.
The problem is defined over a \emph{universe} $X$ of  $n$ elements, and the task is to maintain partitions of $X$ under set union operations.
Initially, every element $x\in X$ belongs to a singleton set $\{x\}$.
A \emph{union-find} sequence $\sigma$ is a sequence of $m$ (typically $m\geq n$) operations of the following two types.
\begin{compactenum}
\item $\Union(x,y)$, for $x,y\in X$, performs a union of the sets that $x$ and $y$ belong to.
\item $\Find(x)$, for $x\in X$, returns the name of the unique set containing $x$.
\end{compactenum}
The sequence $\sigma$ is presented online, i.e., an operation needs to be completed before the next one is revealed.
Additionally, a $\Union(x,y)$ operation is allowed in the $i$-th position of $\sigma$ only if the prefix of $\sigma$ up to position $i-1$
places $x$ and $y$ on different sets.
The output of the problem consists of the answers to $\Find$ operations of $\sigma$.
It is known that the problem can be solved in $O(m\cdot \alpha(n))$ time, by an appropriate Disjoint Sets data structure~\cite{Tarjan75},
and that this complexity is optimal~\cite{Tarjan79,Banachowski80}.

\noindent{\bf The $\DisjointSets$ data structure.}
We consider at our disposal a Disjoint Sets data structure $\DisjointSets$ which maintains a set of subsets of $V$
under a sequence of set union operations.
At all times, the name of each set $X$ is a node $x\in X$ which is considered to be the representative of $x$.
$\DisjointSets$ provides the following operations.
\begin{compactenum}
\item For a node $u$, $\MakeSet(u)$ constructs the singleton set $\{u\}$.
\item For a node $u$, $\Find(u)$ returns the representative of the set that $u$ belongs to.
\item For a set of nodes $S\subseteq V$ which are pairwise in different sets, and a distinguished node $x\in S$,
 $\Union(S,x)$ performs the union of the sets that the nodes in $S$ belong to, and makes $x$ the representative of the new set.
\end{compactenum}
The $\DisjointSets$ data structure can be straightforwardly obtained from the corresponding Disjoint Sets data structures used to solve the Union-Find problem~\cite{Tarjan75}, and has $O(\alpha(n))$ amortized complexity per operation.
Typically each set is stored as a rooted tree, and the root node is the representative of the set.

\noindent{\bf Formal description of $\BidirectedAlgo$.}
We are now ready to present formally $\BidirectedAlgo$ in Algorithm~\ref{algo:bidirectedalgo}.
Recall that, in view of Remark~\ref{rem:bidirected_labels}, we consider that the input graph has no $\epsilon$-labeled edges.
In the initialization phase, the algorithm constructs a map $\Edges:~V\times \SetClosedParenthesis_k\to V^*$.
For each node $u\in V$ and closing parenthesis $\CloseParenthesis_i\in \SetClosedParenthesis_k$, 
$\Edges[u][\CloseParenthesis_i]$ will store the nodes that are found to be reachable from $u$
via a path $P$ such that $\CloseNonTerminal_i \Produces \Label(P)$
(i.e., the label of $P$ has matching parenthesis except for the last parenthesis $\CloseParenthesis_i$).
Observe that all such nodes must belong to the same DSCC.

The main computation happens in the loop of Line~\ref{line:loop_outer}.
The algorithm maintains a queue $\Queue$ that acts as a worklist and stores pairs $(u, \CloseParenthesis_i)$
such that $u$ is a node that has been found to contain at least two outgoing edges labeled with $\CloseParenthesis_i$.
Upon extracting an element $(u, \CloseParenthesis_i)$ from the queue,
the algorithm obtains the representatives $v$ of the sets of the nodes in $\Edges[u][\CloseParenthesis_i]$.
Since all such nodes belong to the same DSCC, the algorithm chooses an element $x$ to be the new representative,
and performs a $\Union$ operation of the underlying sets.
The new representative $x$ gathers the outgoing edges of all other nodes $v\in \Edges[u][\CloseParenthesis_i]$,
and afterwards $\Edges[u][\CloseParenthesis_i]$ points only to $x$.

\begin{algorithm}
\small
\DontPrintSemicolon
\SetKwFunction{shorten}{$\Shorten$}
\caption{$\BidirectedAlgo$}\label{algo:bidirectedalgo}
\KwIn{A $\Alphabet_k$-labeled bidirected graph $G=(V,E)$}
\KwOut{A $\DisjointSets$ map of DSCCs}
\BlankLine
\tcp{Initialization}
$\Queue\gets $ an empty queue\label{line:init_begin}\\
$\Edges\gets$ a map $V\times \SetClosedParenthesis_k\to V^*$ implemented as a linked list\\
$\DisjointSets\gets$ a disjoint-sets data structure over $V$\\
\ForEach{$u\in V$}{
$\DisjointSets.\MakeSet(u)$\\
\For{$i \gets 1$ \textbf{to} $k$}{
$\Edges[u][\CloseParenthesis_i]\gets (v:~(u,v,\CloseParenthesis_i)\in E)$\\
\lIf{$|\Edges[u][\CloseParenthesis_i]|\geq 2$}{
Insert $(u, \CloseParenthesis_i)$ in $\Queue$\label{line:queue_insert1}
}
}
}\label{line:init_end}
\tcp{Computation}
\While{$\Queue$ is not empty}{\label{line:loop_outer}
Extract $(u, \CloseParenthesis_i)$ from $\Queue$\label{line:extract}\\
\uIf{$u=\DisjointSets.\Find(u)$}{\label{line:if_parent}
Let $S\gets \{\DisjointSets.\Find(w):w\in \Edges[u][\CloseParenthesis_i]\}$\label{line:construct_s}\\
\eIf{$|S|\geq 2$}{\label{line:if_two}
Let $x\gets $ some arbitrary element of $S\setminus\{u\}$\label{line:choose_x}\\
Make $\DisjointSets.\Union(S,x)$\label{line:update_scc}\\
\For{$j \gets 1$ \textbf{to} $k$}{\label{line:loop_append}
\ForEach{$v\in S\setminus\{x\}$}{\label{line:loop_inner}
\eIf{$u\neq v$ or $i\neq j$}{\label{line:if_move}
Move $\Edges[v][\CloseParenthesis_j]$ to $\Edges[x][\CloseParenthesis_j]$\label{line:append}\\
}
{
Append $(x)$ to $\Edges[x][\CloseParenthesis_j]$\label{line:append2}
}
}
\lIf{$|\Edges[x][\CloseParenthesis_j]|\geq 2$}{
Insert $(x, \CloseParenthesis_j)$ in $\Queue$\label{line:queue_insert2}
}
}
}
{
Let $x\gets$ the single node in $S$
}
\lIf{$u\not \in S$ or $|S|=1$}{\label{line:if_update}
$\Edges[u][\CloseParenthesis_i]\gets (x)$\label{line:new_edges}
}
}
}
\Return{$\DisjointSets$}
\end{algorithm}

\noindent{\bf Example.}
Consider the state of the algorithm given by Figure~\ref{fig:bidirected_state}, representing the DSCCs of the Union-Find data structure $\DisjointSets$ (i.e., the undirected trees in the figure) as well as the contents of the $\Edges$ data structure (i.e., the directed edges in the Figure).
There are currently 3 DSCCS, with representatives $s$, $v$ and $z$.
Recall that the queue $\Queue$ stores (node, closing parenthesis) pairs with the property that the node has at least two outgoing edges labeled with the respective closing parenthesis.
Observe that nodes $s$ and $z$ have at two outgoing edges each that have the same type of parenthesis,
hence they must have been inserted in the queue $\Queue$ at some point.
Assume that $\Queue=[(s, \CloseParenthesis_1), (z,\CloseParenthesis_3)]$.
The algorithm will exhibit the following sequence of steps.

\begin{compactenum}
\item The element $(z,\CloseParenthesis_3)$ is extracted from $\Queue$.
We have $\Edges[z][\CloseParenthesis_3]=(x,y)$.
Observe that $x$ and $y$ belong to the same DSCC rooted at $v$, hence in Line~\ref{line:construct_s} the algorithm will construct
$S=\{v\}$.
Since $|S|=1$, the algorithm will simply set $\Edges[z][\CloseParenthesis_3]=(v)$ in Line~\ref{line:if_update}, and no new DSCC has been formed.

\item The element $(s, \CloseParenthesis_1)$ is extracted from $\Queue$.
We have $\Edges[s][\CloseParenthesis_1]=(u,z)$.
Since $u$ and $z$ belong to different DSCCs, the algorithm will construct
$S=\{s, z\}$, and perform a $\DisjointSets.\Union(S,x)$ operation, where $x=z$.
Note that union-by-rank will make the tree of $z$ a subtree of the tree of $s$, i.e., $z$ will become a child of $s$.
Afterwards, the algorithm swaps the names of $z$ and $s$, as required by the choice of $x$ in Line~\ref{line:choose_x}.
Finally, in Line~\ref{line:append}, the algorithm will move $\Edges[s][\CloseParenthesis_i]$ to $\Edges[z][\CloseParenthesis_i]$ for $i=1,2$.
Since now $|\Edges[z][\CloseParenthesis_2]|\geq 2$, the algorithm inserts $(z, \CloseParenthesis_2)$ in $\Queue$.
See Figure~\ref{subfig:example1}.

\item The element $(z, \CloseParenthesis_2)$ is extracted from $\Queue$.
We have $\Edges[z][\CloseParenthesis_2]=(v,z)$.
Since $v$ and $z$ belong to different DSCCs, the algorithm will construct
$S=\{v, z\}$, and perform a $\DisjointSets.\Union(S,x)$ operation, where $x=v$.
Note that union-by-rank will make the tree of $v$ a subtree of the tree of $z$, i.e., $v$ will become a child of $z$.
Afterwards, the algorithm swaps the names of $v$ and $z$, as required by the choice of $x$ in Line~\ref{line:choose_x}.
Finally, in Line~\ref{line:append}, the algorithm will move $\Edges[z][\CloseParenthesis_2]$ to $\Edges[v][\CloseParenthesis_2]$.
Since now $|\Edges[v][\CloseParenthesis_2]|\geq 2$, the algorithm inserts $(v, \CloseParenthesis_2)$ in $\Queue$.
See Figure~\ref{subfig:example2}.

\item The element $(v, \CloseParenthesis_2)$ is extracted from $\Queue$.
We have $\Edges[v][\CloseParenthesis_2]=(v,t)$.
Observe that $v$ and $t$ belong to the same DSCC rooted at $v$, hence in Line~\ref{line:construct_s} the algorithm will construct
$S=\{v\}$.
Since $|S|=1$, the algorithm will simply set $\Edges[v][\CloseParenthesis_2]=(v)$ in Line~\ref{line:if_update}, and will terminate.
\end{compactenum}

\begin{figure}[!h]
\begin{subfigure}[b]{.48\textwidth}
\centering
\begin{tikzpicture}[thick, >=latex,
pre/.style={<-,shorten >= 1pt, shorten <=1pt, thick},
post/.style={->,shorten >= 1pt, shorten <=1pt,  thick},
und/.style={very thick, draw=gray},
node1/.style={circle, minimum size=3.5mm, draw=black!100, line width=1pt, inner sep=0},
node2/.style={circle, minimum size=3.5mm, draw=black!100, fill=white!100, very thick, inner sep=0},
virt/.style={circle,draw=black!50,fill=black!20, opacity=0}]

\newcommand{\xdisposition}{0}
\newcommand{\ydisposition}{0}
\newcommand{\xstep}{0.45}
\newcommand{\ystep}{0.7}

\node	[node2]		(1)	at	(\xdisposition + 0*\xstep, \ydisposition + 0*\ystep)	{$z$};
\node	[node2]		(2)	at	(\xdisposition + -2*\xstep, \ydisposition + -1*\ystep)	{};
\node	[node2]		(3)	at	(\xdisposition + -0*\xstep, \ydisposition + -1*\ystep)	{};
\node	[node2]		(4)	at	(\xdisposition + 2*\xstep, \ydisposition + -1*\ystep)	{};
\node	[node2]		(5)	at	(\xdisposition + -2*\xstep, \ydisposition + -2*\ystep)	{};
\node	[node2]		(6)	at	(\xdisposition + -1.5*\xstep, \ydisposition + -3*\ystep)	{$t$};
\node	[node2]		(7)	at	(\xdisposition + -2.5*\xstep, \ydisposition + -3*\ystep)	{$u$};
\node	[node2]		(8)	at	(\xdisposition + -0.5*\xstep, \ydisposition + -2*\ystep)	{};
\node	[node2]		(9)	at	(\xdisposition + 0.5*\xstep, \ydisposition + -2*\ystep)	{};

\draw[-,  very thick] (1) to (2);
\draw[-,  very thick] (1) to (3);
\draw[-,  very thick] (1) to (4);
\draw[-,  very thick] (2) to (5);
\draw[-,  very thick] (5) to (6);
\draw[-, very  thick] (5) to (7);
\draw[-,  very thick] (3) to (8);
\draw[-,  very thick] (3) to (9);

\renewcommand{\xdisposition}{3}

\node	[node2]		(19)	at	(\xdisposition + 0*\xstep, \ydisposition + 0*\ystep)	{$v$};
\node	[node2]		(20)	at	(\xdisposition + 0*\xstep, \ydisposition + -1*\ystep)	{};
\node	[node2]		(22)	at	(\xdisposition + -1.5*\xstep, \ydisposition + -1*\ystep)	{$w$};
\node	[node2]		(23)	at	(\xdisposition + 1.5*\xstep, \ydisposition + -1*\ystep)	{$x$};
\node	[node2]		(25)	at	(\xdisposition + -2*\xstep, \ydisposition + -2*\ystep)	{};
\node	[node2]		(26)	at	(\xdisposition + -1*\xstep, \ydisposition + -2*\ystep)	{};
\node	[node2]		(27)	at	(\xdisposition + 1.5*\xstep, \ydisposition + -2*\ystep)	{$y$};

\draw[-,  very thick] (19) to (20);
\draw[-,  very thick] (19) to (22);
\draw[-,  very thick] (19) to (23);
\draw[-,  very thick] (22) to (25);
\draw[-,  very thick] (22) to (26);
\draw[-,  very thick] (23) to (27);

\renewcommand{\xdisposition}{-2.2}
\renewcommand{\ydisposition}{-1*\ystep}

\node	[node2]		(28)	at	(\xdisposition + 0*\xstep, \ydisposition + 0*\ystep)	{$s$};
\node	[node2]		(29)	at	(\xdisposition + 0*\xstep, \ydisposition + -1*\ystep)	{};
\node	[node2]		(30)	at	(\xdisposition + -0.5*\xstep, \ydisposition + -2*\ystep)	{};
\node	[node2]		(31)	at	(\xdisposition + 1.5*\xstep, \ydisposition + -1*\ystep)	{};
\node	[node2]		(32)	at	(\xdisposition + 0.5*\xstep, \ydisposition + -2*\ystep)	{};
\node	[node2]		(33)	at	(\xdisposition - 1.5*\xstep, \ydisposition + -1*\ystep)	{};
\node	[node2]		(34)	at	(\xdisposition - 1.5*\xstep, \ydisposition + -2*\ystep)	{};

\draw[-,  very thick] (1) to (28);
\draw[-,  very thick] (28) to (29);
\draw[-,  very thick] (28) to (31);
\draw[-,  very thick] (29) to (30);
\draw[-,  very thick] (29) to (32);
\draw[-,  very thick] (28) to (33);
\draw[-,  very thick] (33) to (34);

\draw[->, thick, loop above] (1) to node[above]{$\CloseParenthesis_1$,$\CloseParenthesis_2$}   (1);
\draw[->, thick, out=0, in=90] (1) to node[above]{$\CloseParenthesis_2$} (22);

\draw[->, thick, out=160, in=340] (19) to  node[below, pos=0.5, label={[yshift=-20]$\CloseParenthesis_2$}]{} (6);

\draw[->, thick, out=20, in=100] (1) to node[above, pos=0.6] {$\CloseParenthesis_3$}  (19);

\end{tikzpicture}
\caption{}
\label{subfig:example1}
\end{subfigure}
\hspace{3.5mm}
\begin{subfigure}[b]{.48\textwidth}
\centering
\begin{tikzpicture}[thick, >=latex,
pre/.style={<-,shorten >= 1pt, shorten <=1pt, thick},
post/.style={->,shorten >= 1pt, shorten <=1pt,  thick},
und/.style={very thick, draw=gray},
node1/.style={circle, minimum size=3.5mm, draw=black!100, line width=1pt, inner sep=0},
node2/.style={circle, minimum size=3.5mm, draw=black!100, fill=white!100, very thick, inner sep=0},
virt/.style={circle,draw=black!50,fill=black!20, opacity=0}]

\newcommand{\xdisposition}{0}
\newcommand{\ydisposition}{0}
\newcommand{\xstep}{0.45}
\newcommand{\ystep}{0.7}

\node	[node2]		(1)	at	(\xdisposition + 0*\xstep, \ydisposition + 0*\ystep)	{$v$};
\node	[node2]		(2)	at	(\xdisposition + -2*\xstep, \ydisposition + -1*\ystep)	{};
\node	[node2]		(3)	at	(\xdisposition + -0*\xstep, \ydisposition + -1*\ystep)	{};
\node	[node2]		(4)	at	(\xdisposition + 2*\xstep, \ydisposition + -1*\ystep)	{};
\node	[node2]		(5)	at	(\xdisposition + -2*\xstep, \ydisposition + -2*\ystep)	{};
\node	[node2]		(6)	at	(\xdisposition + -1.5*\xstep, \ydisposition + -3*\ystep)	{$t$};
\node	[node2]		(7)	at	(\xdisposition + -2.5*\xstep, \ydisposition + -3*\ystep)	{$u$};
\node	[node2]		(8)	at	(\xdisposition + -0.5*\xstep, \ydisposition + -2*\ystep)	{};
\node	[node2]		(9)	at	(\xdisposition + 0.5*\xstep, \ydisposition + -2*\ystep)	{};

\draw[-,  very thick] (1) to (2);
\draw[-,  very thick] (1) to (3);
\draw[-,  very thick] (1) to (4);
\draw[-,  very thick] (2) to (5);
\draw[-,  very thick] (5) to (6);
\draw[-, very  thick] (5) to (7);
\draw[-,  very thick] (3) to (8);
\draw[-,  very thick] (3) to (9);

\renewcommand{\xdisposition}{2.2}
\renewcommand{\ydisposition}{-1*\ystep}

\node	[node2]		(19)	at	(\xdisposition + 0*\xstep, \ydisposition + 0*\ystep)	{$z$};
\node	[node2]		(20)	at	(\xdisposition + 0*\xstep, \ydisposition + -1*\ystep)	{};
\node	[node2]		(22)	at	(\xdisposition + -1.5*\xstep, \ydisposition + -1*\ystep)	{$w$};
\node	[node2]		(23)	at	(\xdisposition + 1.5*\xstep, \ydisposition + -1*\ystep)	{$x$};
\node	[node2]		(25)	at	(\xdisposition + -2*\xstep, \ydisposition + -2*\ystep)	{};
\node	[node2]		(26)	at	(\xdisposition + -1*\xstep, \ydisposition + -2*\ystep)	{};
\node	[node2]		(27)	at	(\xdisposition + 1.5*\xstep, \ydisposition + -2*\ystep)	{$y$};

\draw[-,  very thick] (19) to (20);
\draw[-,  very thick] (19) to (22);
\draw[-,  very thick] (19) to (23);
\draw[-,  very thick] (22) to (25);
\draw[-,  very thick] (22) to (26);
\draw[-,  very thick] (23) to (27);

\renewcommand{\xdisposition}{-2.2}
\renewcommand{\ydisposition}{-1*\ystep}

\node	[node2]		(28)	at	(\xdisposition + 0*\xstep, \ydisposition + 0*\ystep)	{$s$};
\node	[node2]		(29)	at	(\xdisposition + 0*\xstep, \ydisposition + -1*\ystep)	{};
\node	[node2]		(30)	at	(\xdisposition + -0.5*\xstep, \ydisposition + -2*\ystep)	{};
\node	[node2]		(31)	at	(\xdisposition + 1.5*\xstep, \ydisposition + -1*\ystep)	{};
\node	[node2]		(32)	at	(\xdisposition + 0.5*\xstep, \ydisposition + -2*\ystep)	{};
\node	[node2]		(33)	at	(\xdisposition - 1.5*\xstep, \ydisposition + -1*\ystep)	{};
\node	[node2]		(34)	at	(\xdisposition - 1.5*\xstep, \ydisposition + -2*\ystep)	{};

\draw[-,  very thick] (1) to (19);
\draw[-,  very thick] (1) to (28);
\draw[-,  very thick] (28) to (29);
\draw[-,  very thick] (28) to (31);
\draw[-,  very thick] (29) to (30);
\draw[-,  very thick] (29) to (32);
\draw[-,  very thick] (28) to (33);
\draw[-,  very thick] (33) to (34);

\draw[->, thick, loop above] (1) to node[above]{$\CloseParenthesis_1$,$\CloseParenthesis_2$, $\CloseParenthesis_3$}   (1);

\node[minimum size=0, inner sep=0] (t) at (1.6*\xstep, + -2.2*\ystep) {};
\draw[-, thick, thick,  out=-55, in=80] (1) to (t);
\draw[->, thick,  out=-115, in=20] (t)  to node[below]{$\CloseParenthesis_2$}   (6);

\node[] at (0,-2.6) {};

\end{tikzpicture}
\caption{}
\label{subfig:example2}
\end{subfigure}
\caption{The intermediate stages of $\BidirectedAlgo$ starting from the stage of Figure~\ref{fig:bidirected_state}.}
\label{fig:bidirected_example}
\end{figure}
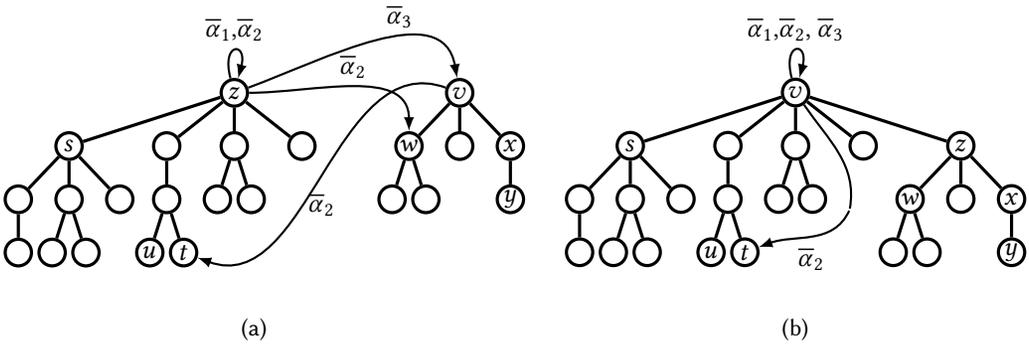

\noindent{\bf Correctness.}
We start with the correctness statement of $\BidirectedAlgo$,
which is established in two parts, namely the soundness and completeness,
which are shown in the following two lemmas.

\begin{lemma}[Soundness]\label{lem:bidirected_soundness}
At the end of $\BidirectedAlgo$, for every pair of nodes $u,v\in V$,
if $\DisjointSets.\Find(u)=\DisjointSets.Find(v)$ then $u$ and $v$ belong to the same DSCC.
\end{lemma}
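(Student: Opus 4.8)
The plan is to prove soundness by induction on the sequence of $\Union$ operations performed by $\BidirectedAlgo$. The key invariant I would maintain is: \emph{at every point during the execution, whenever two nodes $p,q$ satisfy $\DisjointSets.\Find(p)=\DisjointSets.\Find(q)$, they belong to the same DSCC in the input graph $G$}; and additionally, \emph{whenever $w\in\Edges[u][\CloseParenthesis_i]$ at some point, there is a path $P:u\Path w$ with $\CloseNonTerminal_i\Produces\Label(P)$} (so that $u$ reaches $w$, and by bidirectedness $w$ reaches $u$). Initially both parts hold trivially: every set is a singleton, and $\Edges[u][\CloseParenthesis_i]$ is populated from genuine edges $(u,w,\CloseParenthesis_i)\in E$, for which $\CloseNonTerminal_i\Produces\CloseParenthesis_i$ via $\StartNonTerminal\Produces\epsilon$.

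For the inductive step, I would consider the iteration that extracts $(u,\CloseParenthesis_i)$ and performs $\DisjointSets.\Union(S,x)$, where $S=\{\DisjointSets.\Find(w):w\in\Edges[u][\CloseParenthesis_i]\}$ with $|S|\geq 2$. I must show all nodes whose sets get merged lie in a common DSCC. First, by the $\Edges$-invariant, every $w\in\Edges[u][\CloseParenthesis_i]$ is reachable from $u$ via a path labeled by some string produced by $\CloseNonTerminal_i$; combined with the bidirectedness of $G$ (Remark~\ref{rem:bidirected_equivalence}), each such $w$ is in a Dyck-equivalence class together with $u$ \emph{up to the one unmatched $\CloseParenthesis_i$}. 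The crucial observation (the merging principle illustrated in Figure~\ref{fig:bidirected_principle}) is that for any two $w_1,w_2\in\Edges[u][\CloseParenthesis_i]$ with witness paths $P_1:u\Path w_1$, $P_2:u\Path w_2$ where $\CloseNonTerminal_i\Produces\Label(P_j)$, the reversed path $\ov{P_1}$ (which exists by bidirectedness and has label the ``mirror'' of $\Label(P_1)$, so $\OpenNonTerminal_i\Produces\Label(\ov{P_1})$) concatenated with $P_2$ gives a path $w_1\Path w_2$ whose label is produced by $\OpenNonTerminal_i\cdot\CloseNonTerminal_i$-style matching — more precisely, the single $\OpenParenthesis_i$ opened at the start of $\ov{P_1}$ is matched by the single $\CloseParenthesis_i$ closing $P_2$, and everything in between is already matched, so $\StartNonTerminal\Produces\Label(\ov{P_1}\circ P_2)\in\Dyck_k$. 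Hence $w_2$ is Dyck-reachable from $w_1$, and symmetrically; so $w_1,w_2$ lie in the same DSCC. By the induction hypothesis on $\Find$, each $w$ is in the same DSCC as its representative in $S$, so the whole union is within one DSCC, preserving the first invariant.

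I would then check that the $\Edges$-invariant is preserved by the edge-moving in Lines~\ref{line:append}--\ref{line:append2}: moving $\Edges[v][\CloseParenthesis_j]$ into $\Edges[x][\CloseParenthesis_j]$ is sound because $v$ and $x$ are now in the same DSCC (just merged), so any path witnessing $\CloseNonTerminal_j\Produces\Label(P)$ from $v$ can be prefixed by a Dyck-path $x\Path v$ to yield one from $x$; and appending $(x)$ to $\Edges[x][\CloseParenthesis_j]$ in the $u=v, i=j$ case is sound since $u$ reaches $x$ via a $\Dyck_k$-path (both in the new DSCC) followed by... in fact $x$ itself is trivially reachable from $x$ extended by the $\CloseParenthesis_i$ edge — the reasoning mirrors the merging principle above with $w_1=x$. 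The main obstacle I anticipate is making the ``mirror path'' argument fully rigorous: I need a clean lemma stating that in a bidirected graph, if $P:a\Path b$ has $X\Produces\Label(P)$ for a non-terminal $X\in\{\StartNonTerminal,\OpenNonTerminal_i,\CloseNonTerminal_i\}$, then the reversed edge-sequence is a valid path $b\Path a$ whose label is produced by the ``dual'' non-terminal ($\StartNonTerminal$ is self-dual, $\OpenNonTerminal_i\leftrightarrow\CloseNonTerminal_i$); this is an induction on the grammar derivation and is where the bidirectedness condition (2) is used essentially. Once that structural lemma is in hand, the inductive maintenance of both invariants is routine, and the lemma statement follows by reading off the first invariant at termination.
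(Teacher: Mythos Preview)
The paper does not include a proof of this lemma in the provided text (it defers full proofs to the technical report), so a direct comparison is not possible. Your approach---induction on the sequence of $\Union$ operations, maintaining simultaneously (i) the soundness of the disjoint-set partition and (ii) a witness-path invariant for $\Edges$---is the natural one, and it matches the informal description the paper gives when introducing $\Edges$.

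There is one technical wrinkle in your second invariant that you should tighten. You require that $w\in\Edges[u][\CloseParenthesis_i]$ implies a path $P:u\Path w$ with $\CloseNonTerminal_i\Produces\Label(P)$, i.e.\ $\Label(P)=s\,\CloseParenthesis_i$ with $s\in\Dyck_k$. This form is \emph{not} preserved by Line~\ref{line:append2} and Line~\ref{line:new_edges}, where an element of $\Edges$ is replaced by its current representative $x$. The representative $x$ is only known to be in the same DSCC as some $w^*\in\Edges[u][\CloseParenthesis_i]$, so the best path you can build is $u\Path w^*\Path x$ with label $s^*\,\CloseParenthesis_i\,t$ for some Dyck word $t$ (the trailing Dyck path from $w^*$ to $x$), and this is not produced by $\CloseNonTerminal_i$ when $t\neq\epsilon$. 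Your own parenthetical ``followed by\ldots'' suggests you sensed this. The fix is easy: relax the invariant to ``$\Label(P)=s_1\,\CloseParenthesis_i\,s_2$ with $s_1,s_2\in\Dyck_k$'' (one unmatched $\CloseParenthesis_i$, possibly not terminal). The mirror-path argument still goes through---reversing gives $\ov{s_2}\,\OpenParenthesis_i\,\ov{s_1}$, and concatenating two such paths yields a Dyck word since the single $\OpenParenthesis_i$ matches the single $\CloseParenthesis_i$---and the relaxed invariant is now stable under prepending and appending Dyck paths, which is exactly what the representative-replacement steps do. With this adjustment your plan is complete.
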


\begin{lemma}[Completeness]\label{lem:bidirected_completeness}
At the end of $\BidirectedAlgo$, for every pair of nodes $u,v\in V$ in the same DSCC,
$u$ and $v$ belong to the same set of $\DisjointSets$.
\end{lemma}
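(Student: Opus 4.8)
The plan is to show that the partition returned by $\BidirectedAlgo$ is at least as coarse as the DSCC partition; together with soundness (Lemma~\ref{lem:bidirected_soundness}) this yields that the two partitions coincide. The whole argument hinges on the \emph{merging principle} from the informal description and Figure~\ref{subfig:bidirected_principle2}: if $(a,c,\CloseParenthesis_i),(b,d,\CloseParenthesis_i)\in E$ and $a,b$ lie in a common DSCC, then so do $c,d$. I would record this first, as it is immediate from bidirectedness: $(a,c,\CloseParenthesis_i)\in E$ forces $(c,a,\OpenParenthesis_i)\in E$, so given any witness $Q\colon a\Path b$ with $\Label(Q)\in\Dyck_k$, the path $(c,a,\OpenParenthesis_i)\circ Q\circ(b,d,\CloseParenthesis_i)$ has label $\OpenParenthesis_i\,\Label(Q)\,\CloseParenthesis_i\in\Dyck_k$, so $d$ is Dyck-reachable from $c$, and by Remark~\ref{rem:bidirected_equivalence} $c,d$ are in the same DSCC.

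\emph{Step 1 (structure of DSCCs).} Let $\approx$ be the least equivalence on $V$ closed under the merging principle. By the merging principle and Remark~\ref{rem:bidirected_equivalence}, the relation ``same DSCC'' is one such equivalence, so $\approx$ refines it; I would prove the reverse inclusion by showing: \emph{if $v$ is Dyck-reachable from $u$ then $u\approx v$}, by strong induction on the length of a witness path $P$. The case $u=v$ is reflexivity. Otherwise, since there are no $\epsilon$-edges (Remark~\ref{rem:bidirected_labels}), $\Label(P)$ is a non-empty Dyck word and $|P|\ge2$. If $\Label(P)=W_1W_2$ with $W_1,W_2$ non-empty Dyck words, write $P=P_1\circ P_2$ through an intermediate node $w$; both $P_i$ are shorter witnesses, so $u\approx w\approx v$ by the induction hypothesis and transitivity. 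Otherwise $\Label(P)=\OpenParenthesis_i\,W\,\CloseParenthesis_i$ is a single primitive block with $W\in\Dyck_k$, so $P=(u,p,\OpenParenthesis_i)\circ P'\circ(q,v,\CloseParenthesis_i)$ with $\Label(P')=W$; bidirectedness gives $(p,u,\CloseParenthesis_i)\in E$, and $P'$ (which is empty with $p=q$ if $W=\epsilon$, and otherwise a strictly shorter witness) shows $p,q$ share a DSCC, hence $p\approx q$; closure of $\approx$ under the merging principle applied to $(p,u,\CloseParenthesis_i),(q,v,\CloseParenthesis_i)$ gives $u\approx v$.

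\emph{Step 2 (the algorithm realizes the merging principle).} By Step~1 it suffices to show that at termination the $\DisjointSets$-partition is closed under the merging principle, i.e.\ $\Find(a)=\Find(b)$ and $(a,c,\CloseParenthesis_i),(b,d,\CloseParenthesis_i)\in E$ imply $\Find(c)=\Find(d)$ (writing $\Find$ for $\DisjointSets.\Find$). For this I would maintain two invariants, holding at the start of every iteration of the main loop (Line~\ref{line:loop_outer}) and hence at termination: \textbf{(I1)} for every edge $(x,y,\CloseParenthesis_j)\in E$, the list $\Edges[\Find(x)][\CloseParenthesis_j]$ contains some $y'$ with $\Find(y')=\Find(y)$; and \textbf{(I2)} for every representative $\rho$ and every $\CloseParenthesis_j$, if the entries of $\Edges[\rho][\CloseParenthesis_j]$ lie in two or more distinct $\DisjointSets$-sets, then $(\rho,\CloseParenthesis_j)\in\Queue$. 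Both hold after initialization, and I would check that the loop body restores them: a no-op iteration changes nothing; a compaction (Line~\ref{line:new_edges}) replaces a list whose entries share a representative by a single such entry; and a $\Union$ of $S$ into $x$ (Lines~\ref{line:update_scc}--\ref{line:append2}) moves the lists of the absorbed representatives onto $x$, with the self-loop clause (Line~\ref{line:append2}) handled using the fact that every node in the processed list $\Edges[u][\CloseParenthesis_i]$ already has representative in $S$, after which Line~\ref{line:queue_insert2} re-enqueues the pairs of $x$ whose list has grown to length $\ge2$. Since every $\Union$ strictly decreases the number of $\DisjointSets$-sets, there are only finitely many (in fact $O(n\cdot k)$) enqueue operations, so the loop terminates with $\Queue$ empty; by (I2), at that point no representative's list spans two sets. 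Finally, for $a,b$ with $\rho:=\Find(a)=\Find(b)$ and edges $(a,c,\CloseParenthesis_i),(b,d,\CloseParenthesis_i)\in E$: (I1) puts some $c'$ with $\Find(c')=\Find(c)$ and some $d'$ with $\Find(d')=\Find(d)$ into the single list $\Edges[\rho][\CloseParenthesis_i]$, which spans one set, so $\Find(c')=\Find(d')$ and hence $\Find(c)=\Find(d)$.

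\emph{Main obstacle.} Step~1 is short and clean; the work is in Step~2, and specifically in pinning down invariants (I1)--(I2) that are strong enough to conclude yet robust to the bookkeeping of Algorithm~\ref{algo:bidirectedalgo} --- in particular they are only loop invariants, being transiently violated while $\Edges$ and $\Queue$ are updated inside one iteration, and their re-establishment must account for the list compaction of Line~\ref{line:new_edges}, the self-loop append of Line~\ref{line:append2}, and the possibility of a representative being absorbed before the pair naming it is dequeued. Once the invariants are stated correctly, verifying their preservation across the three kinds of iterations and deriving the lemma are routine case analyses.
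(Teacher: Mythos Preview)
Your proposal is correct and follows essentially the paper's approach: the merging principle you isolate is exactly what the paper describes informally around Figure~\ref{fig:bidirected_principle}, and your decomposition into (Step~1) a structural characterization of DSCCs as the least equivalence closed under that principle, proved by induction on witness-path length, and (Step~2) loop invariants (I1)--(I2) showing the algorithm's terminal partition is closed under the principle, is the natural formalization of that sketch. The paper defers the full proof to its extended version, but your treatment of the bookkeeping corner cases (the compaction in Line~\ref{line:new_edges}, the self-loop append in Line~\ref{line:append2}, and absorbed representatives lingering in $\Queue$) is careful and correct.
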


\noindent{\bf Complexity.}
We now establish the complexity of $\BidirectedAlgo$, in a sequence of lemmas.

\begin{lemma}\label{lem:outer_loop_bound}
The main loop of Line~\ref{line:loop_outer} will be executed $O(n)$ times.
\end{lemma}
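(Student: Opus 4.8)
The plan is to bound the number of iterations of the main loop by a potential argument on the number of distinct DSCCs (equivalently, the number of sets in $\DisjointSets$). The key observation is that every iteration of the loop that passes the test of Line~\ref{line:if_parent} and reaches Line~\ref{line:update_scc} performs a nontrivial $\Union$ operation, strictly decreasing the number of sets in $\DisjointSets$ by at least one. Since there are $n$ sets initially and the number never increases, there can be at most $n-1$ such ``merging'' iterations. It remains to bound the ``non-merging'' iterations: those in which the extracted element $(u,\CloseParenthesis_i)$ either fails the check $u = \DisjointSets.\Find(u)$ in Line~\ref{line:if_parent}, or passes it but yields $|S| \le 1$ in Line~\ref{line:if_two}.

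To control the non-merging iterations I would charge each of them to an insertion into $\Queue$, and then count insertions. Insertions happen in three places: Line~\ref{line:queue_insert1} during initialization, and Lines~\ref{line:queue_insert2} during the computation. Initialization contributes at most $n \cdot k = O(n)$ insertions (one per node per parenthesis type, as $k = O(1)$). For Line~\ref{line:queue_insert2}, I would argue that each such insertion is paired with a distinct merging iteration: the insertion of $(x,\CloseParenthesis_j)$ in Line~\ref{line:queue_insert2} occurs only inside the branch that has just performed a $\Union$ in Line~\ref{line:update_scc}, and that branch runs the loop of Line~\ref{line:loop_append} exactly $k = O(1)$ times. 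Hence each merging iteration triggers $O(1)$ insertions via Line~\ref{line:queue_insert2}, giving at most $O(n)$ such insertions in total over all $\le n-1$ merging iterations. Summing, the total number of insertions into $\Queue$ — and therefore the total number of extractions, i.e.\ iterations of the main loop — is $O(n) + O(n) = O(n)$.

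The one subtlety, and the step I expect to require the most care, is ensuring that a single merging iteration does not spawn more than $O(1)$ queue insertions despite the nested loop over $v \in S \setminus \{x\}$ in Line~\ref{line:loop_inner}: the inner loop over $S$ can be long, but crucially the queue insertion in Line~\ref{line:queue_insert2} sits outside the $\ForEach$ over $v$ and inside only the $\For$ over $j$, so it fires at most $k$ times per merging iteration regardless of $|S|$. I would state this explicitly and also note that the work of the inner loop itself (the edge moves) is accounted separately in the later complexity lemmas, not here. With that pinned down, the count $O(n)$ for the number of main-loop iterations follows, completing the proof.
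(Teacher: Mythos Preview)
Your proposal is correct and follows essentially the same approach as the paper: bound the total number of insertions into $\Queue$ by observing that Line~\ref{line:queue_insert1} contributes $O(n\cdot k)=O(n)$ insertions, and that each execution of Line~\ref{line:queue_insert2} (at most $k$ per merging iteration) is charged to a $\Union$ that removes at least one representative, so there are $O(n)$ such insertions as well. Your explicit remark that the insertion in Line~\ref{line:queue_insert2} lies outside the inner loop over $S\setminus\{x\}$ is exactly the point the paper uses implicitly, so the arguments coincide.
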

\begin{proof}
Initially $\Queue$ is populated by Line~\ref{line:queue_insert1}, which inserts $O(n)$ elements, as $k=O(1)$.
Afterwards, for every $\ell\leq k=O(1)$ elements $(u, \CloseParenthesis_j)$ inserted in $\Queue$ via Line~\ref{line:queue_insert2},
there is at least one node $v\in S$ which stops being a representative of its own set in $\DisjointSets$, and thus will not be in $S$ in further iterations.
Hence $\Queue$ will contain $O(n)$ elements in total, and the result follows.
\end{proof}

\noindent{\bf The sets $S_j$ and $S'_j$.}
Consider an element $(u, \CloseParenthesis_i)$ extracted from $\Queue$ in the $j$-th iteration of the algorithm in Line~\ref{line:loop_outer}.
We denote by $S'_j$ the set $\Edges[u][\CloseParenthesis_i]$, and by $S_j$ the set $S$ constructed in Line~\ref{line:construct_s}.
If $S$ was not constructed in that iteration (i.e., the condition in Line~\ref{line:if_parent} does not hold),
then we let $S_j=\emptyset$.
It is easy to see that $|S_j|\leq |S'_j|$ for all $j$.
The following crucial lemma bounds the total sizes of the sets $S'_j$ constructed throughout the execution of $\BidirectedAlgo$.

\begin{lemma}\label{lem:S_bound}
Let $r$ be the number of iterations of the main loop in Line~\ref{line:loop_outer}.
We have $\sum_{j=1}^{r} |S'_j| = O(m)$.
\end{lemma}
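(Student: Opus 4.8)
The plan is an amortized argument whose potential is the total size of all the $\Edges$ lists, namely $\Potential = \sum_{u\in V}\sum_{i=1}^{k}|\Edges[u][\CloseParenthesis_i]|$; think of each list element as a \emph{token}. We may assume $n=O(m)$ (nodes incident to no edge are singleton DSCCs, handled in $O(1)$ each). First I would record the coarse facts about $\Potential$. Initially $\Potential=\sum_{u,i}|\{v:(u,v,\CloseParenthesis_i)\in E\}|\le m$. The list moves of Line~\ref{line:append} leave $\Potential$ unchanged; the only operations that can increase it are Line~\ref{line:append2}, which appends one token and fires at most once per $\DisjointSets.\Union$, and Line~\ref{line:new_edges}, which writes a singleton list. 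Since there are at most $n-1$ $\Union$ operations and $O(n)$ iterations of the loop of Line~\ref{line:loop_outer} (Lemma~\ref{lem:outer_loop_bound}), the total increase of $\Potential$ over the whole run is $O(n)$, so the total mass that ever resides in the $\Edges$ lists is $O(m+n)=O(m)$.

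Second, I would charge the reads. Fix the iteration that extracts $(u,\CloseParenthesis_i)$, so $|S'_j|=|\Edges[u][\CloseParenthesis_i]|$ at that point. If Line~\ref{line:new_edges} is executed in this iteration (i.e.\ $u\notin S$ or $|S|\le 1$), then $\Edges[u][\CloseParenthesis_i]$ is overwritten by a one-element list while nothing else in the iteration changes $\Potential$ (in this case Line~\ref{line:append2} never fires, as it needs $u\in S$); hence $\Potential$ drops by exactly $|S'_j|-1$. Since total potential drops are bounded by $\Potential_{\mathrm{init}}$ plus total increases, i.e.\ by $m+O(n)$, summing over all such iterations gives $\sum |S'_j| = O(m)+O(n)=O(m)$. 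The iterations \emph{not} of this form are of two kinds: (a) $u\neq\DisjointSets.\Find(u)$, i.e.\ $u$ is no longer a representative; and (b) $u=\DisjointSets.\Find(u)$ with $|S|\ge 2$ and $u\in S$, in which case Line~\ref{line:choose_x} picks $x\neq u$, node $u$ is merged into $x$, and $\Edges[u][\CloseParenthesis_i]$ is neither moved nor reset — call such a list \emph{frozen}.

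The crux, and the step I expect to be the main obstacle, is bounding $\sum|S'_j|$ over kinds (a) and (b). Here are the structural facts I would use. In a kind-(b) iteration node $u$ dies, which happens at most once per node; when it dies, the code moves \emph{all} other lists $\Edges[u][\CloseParenthesis_{j}]$, $j\ne i$, to $x$, so a dead node carries at most one (frozen) nonempty list; and after the merge every token of $\Edges[u][\CloseParenthesis_i]$ $\Find$s to the same set, so this list can never again trigger a $\Union$ and is permanently inert. A kind-(a) iteration either reads an empty list (if $u$ had been absorbed by \emph{some other} node's processing, in which case all of $u$'s lists were emptied by Line~\ref{line:append}) or re-reads a frozen list through a stale entry of $\Queue$. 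Thus the whole contribution of (a) and (b) is $\sum_{(u,i)\ \mathrm{frozen}} (1+c_{u,i})\,f_{u,i}$, where $f_{u,i}$ is the frozen size and $c_{u,i}$ the number of stale copies of $(u,\CloseParenthesis_i)$ in $\Queue$ when $u$ dies. The key point is that each token lies in at most one frozen list, so $\sum_{(u,i)} f_{u,i}\le(\text{total tokens})=O(m)$; and if $\Queue$ is the natural deduplicating worklist then $c_{u,i}=0$ (once $u$ dies, $(u,\CloseParenthesis_i)$ is out of $\Queue$ and is never reinserted), which already closes the argument. In the general (non-deduplicating) case one must additionally bound $\sum_{(u,i)} c_{u,i}\,f_{u,i}$, by charging each stale copy to the $\Union$ that reinserted it (these $\Union$s number $O(n)$ in total, each making $u$ a representative with $|\Edges[u][\CloseParenthesis_i]|\ge 2$); this is the delicate accounting step. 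Combining the three parts yields $\sum_{j=1}^{r}|S'_j|=O(m)$.
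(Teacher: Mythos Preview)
Your approach is essentially the paper's: the same potential $\Phi=\sum_{u,i}|\Edges[u][\CloseParenthesis_i]|$, the same case split on Line~\ref{line:if_update}, and the same telescoping bound. In fact you are more careful than the paper on one point. The paper asserts that for every $j\in J$ the potential drops by at least $|S'_j|-1$, splitting only on the condition of Line~\ref{line:if_update}; it silently assumes the test in Line~\ref{line:if_parent} succeeds. You correctly observe that there is a third kind of iteration, your type~(a), where $u$ is already dead; there Line~\ref{line:if_parent} fails, nothing happens, and the potential does not drop at all --- yet $S'_j=\Edges[u][\CloseParenthesis_i]$ may be the ``frozen'' list of size $\geq 2$. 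The paper never addresses this case; your handling via the deduplicating-worklist assumption (which forces $c_{u,i}=0$) is the cleanest fix.

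The only genuine gap in your proposal is the one you flag yourself: for a non-deduplicating $\Queue$, bounding $\sum_{(u,i)} c_{u,i}\,f_{u,i}$ by $O(m)$. Your suggested charging (``to the $\Union$ that reinserted it'') bounds $\sum c_{u,i}=O(n)$, but does not control the product $c_{u,i}\,f_{u,i}$ when the frozen list is large. The paper does not close this gap either; it simply does not notice it. The pragmatic resolution is that this gap is immaterial for the downstream complexity lemma: a type-(a) iteration never executes Line~\ref{line:construct_s}, so it costs $O(1)$ regardless of $|S'_j|$. Equivalently, one may state and prove the lemma only for iterations passing Line~\ref{line:if_parent}, where your (and the paper's) potential argument goes through cleanly. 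With that restriction --- or with a deduplicating worklist --- your proof is complete and matches the paper's.
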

\begin{proof}

By Lemma~\ref{lem:outer_loop_bound} we have $r=O(n)$.
Let $J=\{j: |S'_j|\geq 2 \}$, and it suffices to prove that $\sum_{j\in J} S'_j = O(m)$.

We first argue that after a pair $(u,\CloseParenthesis_i)$ has been extracted from $\Queue$ in some iteration $j\in J$,
the number of edges in $\Edges$ decreases by at least $|S'_j|-1$.
We consider the following complementary cases depending on the condition of Line~\ref{line:if_update}.
\begin{compactenum}
\item If the condition holds, then we have $|\Edges[u][\CloseParenthesis_i]|= 1$ after Line~\ref{line:new_edges} has been executed.
\item Otherwise, we must have $u\in S$ and $|S|\geq 2$, hence there exists some $x\in S\setminus \{u\}$ chosen in Line~\ref{line:choose_x}, and all edges in $\Edges[u]$ will be moved to $\Edges[x]$ for some $v=u$ in Line~\ref{line:loop_inner}.
Hence $|\Edges[u][\CloseParenthesis_i]| = 0$.
\end{compactenum}
Note that because of Line~\ref{line:if_move}, the edges in $\Edges[u][\CloseParenthesis_i]$ are not moved to $\Edges[x][\CloseParenthesis_i]$,
hence all $\Edges[u][\CloseParenthesis_i]$ (except possibly one) will no longer be present at the end of the iteration.
Since $S'_j=\Edges[u][\CloseParenthesis_i]$ at the beginning of the iteration, we obtain that the number of edges in $\Edges$ decreases by at least $|S'_j|-1$.

We define a potential function $\Potential:\Nats \to \Nats$, such that $\Potential(j)$ equals the number of elements in the data structure $\Edges$ at the beginning of the $j$-th iteration of the main loop in Line~\ref{line:loop_outer}.
Note that (i)~initially $\Potential(1)=m$, (ii)~$\Potential(j)\geq 0$ for all $j$, and (iii)~$\Potential(j+1)\leq \Potential(j)$ for all $j$, as new edges are never added to $\Edges$.
Let $(u,\CloseParenthesis_i)$ be an element extracted from $\Queue$ at the beginning of the $j$-the iteration, for some $j\in J$.
As shown above, at the end of the iteration we have removed at least $|S'_j|-1$ edges from $\Edges$, and since $|S'_j| \geq 2$,
we obtain $\Potential(j+1) \leq  \Potential(j) - |S'_j|/2$.
Summing over all $j\in J$, we obtain

\begin{flalign*}
\sum_{j\in J} |S'_j|  &\leq  2\cdot \sum_{j\in J} \left(\Potential(j) - \Potential(j+1) \right) & \left[\text{as $\Potential(j+1) \leq  \Potential(j) - |S'_j|/2$}\right]\\
& = 2\cdot \sum_{\ell=1}^{|J|} \left(\Potential(j_{\ell}) - \Potential(j_{\ell}+1)\right)  & \left[\text{for $j_{\ell}<j_{\ell+1} $}\right]\\
&\leq 2\cdot \Potential(j_1) & \left[ \text{as $\Potential$ is decreasing and thus $\Potential(j_{\ell+1}) \leq \Potential(j_{\ell}+1) $} \right]\\
& \leq 2\cdot m & \left[\text{as $\Potential(j_1) \leq \Potential(1)=m$}\right]
\end{flalign*}

The desired result follows.
\end{proof}

Finally, we are ready to establish the complexity of $\BidirectedAlgo$.

\begin{lemma}[Complexity]\label{lem:bidirected_complexity}
$\BidirectedAlgo$ requires $O(m\cdot\alpha(n))$ time and $O(m)$ space.
\end{lemma}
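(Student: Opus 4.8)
The plan is to split the running time into the cost of the initialization block (Lines~\ref{line:init_begin}--\ref{line:init_end}) and the cost of the main loop (Line~\ref{line:loop_outer}), and to bound the latter by charging each iteration $j$ to the set $S'_j$ it processes, so that Lemma~\ref{lem:S_bound} and Lemma~\ref{lem:outer_loop_bound} close the argument. I assume throughout, without loss of generality, that $G$ has no isolated nodes (each such node is a singleton DSCC and is discarded in $O(n)$ preprocessing), so that $m=\Omega(n)$ and in particular $O(m+n\cdot\alpha(n))=O(m\cdot\alpha(n))$.

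For the initialization, I would observe that it performs $n$ calls to $\MakeSet$ ($O(\alpha(n))$ amortized each, hence $O(n\cdot\alpha(n))$ in total), builds the map $\Edges$ by a single pass over $E$ that places each edge into exactly one list ($O(m)$ total, since $k=O(1)$), and performs $O(n\cdot k)=O(n)$ length tests and queue insertions. Thus initialization costs $O(m+n\cdot\alpha(n))$ time and uses $O(m)$ space.

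For one iteration $j$ of the main loop in which $(u,\CloseParenthesis_i)$ is extracted, I would account for the work as follows. The extraction and the $\DisjointSets.\Find(u)$ test of Line~\ref{line:if_parent} take $O(\alpha(n))$. If $u$ is a representative, building $S$ in Line~\ref{line:construct_s} performs $|S'_j|=|\Edges[u][\CloseParenthesis_i]|$ further $\Find$ calls, i.e.\ $O(|S'_j|\cdot\alpha(n))$. If moreover $|S|\ge 2$, then choosing $x$, the $\Union(S,x)$ of Line~\ref{line:update_scc}, and the double loop of Lines~\ref{line:loop_append}--\ref{line:loop_inner} together perform $O(|S_j|)$ disjoint-sets operations and $O(k\cdot|S_j|)=O(|S_j|)$ constant-time list splices (Line~\ref{line:append}) or appends (Line~\ref{line:append2}) --- here I rely on the fact, observed in the complexity overview, that concatenating linked lists is $O(1)$ (possibly creating harmless duplicate edges) --- plus $O(k)=O(1)$ length tests and queue insertions; this is $O(|S_j|\cdot\alpha(n))$. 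All remaining steps, including Line~\ref{line:if_update}, are $O(1)$. Using $|S_j|\le|S'_j|$, the whole iteration costs $O\big((|S'_j|+1)\cdot\alpha(n)\big)$ (and only $O(\alpha(n))$ when $u$ is not a representative).

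Summing over the $r=O(n)$ iterations (Lemma~\ref{lem:outer_loop_bound}) and using $\sum_{j=1}^{r}|S'_j|=O(m)$ (Lemma~\ref{lem:S_bound}), the main loop costs
\[
\sum_{j=1}^{r}O\big((|S'_j|+1)\cdot\alpha(n)\big)=O\!\left(\alpha(n)\cdot\Big(r+\sum_{j=1}^{r}|S'_j|\Big)\right)=O\big((m+n)\cdot\alpha(n)\big)=O(m\cdot\alpha(n)),
\]
which together with the initialization cost gives the claimed $O(m\cdot\alpha(n))$ time bound. For the space bound, I would note that $\DisjointSets$ occupies $O(n)$, that $\Queue$ ever holds $O(n)$ elements (its total number of insertions is $O(n)$ by the proof of Lemma~\ref{lem:outer_loop_bound}), and that $\Edges$ starts with at most $m$ entries and thereafter only has entries removed (Line~\ref{line:new_edges}), relocated between lists (Line~\ref{line:append}), or created at most once per iteration (Line~\ref{line:append2}), hence holds $O(m+n)=O(m)$ entries at all times; so the total space is $O(m)$. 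The one step that needs care is the per-iteration charging: one must verify that an iteration's work really is $O(|S'_j|\cdot\alpha(n))$ --- in particular that the list splicing of Line~\ref{line:append} is $O(1)$ rather than proportional to list lengths --- so that the total telescopes against the bound supplied by Lemma~\ref{lem:S_bound} instead of blowing up by a factor of $|S'_j|$ or $k$.
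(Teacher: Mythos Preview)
Your proof is correct and follows essentially the same approach the paper sets up: the complexity overview together with Lemma~\ref{lem:outer_loop_bound} and Lemma~\ref{lem:S_bound} are precisely intended to be combined the way you do, charging each iteration to $|S'_j|$ and summing via the potential argument. Your treatment is slightly more explicit than the paper's sketch (in particular the careful space accounting and the observation that $m=\Omega(n)$ absorbs the $O(n\cdot\alpha(n))$ term), but there is no substantive difference in strategy.
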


\noindent{\bf A speedup for non-sparse graphs.}
Observe that in the case of sparse graphs $m=O(n)$, and Lemma~\ref{lem:bidirected_complexity} yields the complexity $O(n\cdot \alpha(n))$.
Here we describe a modification of $\BidirectedAlgo$ that reduces the complexity
from $O(m\cdot \alpha(n))$ to $O(m+n\cdot \alpha(n))$, and thus is faster for graphs
where the edges are more than a factor $\alpha(n)$ as many as the nodes
(i.e., $m=\omega(n\cdot \alpha(n))$).
The key idea is that if a node $u$ has more than $k$ outgoing edges initially,
then it has two distinct outgoing edges labeled with the same closing parenthesis $\CloseParenthesis_i\in \SetClosedParenthesis_k$, and hence the corresponding neighbors can be merged to a single DSCC in a preprocessing step.
Once such a merging has taken place, $u$ only needs to keep a single outgoing edge labeled with $\CloseParenthesis_i$ to that DSCC.
This preprocessing phase requires $O(m)$ time for all nodes, after which there are only $O(n)$ edges present,
by amortizing at most $k$ edges per node of the original graph (recall that $k=O(1)$).
After this preprocessing step has taken place, $\BidirectedAlgo$ is executed with $O(n)$ edges in its input,
and by Lemma~\ref{lem:bidirected_complexity} the complexity is $O(n\cdot \alpha(n))$.
We conclude the results of this section with the following theorem.

\begin{theorem}[Worst-case complexity]\label{them:bidirected_upper}
Let $G=(V,E)$ be a $\Alphabet_k$-labeled bidirected graph of $n$ nodes and $m=\Omega(n)$ edges.
$\BidirectedAlgo$ correctly computes the DSCCs of $G$ and requires $O(m+n\cdot \alpha(n))$ time and $O(m)$ space.
\end{theorem}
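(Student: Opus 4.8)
The plan is to derive the theorem by assembling the lemmas already established and adding only a short analysis of the non-sparse preprocessing described just above the statement; no genuinely new argument is needed, only careful bookkeeping and a justification that the preprocessing preserves the DSCC structure.

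For correctness I would simply combine Lemma~\ref{lem:bidirected_soundness} and Lemma~\ref{lem:bidirected_completeness}: soundness gives that any two nodes placed in the same set of $\DisjointSets$ lie in a common DSCC, and completeness gives the converse, so at termination the sets of $\DisjointSets$ are exactly the DSCCs of $G$. By Remark~\ref{rem:bidirected_equivalence} this partition encodes the full Dyck-reachability relation, so the returned output is both sound and complete.

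For the running time I would treat the two phases separately. In the preprocessing phase I scan the outgoing edges of each node $u$ once, using an auxiliary array indexed by the $k=O(1)$ closing-parenthesis types; whenever $u$ already has an edge labeled $\CloseParenthesis_i$ and a second one $(u,v,\CloseParenthesis_i)$ is seen, I call $\DisjointSets.\Union$ on the two targets and drop the redundant edge. The point is that this is precisely a merge that $\BidirectedAlgo$ itself would perform (it is an instance of the step at Line~\ref{line:update_scc} applied to $\Edges[u][\CloseParenthesis_i]$), so Lemmas~\ref{lem:bidirected_soundness} and~\ref{lem:bidirected_completeness} apply verbatim to the composed procedure and the DSCCs of the reduced graph coincide with those of $G$. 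The scan touches each edge a constant number of times, costing $O(m)$; the $\Union$ calls, of which there are at most $n-1$ since each reduces the number of sets, cost $O(n\cdot\alpha(n))$ amortized. After the scan each node retains at most $k=O(1)$ outgoing edges, so the reduced graph has $O(n)$ edges, and running $\BidirectedAlgo$ on it costs $O(n\cdot\alpha(n))$ time by Lemma~\ref{lem:bidirected_complexity}. Adding the two phases gives $O(m+n\cdot\alpha(n))$ time; the space is $O(m)$ to store the input plus $O(n)$ for $\DisjointSets$, $\Edges$, and $\Queue$, i.e., $O(m)$ since $m=\Omega(n)$.

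The one place that needs care — and which I expect to be the main obstacle — is the claim that deleting a duplicate closing-parenthesis edge, after merging its endpoint into the surviving one, loses no reachability information. My intended resolution is exactly the observation above: the preprocessing only performs merges that $\BidirectedAlgo$ would perform anyway on the $\Edges$ structure, so its soundness and completeness are already covered by Lemmas~\ref{lem:bidirected_soundness} and~\ref{lem:bidirected_completeness} without any change to their proofs. Everything else is a routine accounting of $O(m)$ edge scans, at most $n-1$ union operations, and the already-established $O(n\cdot\alpha(n))$ bound for $\BidirectedAlgo$ on graphs with $O(n)$ edges.
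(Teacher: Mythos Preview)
Your overall plan matches the paper's: correctness from Lemmas~\ref{lem:bidirected_soundness} and~\ref{lem:bidirected_completeness}, then the non-sparse preprocessing to reduce to $O(n)$ edges before applying Lemma~\ref{lem:bidirected_complexity}. There is, however, a genuine gap in your cost accounting for the preprocessing. You write that the $\Union$ calls number ``at most $n-1$ since each reduces the number of sets,'' but the procedure you describe issues one $\Union$ per duplicate edge, and the number of duplicates is $\sum_{u,i}\bigl(|\Edges[u][\CloseParenthesis_i]|-1\bigr)$, which is $\Theta(m)$ whenever $m=\omega(n)$. Many of these calls will land on targets already merged by earlier processing and hence do not reduce the set count; yet each such call still pays for its internal $\Find$ operations. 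The amortized cost of $\Theta(m)$ Union--Find operations on $n$ elements is $O(m\cdot\alpha(n))$, which is exactly the Lemma~\ref{lem:bidirected_complexity} bound you are trying to improve, so the claimed speedup evaporates.

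The fix is easy but necessary: avoid $\DisjointSets$ during preprocessing. For each pair $(u,\CloseParenthesis_i)$ link the targets in a chain, building an auxiliary undirected graph on $V$ with at most $m$ edges; a single DFS/BFS then computes its connected components in $O(m)$ time with no $\alpha(n)$ factor. Only afterwards initialize $\DisjointSets$ to reflect these components (at most $n-1$ genuine unions, $O(n\cdot\alpha(n))$) and run $\BidirectedAlgo$ on the contracted $O(n)$-edge graph. The paper's speedup paragraph asserts $O(m)$ preprocessing without spelling this out, but some such batching is required for the stated bound to hold; your correctness argument for the preprocessing (that it performs only merges $\BidirectedAlgo$ would itself perform) is fine and carries over unchanged.
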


\noindent{\bf Linear-time considerations.} Note that $\alpha(n)$ is an extremely slowly growing function, and for all practical purposes $\alpha(n)\leq 4$.
Indeed, the smallest $n$ for which $\alpha(n)=5$ far exceeds the estimated number of atoms in the observable universe.
Additionally, since it is known that a Disjoint Sets data structure operates in amortized constant expected time per operation~\cite{Doyle76,Yao85},
we obtain the following corollary regarding the expected time complexity of our algorithm.

\begin{corollary}[Average-case complexity]\label{cor:bidirected_expected_linear}
For bidirected graphs, the algorithm $\BidirectedAlgo$ requires $O(m)$ expected time for computing DSCCs.
\end{corollary}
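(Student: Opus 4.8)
The plan is to revisit the complexity analysis behind Lemma~\ref{lem:bidirected_complexity} and to isolate precisely where the $\alpha(n)$ factor enters. Every step of $\BidirectedAlgo$ other than the operations on the disjoint-sets structure $\DisjointSets$ --- namely the initialization of $\Edges$, the insertions into and extractions from $\Queue$, and the splicing of the linked lists $\Edges[\cdot][\CloseParenthesis_j]$ --- costs $O(1)$ per unit of work, and the proof of Lemma~\ref{lem:bidirected_complexity} already establishes that the total amount of such work is $O(m)$ (the linked-list splices are charged against the net decrease in the number of edges in $\Edges$, exactly as in Lemma~\ref{lem:S_bound}, and $\Queue$ is touched $O(n)$ times by Lemma~\ref{lem:outer_loop_bound}). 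Hence the only super-linear contribution to the running time is $\alpha(n)$ times the number of calls to $\MakeSet$, $\Find$, and $\Union$, and it suffices to show that this number is $O(m)$ and then to replace the worst-case $\alpha(n)$ per-operation bound by an expected $O(1)$ bound.

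Next I would count the disjoint-sets operations. There are $n = O(m)$ calls to $\MakeSet$, one per node. The number of calls to $\Find$ is bounded by $\sum_j (|S'_j| + 1)$, where $j$ ranges over the $O(n)$ iterations of the loop of Line~\ref{line:loop_outer}: in iteration $j$ there is one $\Find$ in the test of Line~\ref{line:if_parent}, plus $|S'_j|$ further $\Find$ calls while constructing $S$ in Line~\ref{line:construct_s}. Since $\sum_j |S'_j| = O(m)$ by Lemma~\ref{lem:S_bound}, the number of $\Find$ calls is $O(m)$. Finally, each $\Union(S,x)$ merges $|S|$ previously disjoint sets into one and thus decreases the global number of sets by $|S|-1$; as this count starts at $n$, the $\Union$ calls together amount to at most $n-1$ elementary two-set merges. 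Thus $\BidirectedAlgo$ issues a sequence of $O(m)$ operations to $\DisjointSets$.

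It then remains to invoke the average-case behavior of the disjoint-sets structure. The structure used is the standard one solving Union-Find with weighted union and path compression~\cite{Tarjan75}, and it is known~\cite{Doyle76,Yao85} that such a structure executes a sequence of $\ell$ operations in $O(\ell)$ expected time. Applying this to the $O(m)$-length operation sequence produced above gives an expected cost of $O(m)$ for all disjoint-sets work, which, added to the $O(m)$ deterministic cost of everything else, yields the claimed $O(m)$ expected running time. Note that, in contrast to Theorem~\ref{them:bidirected_upper}, the preprocessing that prunes high-degree nodes is not needed here: once the per-operation cost is constant in expectation, the bound $O(m\cdot\alpha(n))$ of Lemma~\ref{lem:bidirected_complexity} already collapses to $O(m)$.

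The main obstacle I anticipate is matching the probabilistic model: the expected-$O(1)$-per-operation guarantees of~\cite{Doyle76,Yao85} are stated for a random model of the operation sequence (or for suitably randomized tie-breaking in the union rule), not for an adversarial one, so the argument must either verify that the sequence generated by $\BidirectedAlgo$ fits that model, or observe that randomizing the still-unspecified choice of the new representative $x$ in Line~\ref{line:choose_x} suffices to place it there. A secondary point requiring care is to confirm, against the proof of Lemma~\ref{lem:bidirected_complexity}, that no hidden $\Find$ or $\Union$ calls occur inside the edge-moving loop of Lines~\ref{line:loop_append}--\ref{line:append2} beyond the ones counted above, so that the $\alpha(n)$ factor is genuinely confined to the $O(m)$ operations we have accounted for.
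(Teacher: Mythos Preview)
Your proposal is correct and follows essentially the same approach as the paper, which derives the corollary in a single sentence: since a Disjoint Sets data structure operates in amortized constant expected time per operation~\cite{Doyle76,Yao85}, the $\alpha(n)$ factor in Lemma~\ref{lem:bidirected_complexity} disappears in expectation. Your write-up is considerably more detailed than the paper's --- you explicitly count the $\MakeSet$, $\Find$, and $\Union$ operations and verify they total $O(m)$ --- but this is exactly the reasoning the paper leaves implicit. Your caveat about the probabilistic model (whether the operation sequence generated by $\BidirectedAlgo$ matches the random model of~\cite{Doyle76,Yao85}) is a genuine technical point that the paper does not address; it simply takes the cited expected-time bound at face value.
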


%

\subsection{An $\Omega(m+n\cdot \alpha(n))$ Lower Bound}\label{subsec:bidirected_lower}
Theorem~\ref{them:bidirected_upper} implies that Dyck reachability on bidirected graphs 
can be solved in almost-linear time.
A theoretically interesting question is whether the problem can be solved in linear time in the 
worst case.
We answer this question in the negative by proving that every algorithm 
for the problem requires $\Omega(m+n\cdot \alpha(n))$ time, and
thereby proving that our algorithm $\BidirectedAlgo$ is indeed optimal wrt worst-case complexity.

\noindent{\bf The Separated Union-Find problem.}
A sequence $\sigma$ of $\Union$-$\Find$ operations is called \emph{separated}
if all $\Find$ operations occur at the end of $\sigma$.
Hence $\sigma=\sigma_1\circ \sigma_2$, where $\sigma_1$ contains all $\Union$ operations of $\sigma$.
We call $\sigma_1$ a \emph{union sequence} and $\sigma_2$ a \emph{find sequence}.
The \emph{Separated Union-Find} problem is the regular Union-Find problem over separated union-find sequences.
Note that this version of the problem has an \emph{offline} flavor, as, at the time when the algorithm is needed to produce output (i.e. when the suffix of $\Find$ operations starts) the input has been fixed (i.e., all $\Union$ operations are known).
We note that the Separated Union-Find problem is different from the Static Tree Set Union problem~\cite{Gabow85},
which restricts the type of allowed $\Union$ operations, and for which a linear time algorithm exists on the RAM model.
The following lemma states a lower bound on the worst-case complexity of the problem.

\begin{lemma}\label{lem:separated_union_find_lowerbound}
The Separated Union-Find problem over a universe of size $n$ and sequences of length $m$ has worst-case complexity $\Omega(m\cdot \alpha(n))$.
\end{lemma}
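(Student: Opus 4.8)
The plan is to derive the bound from the classical $\Omega(m\cdot\alpha(n))$ lower bound for the general Union-Find problem on pointer machines~\cite{Tarjan79,Banachowski80}, by observing that the hard instances witnessing that bound are --- or can be massaged into --- \emph{separated} sequences. It is essential to fix the computational model here: on a RAM the Separated Union-Find problem is trivially solvable in $O(m+n)$ time, since it amounts to computing the connected components of the graph whose edges are the $\Union$ pairs (build adjacency lists, run a graph search, label every element with its component, and answer each $\Find$ in $O(1)$). So the lower bound is inherently a pointer-machine statement, matching the model of~\cite{Tarjan79,Banachowski80}. This is also what distinguishes the problem from Static Tree Set Union~\cite{Gabow85}: there the union forest is handed over in advance, which permits a linear-time preprocessing of that forest and hence a linear algorithm on a RAM; in a separated sequence the $\Union$s precede all $\Find$s but may merge sets in an \emph{arbitrary} order, and it is precisely this freedom that keeps the problem super-linear on a pointer machine.

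Concretely, I would recall the shape of the classical adversarial construction: via a carefully chosen sequence of $\Union$ operations one builds a forest whose structure follows the Ackermann recursion, so that even after path compression some node remains $\Omega(\alpha(n))$ pointer-steps away from the root of its tree; then a batch of $\Theta(m)$ $\Find$ operations is issued, and the adversary can guarantee that each traverses a path of length $\Omega(\alpha(n))$, for a total of $\Omega(m\cdot\alpha(n))$. The key point is that all of this cost is paid during the $\Find$ phase, after the forest has been fully assembled by $\Union$s, so the witness sequence already has the form $\sigma=\sigma_1\circ\sigma_2$ with $\sigma_1$ consisting of $\Union$s and $\sigma_2$ of $\Find$s, i.e.\ it is a legal input to Separated Union-Find. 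We only need this for the regime $m=\Theta(n)$, where the refined bound $\Omega(m\cdot\alpha(m,n))$ reads as $\Omega(m\cdot\alpha(n))$; together with the trivial $\Omega(m)$ term this is what the bidirected-reachability lower bound will invoke. Where a given reference interleaves a few $\Union$s among the $\Find$s, I would argue those can be moved to the front of $\sigma_1$ without decreasing the cost, since they only refine the final partition and the adversary's choice of which elements to query in $\sigma_2$ can be adapted accordingly.

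The step I expect to be the main obstacle is exactly this last one: turning the cited witness into a literally separated sequence. Postponing a $\Find$ changes which set-name it must return, so one cannot simply shuffle operations around --- the reduction has to track how the demanded answers transform under the rearrangement, and the amortized accounting argument underlying the classical bound must be re-checked to confirm that deferring the $\Find$s (and front-loading any interleaved $\Union$s) leaves the $\Omega(\alpha(n))$-per-$\Find$ guarantee intact. Once this bookkeeping is pinned down, or once a version of the lower bound whose witness is already separated is located, the remainder is a direct appeal to the existing analysis.
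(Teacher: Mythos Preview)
Your approach matches the paper's: both reduce to the observation that Tarjan's hard instances in~\cite[Theorem~4.4]{Tarjan79} are already separated sequences (unions build the Ackermann-structured forest, then a block of finds pays $\Omega(\alpha(n))$ each). The paper's entire proof is that one sentence, so your core plan is exactly right.

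Two remarks. First, your fallback about ``moving interleaved unions to the front'' is unnecessary: Tarjan's construction literally performs all links before any finds, so no rearrangement is needed and the bookkeeping you flag as the main obstacle does not arise. Second, your observation about the computational model is both correct and sharper than the paper: on a RAM the Separated Union-Find problem \emph{is} $O(m+n)$ via connected components, so the $\Omega(m\cdot\alpha(n))$ bound---and hence the downstream bidirected-reachability lower bound---holds only in the pointer-machine model of~\cite{Tarjan79}. The paper does not make this explicit, so your clarification is a genuine improvement on the exposition, not a deviation from it.
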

\begin{proof}
The proof is essentially the proof of \cite[Theorem~4.4]{Tarjan79}, by observing that the sequences constructed there
to prove the lower bound are actually separated union-find sequences.
\end{proof}

\noindent{\bf The union graph $G^{\sigma_1}$.}
Let $\sigma_1$ be a union sequence over some universe $X$.
The \emph{union graph} of $\sigma_1$ is a $\Alphabet_1$-labeled bidirected graph $G^{\sigma_1}=(V^{\sigma_1},E^{\sigma_1})$, defined as follows
(see Figure~\ref{fig:bidirected_hardness} for an illustration).
\begin{compactenum}
\item The node set is $V^{\sigma_1}=X\cup \{z_i\}_{1\leq i\leq |\sigma_1|}$ where the nodes $z_i$ do not appear in $X$.
\item The edge set is $E^{\sigma_1}=\{(z_i, x_i, \CloseParenthesis), (z_i, y_i, \CloseParenthesis)\}_{1\leq i \leq |\sigma_1|}$, where $x_i,y_i\in X$ are the elements such that the $i$-th operation of $\sigma_1$ is $\Union(x_i, y_i)$.
\end{compactenum}

\begin{figure}[!h]
\centering
\begin{tikzpicture}[thick, >=latex,
pre/.style={<-,shorten >= 1pt, shorten <=1pt, thick},
post/.style={->,shorten >= 1pt, shorten <=1pt,  thick},
und/.style={very thick, draw=gray},
node1/.style={circle, minimum size=3.5mm, draw=black!100, line width=1pt, inner sep=0},
node2/.style={circle, minimum size=5mm, draw=black!100, fill=white!100, very thick, inner sep=0},
virt/.style={circle,draw=black!50,fill=black!20, opacity=0}]

\newcommand{\xdisposition}{-2.5}
\newcommand{\ydisposition}{0}
\newcommand{\xstep}{1.5}
\newcommand{\ystep}{1.1}

\node[] at (0, -1)	{$\sigma_1=\Union(u,v),~\Union(x,y),~\Union(w,v),~\Union(w,x)$};

\node[node2] (u) at (\xdisposition + 0*\xstep, \ydisposition + 0*\ystep) {$u$};
\node[node2] (v) at (\xdisposition + 1*\xstep, \ydisposition + 0*\ystep) {$v$};

\node[node2] (x) at (\xdisposition + 3*\xstep, \ydisposition + 0*\ystep) {$x$};
\node[node2] (y) at (\xdisposition + 4*\xstep, \ydisposition + 0*\ystep) {$y$};

\node[node2] (w) at (\xdisposition + 2*\xstep, \ydisposition + 0*\ystep) {$w$};

\node[node2] (z1) at (\xdisposition + 0.5*\xstep, \ydisposition + 1*\ystep) {$z_1$};
\node[node2] (z2) at (\xdisposition + 3.5*\xstep, \ydisposition + 1*\ystep) {$z_2$};
\node[node2] (z3) at (\xdisposition + 1.5*\xstep, \ydisposition + 1*\ystep) {$z_3$};
\node[node2] (z4) at (\xdisposition + 2.5*\xstep, \ydisposition + 1*\ystep) {$z_4$};

\draw[->, very thick, bend right=10] (z1) to node[left, pos=0.2] {$\CloseParenthesis$} (u);
\draw[->, very thick, bend left=10] (z1) to node[right, pos=0.2] {$\CloseParenthesis$} (v);

\draw[->, very thick, bend right=10] (z2) to node[left, pos=0.2] {$\CloseParenthesis$} (x);
\draw[->, very thick, bend left=10] (z2) to node[right, pos=0.2] {$\CloseParenthesis$} (y);

\draw[->, very thick, bend right=10] (z3) to node[left, pos=0.2] {$\CloseParenthesis$} (v);
\draw[->, very thick, bend left=10] (z3) to node[right, pos=0.2] {$\CloseParenthesis$} (w);

\draw[->, very thick, bend right=10] (z4) to node[left, pos=0.2] {$\CloseParenthesis$} (w);
\draw[->, very thick, bend left=10] (z4) to node[right, pos=0.2] {$\CloseParenthesis$} (x);

\end{tikzpicture}
\caption{A union sequence $\sigma_1$ and the corresponding graph $G^{\sigma_1}$.}\label{fig:bidirected_hardness}
\end{figure}

\noindent{\bf A lower bound for Dyck reachability on bidirected graphs.}
We are now ready to prove our lower bound.
The proof consists in showing that there exists no algorithm that solves the problem in $o(m\cdot \alpha(n))$ time.
Assume towards contradiction otherwise, and let $A'$ be an algorithm that solves the problem in time $o(m\cdot \alpha(n))$.
We construct an algorithm $A$ that solves the Separated Union-Find problem in the same time.

Let $\sigma=\sigma_1\circ \sigma_2$ be a separated union-find sequence, where $\sigma_1$ is a union sequence and $\sigma_2$ is a find sequence.
The algorithm $A$ operates as follows. It performs no operations until the whole of $\sigma_1$ has been revealed. 
Then, $A'$ constructs the union graph $G^{\sigma_1}$, and uses $A'$ to solve the 
Dyck reachability problem on $G^{\sigma_1}$.
Finally, every $\Find(x)$ operation encountered in $\sigma_2$ is handled by $A$ by using the answer of $A'$ on $G^{\sigma_1}$.

It is easy to see that $A$ handles the input sequence $\sigma$ correctly.
Indeed,  for any sequence of union operations $\Union(x_i, y_i),\dots, \Union(x_j, y_j)$ that bring two elements $x$ and $y$ to the same set,
the edges $(z_i,x_i,\CloseParenthesis), (z_i,y_i,\CloseParenthesis), \dots,  (z_j,x_j,\CloseParenthesis), (z_j,y_j,\CloseParenthesis)$
must bring $x$ and $y$ to the same DSCC of $G^{\Sigma_1}$.
Finally, the algorithm $A$ requires $O(m)$ time for constructing $G$ and answering all queries, plus $o(m\cdot \alpha(n))$ time for running $A'$ on $G^{\Sigma_1}$.
Hence $A$ operates in $o(m\cdot \alpha(n))$ time, which contradicts Lemma~\ref{lem:separated_union_find_lowerbound}.

We have thus arrived at the following theorem.

\begin{theorem}[Lower-bound]\label{them:bidirected_lower}
Any Dyck reachability algorithm for bidirected graphs with $n$ nodes and 
$m=\Omega(n)$ edges requires $\Omega(m+n\cdot \alpha(n))$ time in the worst case.
\end{theorem}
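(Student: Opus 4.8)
The plan is to prove the lower bound by a linear-time reduction from the Union-Find problem, whose classical $\Omega(m\cdot\alpha(n))$ amortized lower bound I will lift to Dyck reachability on bidirected graphs, and then to combine this with the trivial $\Omega(m)$ bound that comes from having to read the whole input.

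\textbf{Step 1 (a lower bound for the separated variant).} First I would isolate the variant of Union-Find in which every $\Find$ operation comes after all $\Union$ operations, and argue that it still has worst-case complexity $\Omega(m\cdot\alpha(n))$ over universes of size $n$ and operation sequences of length $m$. No new hardness construction is needed: the adversarial sequences underlying Tarjan's lower bound (\cite[Theorem~4.4]{Tarjan79}) can be taken to have exactly this form, so the bound transfers. This is the single ingredient imported as a black box, and verifying that the classical construction is indeed separated is the only delicate point of this step.

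\textbf{Step 2 (the reduction and its correctness).} Given a union sequence $\sigma_1$ over a universe $X$ with $|\sigma_1|=\ell$, build the $\Alphabet_1$-labeled bidirected graph $G^{\sigma_1}$ on node set $X\cup\{z_1,\dots,z_\ell\}$ by adding, for each operation $\Union(x_i,y_i)$, the edges $(z_i,x_i,\CloseParenthesis)$ and $(z_i,y_i,\CloseParenthesis)$ together with their bidirected mirrors $(x_i,z_i,\OpenParenthesis)$, $(y_i,z_i,\OpenParenthesis)$; this graph has $O(|X|+\ell)$ nodes, $O(\ell)$ edges, and is built in linear time. The key correctness claim is that for $x,y\in X$, the elements $x$ and $y$ lie in the same set after $\sigma_1$ iff they lie in the same DSCC of $G^{\sigma_1}$. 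The forward direction is immediate: a $\Union(x,y)$ at step $i$ gives the witness path $x\xrightarrow{\OpenParenthesis}z_i\xrightarrow{\CloseParenthesis}y$ with label $\OpenParenthesis\,\CloseParenthesis\in\Dyck_1$, and since by Remark~\ref{rem:bidirected_equivalence} Dyck reachability on bidirected graphs is an equivalence, transitivity finishes it. For the backward direction I would induct on the length of a witness Dyck path $P\colon x\Path y$ between two $X$-nodes: every outgoing edge of an $X$-node is an $\OpenParenthesis$-edge into some $z_i$, every outgoing edge of $z_i$ is a $\CloseParenthesis$-edge into $x_i$ or $y_i$, and $\OpenParenthesis\,\CloseParenthesis$ is already balanced, so $P$ must begin $x\xrightarrow{\OpenParenthesis}z_i\xrightarrow{\CloseParenthesis}w$ with $w\in\{x_i,y_i\}$ (so $x$ and $w$ were unioned at step $i$) followed by a strictly shorter Dyck path from $w\in X$ to $y$; induction does the rest. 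The rigid two-layer structure of $G^{\sigma_1}$ is exactly what prevents spurious DSCC identifications.

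\textbf{Step 3 (assembling the bound).} Suppose some algorithm solved Dyck reachability on bidirected graphs in $o(m+n\cdot\alpha(n))$ time; then on sparse inputs it runs in $o(n\cdot\alpha(n))$ time. Taking a hard separated instance with $\ell=\Theta(|X|)$ (the regime in which Step~1 gives $\Omega(\ell\cdot\alpha(|X|))=\Omega(|X|\cdot\alpha(|X|))$), the graph $G^{\sigma_1}$ is sparse with $n=\Theta(m)=\Theta(|X|)$, so running the assumed algorithm on $G^{\sigma_1}$ after $\sigma_1$ is revealed and answering each $\Find(x)$ by a DSCC-membership test would solve Separated Union-Find in $o(m\cdot\alpha(n))$ time, contradicting Step~1. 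Hence $\Omega(n\cdot\alpha(n))$ time is unavoidable already on sparse instances; combined with the trivial $\Omega(m)$ lower bound (any correct algorithm must examine every edge), this yields the claimed $\Omega(m+n\cdot\alpha(n))$ worst-case bound. I expect the only genuine obstacle to be Step~1 --- confirming that Tarjan's worst-case Union-Find sequences may be assumed separated --- while the reduction and the short structural induction of Step~2, together with the bookkeeping of Step~3, are routine.
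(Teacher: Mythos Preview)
Your proposal is correct and follows essentially the same approach as the paper: the paper likewise appeals to the separated variant of Union-Find (observing that Tarjan's hard sequences are already separated), builds the identical union graph $G^{\sigma_1}$ with the auxiliary $z_i$ nodes, and derives the contradiction from a hypothetical sub-$\alpha(n)$ Dyck-reachability algorithm. Your Step~2 in fact supplies more detail than the paper does on the backward direction of the DSCC--Union-Find correspondence, and your Step~3 is slightly more careful in separating the trivial $\Omega(m)$ bound from the $\Omega(n\cdot\alpha(n))$ bound, but the underlying argument is the same.
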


Theorem~\ref{them:bidirected_lower} together with Theorem~\ref{them:bidirected_upper} 
yield the following corollary.

\begin{corollary}[Optimality]\label{cor:bidirected_optimal}
The Dyck reachability algorithm $\BidirectedAlgo$ for bidirected graphs is optimal wrt to worst-case complexity.
\end{corollary}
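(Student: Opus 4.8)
The plan is simply to juxtapose the two bounds already established for $\BidirectedAlgo$ on bidirected graphs and observe that they match up to a constant factor. First I would invoke Theorem~\ref{them:bidirected_upper}, which tells us that $\BidirectedAlgo$ is correct and terminates in $O(m + n\cdot \alpha(n))$ time (using $O(m)$ space) on any $\Alphabet_k$-labeled bidirected graph with $n$ nodes and $m = \Omega(n)$ edges. Next I would invoke Theorem~\ref{them:bidirected_lower}, which asserts that \emph{every} algorithm solving Dyck reachability on bidirected graphs with $m = \Omega(n)$ edges must spend $\Omega(m + n\cdot \alpha(n))$ time in the worst case.

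The conclusion is then immediate: the worst-case running time of $\BidirectedAlgo$ lies within a constant factor of the worst-case running time of \emph{any} algorithm for the problem, which is exactly what it means for $\BidirectedAlgo$ to be worst-case optimal. I would also remark, for completeness, that the restriction $m=\Omega(n)$ carries no loss of generality in the bidirected setting: every node that is to participate meaningfully in a DSCC has an incident labeled edge, so isolated nodes can be stripped in linear time without affecting the answer, and in any case the output (a list of DSCCs) already has size $\Theta(n)$, so $\Omega(n)$ time is unavoidable regardless of $m$.

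There is essentially no obstacle at this stage — all the real work was done in proving Theorems~\ref{them:bidirected_upper} and~\ref{them:bidirected_lower} (in particular, the amortized analysis via the potential function $\Potential$ in Lemma~\ref{lem:S_bound}, the preprocessing trick that shaves the $\alpha(n)$ factor off the $m$ term, and the reduction from the Separated Union-Find problem for the lower bound). The corollary is purely the observation that the upper and lower bounds coincide, so the proof is a one-line combination of the two preceding theorems.
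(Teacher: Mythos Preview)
Your proposal is correct and matches the paper's approach exactly: the paper simply states that Theorem~\ref{them:bidirected_upper} together with Theorem~\ref{them:bidirected_lower} yield the corollary, with no further argument. Your additional remarks about the $m=\Omega(n)$ assumption and the underlying lemmas are accurate but go beyond what the paper itself provides.
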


\section{Dyck Reachability on General Graphs}\label{sec:dyck_hardness}
In this section we present a hardness result regarding the Dyck reachability problem on general graphs, as well as on graphs of constant treewidth.


\noindent{\bf Complexity of Dyck reachability.} 
Dyck reachability on general graphs is one of the most standard algorithmic formulations of various static analyses.
The problem is well-known to admit a cubic-time solution, while the currently best bound is $O(n^3/\log n)$ due to~\cite{Chaudhuri08}.
Dyck reachability is also known to be 2NPDA-hard~\cite{Heintze97}, which yields a conditional cubic lower bound wrt polynomial improvements.
Here we investigate further the complexity of Dyck reachability. 
We prove that Dych reachability is Boolean Matrix Multiplication (BMM)-hard.
Note that since Dyck reachability is a combinatorial graph problem, techniques such as 
fast-matrix multiplication (e.g. Strassen's algorithm~\cite{Strassen69}) are unlikely to be applicable.
Hence we consider combinatorial (i.e., discrete, graph-theoretic) algorithms.
The standard BMM-conjecture~\cite{Lee02,HKNS15,WilliamsW10,AbboudW14} states that there is no 
truly sub-cubic ($O(n^{3-\delta})$, for $\delta>0$) combinatorial algorithm for Boolean Matrix Multiplication.
Given this conjecture, various algorithmic works establish conditional hardness results.
Here we show that Dyck reachability is BMM-hard on general graphs, which yields a new conditional cubic lower bound for the problem.
Additionally, we show that BMM hardness also holds for Dyck reachability on graphs of constant treewidth.
We establish this by showing Dyck reachability on general graphs is hard as CFL parsing.

%

\begin{figure}[!h]
\begin{subfigure}[b]{.3\textwidth}
\centering
\begin{tikzpicture}[thick, >=latex,
pre/.style={<-,shorten >= 1pt, shorten <=1pt, thick},
post/.style={->,shorten >= 1pt, shorten <=1pt,  thick},
und/.style={very thick, draw=gray},
node1/.style={circle, minimum size=3.5mm, draw=black!100, line width=1pt, inner sep=0},
node2/.style={circle, minimum size=5mm, draw=black!100, fill=white!100, very thick, inner sep=0},
gadget/.style={rectangle, minimum width=8mm, minimum height=11mm, draw=black!70, fill=white!100, very thick, inner sep=0, dashed},
virt/.style={circle,draw=black!50,fill=black!20, opacity=0}]

\newcommand{\xdisposition}{5}
\newcommand{\ydisposition}{0}
\newcommand{\xstep}{1.3}
\newcommand{\ystep}{0.9}

\node[] at (-1.5,0.5) {
$
\begin{aligned}
\StartNonTerminal &\to  \mathcal{T} ~ \mathcal{B}\\
\mathcal{T} &\to \mathcal{A} ~ \StartNonTerminal\\
\mathcal{A} &\to  a\\
\mathcal{B} &\to  b
\end{aligned}
$
};
\end{tikzpicture}
\caption{}\label{subfig:grammar}
\end{subfigure}
\begin{subfigure}[b]{.6\textwidth}
\centering
\begin{tikzpicture}[thick, >=latex,
pre/.style={<-,shorten >= 1pt, shorten <=1pt, thick},
post/.style={->,shorten >= 1pt, shorten <=1pt,  thick},
und/.style={very thick, draw=gray},
node1/.style={circle, minimum size=3.5mm, draw=black!100, line width=1pt, inner sep=0},
node2/.style={circle, minimum size=5mm, draw=black!100, fill=white!100, very thick, inner sep=0},
gadget/.style={rectangle, minimum width=8mm, minimum height=11mm, draw=black!70, fill=white!100, very thick, inner sep=0, dashed},
virt/.style={circle,draw=black!50,fill=black!20, opacity=0}]

\newcommand{\xdisposition}{5}
\newcommand{\ydisposition}{0}
\newcommand{\xstep}{1.3}
\newcommand{\ystep}{0.9}

\node	[node2]		(x)	at	(\xdisposition + 3*\xstep, \ydisposition + 0.5*\ystep)	{$x$};
\node	[node2]		(y)	at	(\xdisposition + -2*\xstep, \ydisposition + 0.5*\ystep)	{$y$};
\node	[node2]		(S)	at	(\xdisposition + 1*\xstep, \ydisposition + 2*\ystep)	{$\StartNonTerminal$};
\node	[node2]		(T)	at	(\xdisposition + 1*\xstep, \ydisposition + 1*\ystep)	{$\mathcal{T}$};
\node	[node2]		(A)	at	(\xdisposition + 1*\xstep, \ydisposition - 0*\ystep)	{$\mathcal{A}$};
\node	[node2]		(B)	at	(\xdisposition + 1*\xstep, \ydisposition - 1*\ystep)	{$\mathcal{B}$};

\node	[node2]		(TT)	at	(\xdisposition + -0.5*\xstep, \ydisposition + 0.75*\ystep)	{};

\node	[node2]		(SS)	at	(\xdisposition + -0.5*\xstep, \ydisposition + 1.75*\ystep)	{};

\draw[->, thick, bend right=20] (x) to node[above] {$\CloseParenthesis_{\StartNonTerminal}$} (S);
\draw[->, thick, bend right=10] (x) to node[above] {$\CloseParenthesis_{\mathcal{T}}$} (T);
\draw[->, thick, bend left=10] (x) to node[below] {$\CloseParenthesis_{\mathcal{A}}$} (A);
\draw[->, thick, bend left=20] (x) to node[below] {$\CloseParenthesis_{\mathcal{B}}$} (B);

\draw[->, thick, bend left=20] (B) to node[below] {$\OpenParenthesis_{b}$} (y);
\draw[->, thick, bend left=10] (A) to node[below] {$\OpenParenthesis_{a}$} (y);

\draw[->, thick] (T) to node[above] {$\OpenParenthesis_{\mathcal{\StartNonTerminal}}$} (TT);
\draw[->, thick] (TT) to node[above] {$\OpenParenthesis_{A}$} (y);

\draw[->, thick] (S) to node[above] {$\OpenParenthesis_{\mathcal{B}}$} (SS);
\draw[->, thick, bend right=10] (SS) to node[above] {$\OpenParenthesis_{\mathcal{T}}$} (y);

\end{tikzpicture}
\caption{}\label{subfig:gadget}
\end{subfigure}

\vspace{0.3cm}

\begin{subfigure}[b]{\textwidth}
\centering
\begin{tikzpicture}[thick, >=latex,
pre/.style={<-,shorten >= 1pt, shorten <=1pt, thick},
post/.style={->,shorten >= 1pt, shorten <=1pt,  thick},
und/.style={very thick, draw=gray},
node1/.style={circle, minimum size=7mm, draw=black!100, fill=white!100, very thick, inner sep=0},
node2/.style={circle, minimum size=7mm, draw=black!100, fill=white!100, very thick, inner sep=0},
gadget/.style={rectangle, minimum width=9mm, minimum height=11mm, draw=black!70, fill=white!100, very thick, inner sep=0, dashed},
virt/.style={circle,draw=black!50,fill=black!20, opacity=0}]

\newcommand{\xdisposition}{0}
\newcommand{\ydisposition}{-4.8}
\newcommand{\xstep}{1.5}
\newcommand{\ystep}{0.9}

\node[node2] (0) at	(\xdisposition + -1*\xstep, \ydisposition + 0.5*\ystep)	{$v$};
\node[node2] (1) at	(\xdisposition + 0*\xstep, \ydisposition + 0.5*\ystep)	{$u_0$};
\node[node2] (2) at	(\xdisposition +1*\xstep, \ydisposition + 0.5*\ystep)	{$u_1$};
\node[node2] (3) at	(\xdisposition +2*\xstep, \ydisposition + 0.5*\ystep)	{$u_2$};
\node[] (ax1) at (\xdisposition +3*\xstep, \ydisposition + 0.5*\ystep)	{};
\node[] at (\xdisposition +4*\xstep, \ydisposition + 0.5*\ystep)	{$\cdots$};
\node[node2] (n-1) at	(\xdisposition +5*\xstep, \ydisposition + 0.5*\ystep)	{$u_{n-1}$};
\node[node2] (n) at	(\xdisposition +6*\xstep, \ydisposition + 0.5*\ystep)	{$u_n$};

\draw[->, thick] (0) to node[below] {$\OpenParenthesis_{\StartNonTerminal}$} (1);
\draw[->, thick] (1) to node[below] {$\CloseParenthesis_{s_1}$} (2);
\draw[->, thick] (2) to node[below] {$\CloseParenthesis_{s_2}$} (3);
\draw[->, thick] (3) to node[below] {$\CloseParenthesis_{s_3}$} (ax1);
\draw[->, thick] (n-1) to node[below] {$\CloseParenthesis_{s_n}$} (n);

\node[gadget] (g0) at	(\xdisposition + 0*\xstep, \ydisposition + 1.8*\ystep+0.53)	{};
\node[node1] (x0) at	(\xdisposition + 0*\xstep, \ydisposition + 1.8*\ystep)	{$x_0$};
\node[node1] (y0) at	(\xdisposition + 0*\xstep, \ydisposition + 1.8*\ystep+1.08)	{$y_0$};
\draw[->, thick] (1) to (x0);
\draw[->, thick, out=150, in=140] (y0) to (1);

\node[gadget] (g1) at	(\xdisposition + 1*\xstep, \ydisposition + 1.8*\ystep+0.53)	{};
\node[node1] (x1) at	(\xdisposition + 1*\xstep, \ydisposition + 1.8*\ystep)	{$x_1$};
\node[node1] (y1) at	(\xdisposition + 1*\xstep, \ydisposition + 1.8*\ystep+1.08)	{$y_1$};
\draw[->, thick] (2) to (x1);
\draw[->, thick, out=150, in=140] (y1) to (2);

\node[gadget] (g2) at	(\xdisposition + 2*\xstep, \ydisposition + 1.8*\ystep+0.53)	{};
\node[node1] (x2) at	(\xdisposition + 2*\xstep, \ydisposition + 1.8*\ystep)	{$x_2$};
\node[node1] (y2) at	(\xdisposition + 2*\xstep, \ydisposition + 1.8*\ystep+1.08)	{$y_2$};
\draw[->, thick] (3) to (x2);
\draw[->, thick, out=150, in=140] (y2) to (3);

\node[gadget] (gn-1) at	(\xdisposition + 5*\xstep, \ydisposition + 1.8*\ystep+0.53)	{};
\node[node1] (xn-1) at	(\xdisposition + 5*\xstep, \ydisposition + 1.8*\ystep)	{$x_{n-1}$};
\node[node1] (yn-1) at	(\xdisposition + 5*\xstep, \ydisposition + 1.8*\ystep+1.08)	{$y_{n-1}$};
\draw[->, thick] (n-1) to (xn-1);
\draw[->, thick, out=150, in=140] (yn-1) to (n-1);

\end{tikzpicture}
\caption{}\label{subfig:parse_graph}
\end{subfigure}
\caption{\textbf{(\ref{subfig:grammar})} A grammar $\Grammar$ for the language $a^nb^n$, \textbf{(\ref{subfig:gadget})} The gadget graph $G^{\Grammar}$, \textbf{(\ref{subfig:parse_graph})} The parse graph $G_{s}^{\Grammar}$, given a string $s=s_1,\dots d_n$.}\label{fig:parsing_to_dyck}
\end{figure}
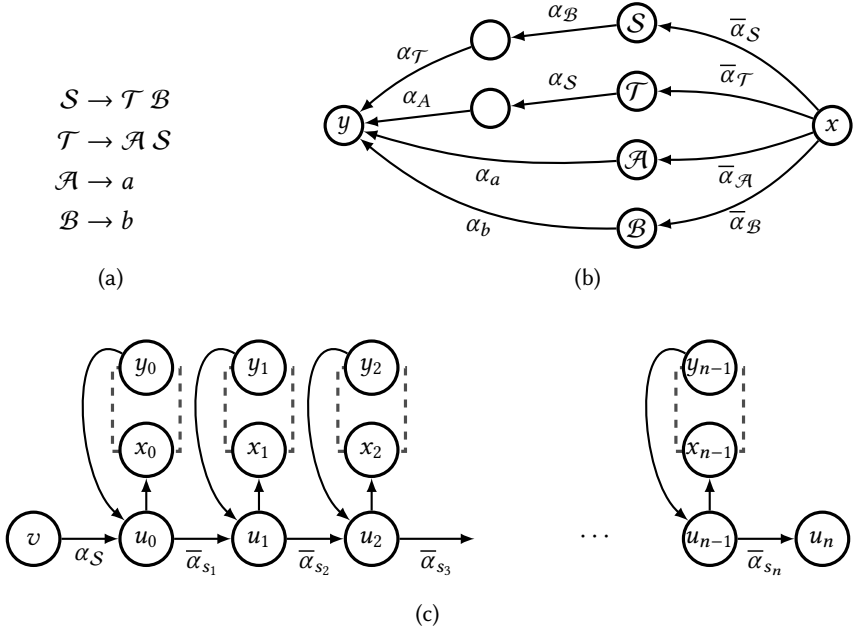

\noindent{\bf The gadget graph $G^{\Grammar}$.}
Given a Context-free grammar $\Grammar$ in Chomsky normal form, we construct the \emph{gadget graph} $G^{\Grammar}=(V^{\Grammar}, E^{\Grammar})$ as follows (see Figure~\ref{fig:parsing_to_dyck} (\ref{subfig:grammar}), (\ref{subfig:gadget}) for an illustration).
\begin{compactenum}
\item The node set $V^{\Grammar}$ contains two distinguished nodes $x,y$, together with a node $x_i$ for the $i$-th production rule $p_i$. Additionally, if $p_i$ is of the form $\mathcal{A}\to \mathcal{B}~\mathcal{C}$, then $V^{\Grammar}$ contains a node $y_i$.
\item The edge set $E^{\Grammar}$ contains an edge $(x, x_i, \CloseParenthesis_{\mathcal{A}})$, where $\mathcal{A}$ is the left hand side symbol of the $i$-th production rule $p_i$ of $\Grammar$.
Additionally, 
\begin{compactenum}
\item if $p_i$ is of the form $\mathcal{A} \to a$, then $E^{\Grammar}$ contains an edge $(x_i, y, \OpenParenthesis_a)$, else
\item if $p_i$ is of the form $\mathcal{A} \to \mathcal{B}~\mathcal{C}$, then $E^{\Grammar}$ contains the edges $(x_i, y_i, \OpenParenthesis_{\mathcal{C}})$ and $(y_i, y, \OpenParenthesis_{\mathcal{B}})$.
\end{compactenum}
\end{compactenum}

\noindent{\bf The parse graph $G^{\Grammar}_s$.}
Given a grammar $\Grammar$ and an input string $s=s_1,\dots s_n$,
we construct the \emph{parse graph} $G^{\Grammar}_s=(V^{\Grammar}_s, E^{\Grammar}_s)$ as follows.
The graph consists of two parts.
The first part is a line graph that contains nodes $v, u_0, u_1,\dots u_n$,
with the edges $(v, u_0, \OpenParenthesis_{\StartNonTerminal})$ and $(u_{i-1}, u_{i}, \CloseParenthesis_{s_i} )$ for all $1\leq i\leq n$.
The second part consists of a $n$ copies of the gadget graph $G^{\Grammar}$, counting from $0$ to $n-1$.
Finally, we have a pair of edges $(u_i, x_i, \epsilon)$, $(y_i, u_i, \epsilon)$ for every $0\leq i < n$,
where $x_i$ (resp. $y_i$) is the distinguished $x$ node (resp. $y$ node) of the $i$-th gadget graph.
See Figure~\ref{fig:parsing_to_dyck} (\ref{subfig:parse_graph}) for an illustration.

\begin{lemma}\label{lem:parse_graph_correctness}
The node $u_n$ is Dyck-reachable from node $v$ iff $s$ is generated by $\Grammar$.
\end{lemma}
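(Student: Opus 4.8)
The plan is to prove a stronger, uniformly quantified statement by induction and then specialize. For a symbol $X$ of $\Grammar$ (nonterminal) and indices $0\le i\le j\le n$, I would show the equivalence of: (a)~$X\Produces s_{i+1}s_{i+2}\cdots s_j$; and (b)~there is a path $P$ from the distinguished node $x_i$ of the $i$-th copy of $G^{\Grammar}$ to $u_j$ with $\Label(P)=\CloseParenthesis_X\cdot w$ for some $w\in\Dyck_k$. Given this, the lemma is the case $X=\StartNonTerminal$, $i=0$, $j=n$: the only edge leaving $v$ is $(v,u_0,\OpenParenthesis_{\StartNonTerminal})$ and the only way the leading $\OpenParenthesis_{\StartNonTerminal}$ can be matched is by a $\CloseParenthesis_{\StartNonTerminal}$ read inside a gadget copy on an edge out of some $x_\ell$; combined with the unique edge $(u_0,x_0,\epsilon)$ this forces any witness path to pass immediately into the $0$-th gadget, so $u_n$ is Dyck-reachable from $v$ iff there is a path $P$ from $x_0$ to $u_n$ with $\OpenParenthesis_{\StartNonTerminal}\cdot\Label(P)\in\Dyck_k$, i.e.\ with $\Label(P)=\CloseParenthesis_{\StartNonTerminal}\cdot w$, $w\in\Dyck_k$, which is exactly~(b). (The degenerate case $s=\epsilon$ is handled separately under the standard assumption that a Chomsky-normal-form $\Grammar$ does not generate $\epsilon$; note also that $u_n$ is a sink of $G^{\Grammar}_s$.)

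For the direction (a)$\Rightarrow$(b) I would induct on the length of a leftmost derivation of $s_{i+1}\cdots s_j$ from $X$, using that $\Grammar$ is in Chomsky normal form. If the first (and only) step uses $X\to a$, then $j=i+1$ and $a=s_{i+1}$, and the path $x_i\to(\cdot)\to y_i\to u_i\to u_{i+1}$ whose edge labels are $\CloseParenthesis_X,\OpenParenthesis_a,\epsilon,\CloseParenthesis_{s_{i+1}}$ has label $\CloseParenthesis_X\cdot(\OpenParenthesis_a\CloseParenthesis_a)$, as required. If the derivation is $X\Rightarrow\mathcal{B}\,\mathcal{C}\Rightarrow^{*}(s_{i+1}\cdots s_m)(s_{m+1}\cdots s_j)$ with $\mathcal{B}\Produces s_{i+1}\cdots s_m$ and $\mathcal{C}\Produces s_{m+1}\cdots s_j$, the induction hypothesis gives paths $P_{\mathcal{B}}\colon x_i\Path u_m$ and $P_{\mathcal{C}}\colon x_m\Path u_j$ with labels $\CloseParenthesis_{\mathcal{B}}w_{\mathcal{B}}$ and $\CloseParenthesis_{\mathcal{C}}w_{\mathcal{C}}$, $w_{\mathcal{B}},w_{\mathcal{C}}\in\Dyck_k$; concatenating the edges $x_i\to(\cdot)\to(\cdot)\to y_i\to u_i\to x_i$ (labels $\CloseParenthesis_X,\OpenParenthesis_{\mathcal{C}},\OpenParenthesis_{\mathcal{B}},\epsilon,\epsilon$) with $P_{\mathcal{B}}$, then the edge $u_m\to x_m$ (label $\epsilon$) with $P_{\mathcal{C}}$, yields a path from $x_i$ to $u_j$ with label $\CloseParenthesis_X\cdot\bigl(\OpenParenthesis_{\mathcal{C}}\OpenParenthesis_{\mathcal{B}}\CloseParenthesis_{\mathcal{B}}w_{\mathcal{B}}\CloseParenthesis_{\mathcal{C}}w_{\mathcal{C}}\bigr)$, and the parenthesized suffix lies in $\Dyck_k$ since $\Dyck_k$ is closed under concatenation and under wrapping in a matched pair.

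The direction (b)$\Rightarrow$(a) is where the work is, since a priori a witness path could be arbitrarily convoluted; the argument rests on the rigidity of $G^{\Grammar}_s$ together with uniqueness of Dyck matching. I would first record the relevant facts: every line node $u_\ell$ has only the forward edge $u_\ell\to u_{\ell+1}$ (labelled $\CloseParenthesis_{s_{\ell+1}}$) and the edge $u_\ell\to x_\ell$ (labelled $\epsilon$); every edge out of an $x_\ell$ reads a closing parenthesis $\CloseParenthesis_{\mathcal{A}}$ with $\mathcal{A}$ a nonterminal; the unique edge out of each $y_\ell$ goes back to $u_\ell$; and inside a gadget copy the only $x_\ell$-to-$y_\ell$ paths read $\CloseParenthesis_{\mathcal{A}}\OpenParenthesis_a$ or $\CloseParenthesis_{\mathcal{A}}\OpenParenthesis_{\mathcal{C}}\OpenParenthesis_{\mathcal{B}}$ according to the production used. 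Now, by induction on $|P|$: a path $P$ from $x_i$ with $\Label(P)=\CloseParenthesis_X w$, $w\in\Dyck_k$, must begin with the edge $x_i\to(\cdot)$ reading $\CloseParenthesis_X$ for some production $p$ with left-hand side $X$ — this edge supplies the leading symbol. If $p=(X\to a)$, the next two edges are forced, returning to $u_i$ with $\OpenParenthesis_a$ as the top of the (relative) stack; the only outgoing edge of $u_i$ that keeps $\Label(P)$ a prefix of a word in $\Dyck_k$ is $u_i\to u_{i+1}$, forcing $s_{i+1}=a$; and since the relative suffix is then balanced while every continuation from a line node begins with a closing parenthesis, $P$ must stop, so $j=i+1$ and $X\to a=s_{i+1}$ witnesses~(a). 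If $p=(X\to\mathcal{B}\,\mathcal{C})$, the next three edges push $\OpenParenthesis_{\mathcal{C}}$ then $\OpenParenthesis_{\mathcal{B}}$ and return to $u_i$; the unique $\CloseParenthesis_{\mathcal{B}}$ matching the pushed $\OpenParenthesis_{\mathcal{B}}$ is forced to be read on the edge out of $x_i$ immediately afterwards (it cannot be a line edge, as $\mathcal{B}$ is not a terminal), and from that point the portion of $P$ up to the line node where the relative suffix of $\mathcal{B}$ first rebalances is itself a path of the form~(b) for $\mathcal{B}$, ending at some $u_m$; the remainder is a path of the form~(b) for $\mathcal{C}$ from $x_m$ to $u_j$. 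Both sub-paths are strictly shorter than $P$, so the induction hypothesis gives $\mathcal{B}\Produces s_{i+1}\cdots s_m$ and $\mathcal{C}\Produces s_{m+1}\cdots s_j$, hence $X\Rightarrow\mathcal{B}\,\mathcal{C}\Produces s_{i+1}\cdots s_j$. The main obstacle I anticipate is precisely making rigorous the claim that this decomposition into sub-paths of the form~(b) is forced and that each piece ends on the line; once the rigidity facts above are set up, this reduces to a careful tracking of the Dyck-matching depth along $P$ relative to the forced adjacency structure.
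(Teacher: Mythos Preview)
Your proposal is correct. The strengthened invariant (a)$\Leftrightarrow$(b) and the two inductions you outline go through; the rigidity facts you isolate (every edge out of an $x_\ell$ reads a closing parenthesis indexed by a nonterminal, line edges read closing parentheses indexed by terminals, and every non-$\epsilon$ continuation from a line node reads a closing parenthesis) are exactly what is needed to force the decomposition in direction (b)$\Rightarrow$(a). In particular, after the segment that pushes $\OpenParenthesis_{\mathcal{C}}\OpenParenthesis_{\mathcal{B}}$ and returns to $u_i$, the next non-$\epsilon$ symbol must be a closing parenthesis, hence must be $\CloseParenthesis_{\mathcal{B}}$, and the edge carrying the matching $\CloseParenthesis_{\mathcal{C}}$ later on is out of some $x_m$, whose unique predecessor is $u_m$; so the cut you describe lands on a line node as claimed.

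The paper's argument is more compressed and phrased differently: rather than an explicit induction, it records that for any path $P\colon v\Path u_n$ with $\Label(P)\in\Dyck$, the substring $\ClosingLabel(P)$ of closing parentheses of $\Label(P)$ is precisely a pre-order traversal of a derivation tree of $s$ in $\Grammar$, and conversely. Your induction is the explicit unwinding of this bijection: each visit to a gadget via an edge labelled $\CloseParenthesis_{\mathcal{A}}$ corresponds to expanding $\mathcal{A}$ at the current leftmost position; a terminal rule then forces the next line edge, while a binary rule pushes two opens whose Dyck matching forces the left child to be processed before the right, exactly the pre-order pattern. Your approach yields a self-contained argument that does not ask the reader to see the traversal correspondence, at the price of more bookkeeping; the paper's formulation makes the global structure of witness paths visible in one line but leaves the local forcing (your rigidity facts) implicit.
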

\begin{proof}
Given a path $P$, we denote by $\ClosingLabel(P)$ the substring of $\Label(P)$ that consists of all the closing-parenthesis symbols of $\Label(P)$.
The proof follows directly from the following observation:
the parse graph $G^{\Grammar}_s$ contains a path $P:v\Path u_n$ with $\Label(P)\in \Dyck$ if and only if $\ClosingLabel(P)$ corresponds to a pre-order traversal of a derivation tree of the string $s$ wrt the grammar $\Grammar$.
\end{proof}

\begin{theorem}\label{them:dyck_hard}
If there exists a combinatorial algorithm that solves the pair Dyck reachability problem in time $\mathcal{T}(n)$, where $n$ is the number of nodes of the input graph, then there exists a combinatorial algorithm that solves the CFL parsing problem in time $O(n+\mathcal{T}(n))$.
\end{theorem}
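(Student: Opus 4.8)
The plan is a direct reduction from CFL parsing to pair Dyck reachability, using the parse graph construction and Lemma~\ref{lem:parse_graph_correctness} as the correctness certificate. Fix an instance of CFL parsing: a context-free grammar $\Grammar$ (of size independent of the input, as is standard, so that the number $k$ of parenthesis types we introduce is $O(1)$) and a string $s = s_1 \cdots s_n$; the task is to decide whether $s$ is generated by $\Grammar$.

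First I would put $\Grammar$ into Chomsky normal form; since $\Grammar$ is fixed this is an $O(1)$ preprocessing step, and the only care needed is the degenerate membership of the empty string, which is decided directly from $\Grammar$. Next I would construct the parse graph $G^{\Grammar}_s$ in $O(n)$ time: the line portion $v, u_0, \dots, u_n$ with edges $(v, u_0, \OpenParenthesis_{\StartNonTerminal})$ and $(u_{i-1}, u_i, \CloseParenthesis_{s_i})$, together with $n$ fresh copies of the constant-size gadget graph $G^{\Grammar}$, each wired to a vertex $u_i$ via the $\epsilon$-edges $(u_i, x_i, \epsilon)$ and $(y_i, u_i, \epsilon)$. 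The resulting graph has $N = \Theta(n)$ nodes and $\Theta(n)$ edges over a fixed alphabet $\Alphabet_k$.

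Then I would run the assumed combinatorial pair Dyck reachability algorithm on $G^{\Grammar}_s$ with source $v$ and target $u_n$; by Lemma~\ref{lem:parse_graph_correctness} its answer is exactly whether $s$ is generated by $\Grammar$, so we return it verbatim. Building the graph is itself a combinatorial operation, so the composed algorithm is combinatorial. Its running time is $O(n)$ for the construction plus $\mathcal{T}(N) = \mathcal{T}(\Theta(n))$ for the oracle call; for any polynomially bounded $\mathcal{T}$ --- the only regime of interest, since we compare sub-cubic against cubic bounds --- the constant-factor blow-up from $n$ to $N$ is absorbed, giving a total of $O(n + \mathcal{T}(n))$, which is the claimed bound.

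There is essentially no obstacle in carrying out this direction, since the combinatorial heart of the argument is Lemma~\ref{lem:parse_graph_correctness}, which I may assume; absent that lemma the difficulty would lie precisely in verifying that a Dyck path $v \Path u_n$ in $G^{\Grammar}_s$ encodes a pre-order traversal of a derivation tree of $s$. The remaining points are minor: the degenerate cases ($s = \epsilon$ or $L(\Grammar) = \emptyset$), and the fact that the $\epsilon$-edges wiring the gadgets are legitimate labels in $\Alphabet_k$ that contribute the empty string to the Dyck word along a path. I would additionally observe that $G^{\Grammar}_s$ is a path with $O(1)$-size gadgets hanging off its vertices, hence of treewidth $O(1)$; thus the same reduction shows Dyck reachability restricted to constant-treewidth graphs is still as hard as CFL parsing, which --- combined with the known BMM-hardness of CFL parsing~\cite{Lee02} --- yields the conditional cubic lower bounds even in the bounded-treewidth setting.
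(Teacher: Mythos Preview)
Your proposal is correct and follows exactly the paper's approach: construct the parse graph $G^{\Grammar}_s$ of $\Theta(n)$ nodes in $O(n)$ time, invoke Lemma~\ref{lem:parse_graph_correctness}, and call the assumed Dyck-reachability oracle on the pair $(v,u_n)$. Your additional remarks on the degenerate cases, the absorption of the constant blow-up under a polynomial $\mathcal{T}$, and the constant-treewidth consequence are all in line with the paper (the last being precisely Remark~\ref{rem:dyck_hard_tw}).
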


Since CFL-parsing is BMM-hard, by combining Theorem~\ref{them:dyck_hard} with~\cite[Theorem~2]{Lee02}
we obtain the following corollary.

\begin{corollary}[BMM-hardness: Conditional cubic lower bound]\label{cor:dyck_hard}
For any fixed $\delta>0$, if there is a combinatorial algorithm that solves the pair Dyck reachability problem in $O(n^{3-\delta})$ time,
then there is a combinatorial algorithm that solves Boolean Matrix Multiplication in $O(n^{3-\delta/3})$ time.
\end{corollary}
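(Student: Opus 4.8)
The plan is to prove the corollary by composing two reductions: the reduction of Theorem~\ref{them:dyck_hard}, which turns a pair Dyck-reachability solver into a context-free parser, and Lee's reduction~\cite[Theorem~2]{Lee02}, which turns a context-free parser into a Boolean Matrix Multiplication algorithm. Fix $\delta>0$ and suppose there is a combinatorial algorithm $\mathcal{A}$ that decides pair Dyck reachability on an $n$-node graph in time $\mathcal{T}(n)=O(n^{3-\delta})$.

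First I would instantiate Theorem~\ref{them:dyck_hard} with $\mathcal{A}$. Given a Chomsky-normal-form grammar $\Grammar$ and a string $s$, recall that the parse graph $G^{\Grammar}_s$ is the line graph on $v,u_0,\dots,u_{|s|}$ together with $|s|$ attached copies of the gadget graph $G^{\Grammar}$ (each of size $O(|\Grammar|)$), so it has $O(n)$ nodes, where $n$ denotes the size of the parsing instance, and it is built in $O(n)$ time. By Lemma~\ref{lem:parse_graph_correctness}, $u_{|s|}$ is Dyck-reachable from $v$ in $G^{\Grammar}_s$ precisely when $\Grammar$ generates $s$. Hence running $\mathcal{A}$ on $G^{\Grammar}_s$ yields a combinatorial context-free parser whose running time is $O(n+\mathcal{T}(O(n)))=O(n^{3-\delta})$ — exactly the conclusion of Theorem~\ref{them:dyck_hard} for this choice of $\mathcal{T}$.

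Second, I would feed this parser into Lee's reduction~\cite[Theorem~2]{Lee02}: a combinatorial parser running in time $O(n^{3-\epsilon})$ yields a combinatorial algorithm for $n\times n$ Boolean Matrix Multiplication running in time $O(n^{3-\epsilon/3})$, the cube-root loss in the exponent being inherited entirely from the way that reduction encodes matrix products as parsing instances. Taking $\epsilon=\delta$ gives a combinatorial BMM algorithm running in $O(n^{3-\delta/3})$ time, which is the statement of the corollary. Every step in the chain — the graph construction of Theorem~\ref{them:dyck_hard} and the parse-table manipulations and block decompositions inside Lee's reduction — is discrete and non-algebraic, so the final BMM algorithm is combinatorial (the regime in which the BMM conjecture applies) and no fast-matrix-multiplication subroutine is ever introduced. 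The one point I expect to require care is the size accounting in the first step: one must verify that $G^{\Grammar}_s$ has only $O(n)$ nodes in the size $n$ of the parsing instance, so that a running time stated in the number of Dyck-graph nodes transfers to the same exponent in the parsing parameter before being handed to Lee's reduction; given the explicit construction of $G^{\Grammar}_s$ and Lemma~\ref{lem:parse_graph_correctness} this bookkeeping is routine. Finally, I would note that when the grammar is fixed this same $G^{\Grammar}_s$ is a path with constant-size gadgets attached, hence of constant treewidth, which is what will underlie the analogous hardness statement for constant-treewidth graphs.
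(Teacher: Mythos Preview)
Your proposal is correct and matches the paper's own argument essentially line for line: the paper derives the corollary in a single sentence by combining Theorem~\ref{them:dyck_hard} with \cite[Theorem~2]{Lee02}, and you spell out precisely that composition, including the size bookkeeping for $G^{\Grammar}_s$ and the observation about constant treewidth that underlies Remark~\ref{rem:dyck_hard_tw}.
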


\begin{remark}[\bf BMM hardness for low-treewidth graphs]\label{rem:dyck_hard_tw}
Note that since the size of the grammar $\Grammar$ is constant, the parse graph $G^{\Grammar}_s$ has constant treewidth.
Hence the BMM hardness of Corollary~\ref{cor:dyck_hard} also holds if we restrict our attention to Dyck reachability on graphs of constant treewidth.
\end{remark}


\section{Library/Client Dyck Reachability}\label{sec:library}

In this section we present some new results for library/client Dyck reachability with applications to context-sensitive data-dependence analysis.
One crucial step to our improvements is the fact that we consider that the underlying graphs are not arbitrary, but have special structure.
We start with Section~\ref{subsec:tree_decompositions} which defines formally the graph models we deal with, and their structural properties.
Afterwards, in Section~\ref{subsec:library}, we present our algorithms.

\subsection{Problem Definition}\label{subsec:tree_decompositions}

 Here we present a formal definition of the input graphs that we will be considering for library/client Dyck reachability with application to context-sensitive data-dependence analysis.
Each input graph $G$ is not an arbitrary $\Alphabet_k$-labeled graph, but has two important structural properties.
 \begin{compactenum}
 \item $G$ can be naturally partitioned to subgraphs $G_1,\dots G_{\ell}$, such that every $G_i$ has only $\epsilon$-labeled edges.
Each such $G_i=(V_i, E_i)$ corresponds to a method of the input program. 
There are only few nodes of $V_i$ with \emph{incoming} edges that are non-$\epsilon$-labeled. 
Similarly, there are only few nodes of $V_i$ with \emph{outgoing} edges that are non-$\epsilon$-labeled.
These nodes correspond to the input parameters and return statements of the $i$-th method of the program, which are almost always only a few.
\item Each $G_i$ is a graph of \emph{low treewidth}.
This is an important graph-theoretic property which, informally, means that $G_i$ is similar to a tree (although $G_i$ is not a tree).
 \end{compactenum}

We make the above structural properties formal and precise.
We start with the first structural property, we captures the fact that the input graph $G$ consists of many local graphs $G_i$, one for each method of the input program, and the parenthesis-labeled edges model context sensitivity.

\noindent{\bf Program-valid partitionings.}
Let $G=(V,E)$ be a $\Alphabet_k$-labeled graph.
Given some $1\leq i\leq k$, we define the following sets.

\begin{align*}
&V_c(\OpenParenthesis_i) = \{u: \exists (u,v,\OpenParenthesis_i)\in E\}  \quad
&V_e(\OpenParenthesis_i) = \{v: \exists (u,v,\OpenParenthesis_i)\in E\}\\
&V_x(\CloseParenthesis_i) = \{u: \exists (u,v,\CloseParenthesis_i)\in E\} \quad
&V_r(\CloseParenthesis_i) = \{v: \exists (u,v,\OpenParenthesis_i)\in E\}
\end{align*}
In words, (i)~$V_c(\OpenParenthesis_i)$ contains the nodes that have a $\OpenParenthesis_i$-labeled outgoing edge,
(ii)~$V_e(\OpenParenthesis_i)$ contains the nodes that have a $\OpenParenthesis_i$-labeled incoming edge,
(iii)~$V_x(\CloseParenthesis_i)$ contains the nodes that have a $\CloseParenthesis_i$-labeled outgoing edge,
and (iv)~$V_r(\CloseParenthesis_i)$ contains the nodes that have a $\CloseParenthesis_i$-labeled incoming edge.
Additionally, we define the following sets.
\[
V_c=\bigcup_i V_c(\OpenParenthesis_i) \quad V_e=\bigcup_i V_e(\OpenParenthesis_i) \quad V_x=\bigcup_i V_x(\CloseParenthesis_i) \quad V_r=\bigcup_i V_r(\CloseParenthesis_i)
\]

Consider a partitioning $\Partition=\{V_1,\dots, V_{\ell}\}$ of the node set $V$,
i.e., $\bigcup_iV_i = V$ and $V_i\cap V_j=\emptyset$ for all $1\leq i,j \leq \ell$.
We say that $\Partition$ is \emph{program-valid} if the following conditions hold:
for every $1\leq i\leq k$, there exist some $1\leq j_1,j_2\leq \ell$ such that
(i)~$V_c(\OpenParenthesis_i), V_r(\CloseParenthesis_i)\subseteq V_{j_1}$, and
(ii)~$V_e(\OpenParenthesis_i), V_x(\CloseParenthesis_i)\subseteq V_{j_2}$.
Intuitively, the parenthesis-labeled edges of $G$ correspond to method calls and returns, and thus model context sensitivity.
Each parenthesis type models the calling context, and each $G\restr{V_i}$ corresponds to a single method of the program.
Since the calling context is tied to two methods (the caller and the callee), conditions (i) and (ii) must hold for the partitioning.

A program-valid partitioning $\Partition=\{V_1,\dots V_{\ell}\}$ is called $b$-bounded if there exists some $b\in \Nats$ such that for all $1\leq j \leq \ell$  we have that $|V_e\cap V_j|, |V_x\cap V_j|\leq b$.
Note that since $\Partition$ is program-valid, this condition also yields that for all $1\leq i\leq k$ we have that $|V_c(\OpenParenthesis_i)|, |V_r(\CloseParenthesis_i)|\leq b$. 
In this paper we consider that $b=O(1)$, i.e., $b$ is constant wrt the size of the input graph.
This is true since the sets $V_e\cap V_j$ and $V_x\cap V_j$ represent the input parameters and the return statements of the $j$-th method in the program.
Similarly, the sets $V_c(\OpenParenthesis_i)$, $V_r(\CloseParenthesis_i)$ represent the variables that are passed as input and the variables that capture the return, respectively, of the method that the $i$-th call site refers to.
In all practical cases each of the above sets has constant size (or even size $1$, for return variables).

\noindent{\bf Program-valid graphs.}
The graph $G$ is called \emph{program-valid} if there exists a constant $b\in \Nats$ such that $G$ has $b$-bounded program valid partitioning.
Given a such a partitioning $\Partition=\{V_1,\dots ,V_{\ell}\}$, we call each graph $G_i=(V_i,E_i)=G\restr{V_i}$ a \emph{local graph}.
Given a partitioning of $V$ to the library partition $V^1$ and client partition $V^2$, 
$\Partition$ induces a program-valid partitioning on each of the library subgraph $G^1=G\restr{V^1}$ and $G^2=G\restr{V^2}$.
See Figure~\ref{fig:lib_example} for an example.

We now present the second structural property of input graphs that we exploit in this work.
Namely, for a program-valid input graph $G$ with a program-valid partitioning $\Partition=\{V_1,\dots ,V_{\ell}\}$
the local graphs $G_i=G\restr{V_i}$ have \emph{low treewidth}.
It is known that the control-flow graphs (CFGs) of goto-free programs have small treewidth~\cite{Thorup98}.
The local graphs $G_i$ are not CFGs, but rather graphs defined by def-use chains.
As we show in this work (see Section~\ref{subsec:experiments_library}), the local def-use graphs of real-world benchmarks also have small treewidth.
Below, we make the above notions precise.

\noindent{\bf Trees.}
A (rooted) tree $T=(V_T, E_T)$ is an undirected graph with a distinguished node $w$ which is the root
such that there is a unique simple path $P_u^v:u \Path v$ for each pair of nodes $u,v$.
Given a tree $T$ with root $w$, the \emph{level} $\Level(u)$ of a node $u$ is the length of the simple
path $P_u^w$ from $u$ to the root $r$.
Every node in $P_u^w$ is an \emph{ancestor} of $u$.
If $v$ is an ancestor of $u$, then $u$ is a \emph{descendant} of $v$.
For a pair of nodes $u,v\in V_T$, the \emph{lowest common ancestor (LCA)} of $u$ and $v$
is the common ancestor of $u$ and $v$ with the largest level.
The \emph{parent} $u$ of $v$ is the unique ancestor of $v$ in level $\Level(v)-1$,
and $v$ is a \emph{child} of $u$. A \emph{leaf} of $T$ is a node with no children.
For a node $u\in V_T$, we denote by $T(u)$ the subtree of $T$ rooted in $u$ 
(i.e., the tree consisting of all descendants of $u$). 
The \emph{height} of $T$ is $\max_u\Level(u)$.

\noindent{\bf Tree decompositions and treewidth}~\cite{Robertson84}.
Given a graph $G$, a tree-decomposition $\Tree(G)=(V_T, E_T)$ is a tree with the following properties.
\begin{compactenum}
\item[\emph{C1}:]\label{item:c1} $V_T=\{\Bag_1,\dots, \Bag_{b}: \text{ for all } 
1\leq i \leq b. \ \Bag_i\subseteq V\}$ and $\bigcup_{\Bag_i\in V_T}\Bag_i=V$.
That is, each node of $\Tree(G)$ is a subset of nodes of $G$, and each node of $G$ appears in some node of $\Tree(G)$.
\item[\emph{C2}:]\label{item:c2} For all $(u,v)\in E$ there exists $\Bag_i\in V_T$ such that $u,v\in \Bag_i$.
That is, the endpoints of each edge of $G$ appear together in some node of $\Tree(G)$.
\item[\emph{C3}:]\label{item:c3} For all $\Bag_i$, $\Bag_j$ and any bag $\Bag_k$ that appears in the simple path $\Bag_i\Path \Bag_j$ in $\Tree(G)$,
we have $\Bag_i\cap \Bag_j\subseteq \Bag_k$.
That is, every node of $G$ is contained in a contiguous subtree of $\Tree(G)$.
\end{compactenum}
To distinguish between the nodes of $G$ and the nodes of $\Tree(G)$,
the sets $\Bag_i$ are called \emph{bags}.
The {\em width} of a tree-decomposition $\Tree(G)$ is the size of the largest bag minus~1
and the {\em treewidth} of $G$ is the width of a minimum-width tree decomposition of $G$.
It follows from the definition that if $G$ has constant treewidth, then $m=O(n)$.
For a node $u\in V$, we say that a bag $\Bag$ is the \emph{root bag} of $u$ if $\Bag$ 
is the bag with the smallest level among all bags that contain $u$,
i.e., $\Bag_u=\arg\min_{\Bag\in V_T:~u\in \Bag}\Level\left(\Bag\right)$. 
By definition, there is exactly one root bag for each node $u$. 
We often write $\Bag_u$ for the root bag of node $u$, and denote by $\Level(u)=\Level\left(\Bag_u\right)$.  
Additionally, we denote by $\Bag_{(u,v)}$ the bag of the largest level that is the root bag of one of $u$, $v$.
The following well-known theorem states that tree decompositions of constant-treewidth graphs can be constructed efficiently.

\begin{theorem}[\cite{Bodlaender95}]\label{them:tree_dec}
Given a graph $G=(V,E)$ of $n$ nodes and treewidth $t=O(1)$, a tree decomposition $\Tree(G)$ of $O(n)$ bags, height $O(\log n)$ and width $O(t)=O(1)$ can be constructed in
$O(n)$ time.
\end{theorem}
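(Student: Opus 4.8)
The plan is to prove the statement in two stages: first produce \emph{some} tree decomposition of $G$ of width $O(t)$ with $O(n)$ bags in linear time, and then \emph{rebalance} it so that its height becomes $O(\log n)$ while its width and number of bags grow by only constant factors. Since $t=O(1)$ we have $m=O(n)$ (as noted above), so reading the input is already $O(n)$, and it suffices to make each stage cost $O(n)$.

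For the first stage I would invoke Bodlaender's linear-time algorithm~\cite{Bodlaender95}: given that $G$ has treewidth $t=O(1)$, it outputs in $O(n)$ time a tree decomposition of width at most $t$ (any linear-time constant-factor approximation of the treewidth would serve equally well). Applying the standard simplification that repeatedly contracts an edge $(\Bag_i,\Bag_j)$ of the decomposition whenever $\Bag_i\subseteq\Bag_j$, I may assume the decomposition has $O(n)$ bags (in fact at most $n$), still in $O(n)$ time.

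The second stage is a Bodlaender--Hagerup-style balancing transformation, carried out as follows. First binarize the decomposition tree, replacing every node of degree $d$ by a balanced binary tree of $\Theta(d)$ copies of its bag; this preserves the width, multiplies the number of bags by $O(1)$, and costs $O(n)$. Now do divide-and-conquer on the resulting rooted binary tree $T$: at each step one is handed a connected subtree $S$ of $T$ together with an \emph{interface} set $W\subseteq V$ of size $O(t)$ consisting of exactly those vertices of $G$ occurring both inside and outside $S$, finds a \emph{balanced separator bag} $\Bag^\ast$ of $S$ --- a node of $S$ whose removal splits $S$ into at most three subtrees each with at most $\tfrac{2}{3}|S|$ bags, obtainable by walking from the root of $S$ always into the child with the larger subtree and stopping as soon as the subtree size drops to $\tfrac{2}{3}|S|$ --- makes $\Bag^\ast\cup W$ the root of the output decomposition of $S$, and recurses on each piece $S_i$ with its own interface $W_i$. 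The crucial point is that every $W_i$ is again of size $O(t)$: by condition \emph{C3} it is contained in $W$ together with the single adhesion set $\Bag^\ast\cap\Bag$ on the edge joining $\Bag^\ast$ to $S_i$, and a careful accounting (carrying down only the relevant adhesion sets rather than all of $\Bag^\ast\cup W$) keeps $|W_i|$ bounded. Consequently every bag of the output is a union of $O(1)$ original bags and adhesions, hence has size $O(t)$, i.e.\ width $O(t)$; conditions \emph{C1}--\emph{C3} are maintained by construction; the $\tfrac{2}{3}$-shrinking makes the recursion depth --- and thus the height of the output --- $O(\log n)$; and each original bag is copied $O(1)$ times over all levels, so the output has $O(n)$ bags.

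A naive recursion as above spends $O(n\log n)$ time, so to get the claimed $O(n)$ bound I would implement the divide-and-conquer on top of a heavy-path decomposition of $T$ (computable in $O(n)$ time): a heavy path of length $\ell$ is itself a path-shaped tree decomposition and can be rebalanced into a binary one of height $O(\log\ell)$ in $O(\ell)$ time, light subtrees are attached recursively, and since any root-to-leaf path in $T$ enters $O(\log n)$ distinct heavy paths the overall height stays $O(\log n)$ and the total work is $O(n)$. The main obstacle, and the place where care is needed, is precisely the width-preservation bookkeeping in the balancing step: arguing that the interface sets carried down the recursion never exceed size $O(t)$, so that the blow-up in width is only a constant factor rather than, say, a logarithmic one.
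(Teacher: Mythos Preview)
The paper does not prove this theorem; it is stated with a citation to~\cite{Bodlaender95} and used as a black box. There is therefore no proof in the paper to compare your proposal against --- the authors simply invoke the result from the literature.

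For what it is worth, your two-stage outline (Bodlaender's linear-time algorithm to get \emph{some} width-$t$ decomposition, followed by a Bodlaender--Hagerup-style rebalancing to achieve logarithmic height at the cost of a constant-factor width blow-up) is the standard route to this result, and the ingredients you name are the right ones. The interface-set bookkeeping you flag as the main obstacle is indeed the heart of the balancing argument; in the literature the careful accounting yields width at most $3t+2$, and your sketch stops just short of pinning down why a bounded number of adhesions suffices at each recursive step. Your heavy-path workaround for the naive $O(n\log n)$ recursion is plausible but is not how the cited constructions typically achieve linear time (they rely on tree-contraction/rake-compress style arguments that give $O(n)$ total work directly). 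None of this matters for the present paper, however, since Theorem~\ref{them:tree_dec} is taken as given.
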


The following crucial lemma states the key property of tree decompositions that we exploit in this work towards fast algorithms for Dyck reachability.
Intuitively, every bag of a tree decomposition $\Tree(G)$ acts as a separator of the graph $G$.

\begin{lemma}[{\cite[Lemma~3]{Bodlaender98}}]\label{lem:separator_property}
Consider a graph $G=(V,E)$, a tree-decomposition $T=\Tree(G)$, and a bag $\Bag$ of $T$.
Let $(\Comp_i)_{i}$ be the components of $T$ created by removing $\Bag$ from $T$,
and let $V_i$ be the set of nodes that appear in bags of component $\Comp_i$.
For every $i\neq j$, nodes $u\in V_i$, $v\in V_j$ and path $P:u\Path v$, we have that $P\cap\Bag\neq\emptyset$  
(i.e., all paths between $u$ and $v$ go through some node in $\Bag$).
\end{lemma}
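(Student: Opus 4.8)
The plan is to derive the separator property entirely from the connectivity axiom (C3) of tree decompositions. First I would fix the bag $\Bag$ and let $(\Comp_i)_i$ be the components of $T$ obtained by deleting $\Bag$, and for each node $w\in V$ let $T_w$ denote the set of bags of $\Tree(G)$ that contain $w$. By (C1) and (C3), $T_w$ is a nonempty connected subtree of $T$. The key observation is that whenever $w\notin\Bag$, the subtree $T_w$ does not contain the bag $\Bag$, hence $T_w$ is a connected subtree of the forest $T\setminus\Bag$ and therefore lies entirely inside one component, which I call $\Comp_{c(w)}$. Thus every node outside $\Bag$ gets a well-defined component index $c(w)$, and moreover $w$ appears only in bags of $\Comp_{c(w)}$; in particular, if $w\in V_i$ and $w\notin\Bag$ then necessarily $i=c(w)$.

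Next I would show that the index $c(\cdot)$ is constant along any $\Bag$-avoiding path. Take an edge $(x,y)$ on the path with $x,y\notin\Bag$; by (C2) there is a bag $\Bag'$ containing both $x$ and $y$, so $\Bag'\in T_x$ and $\Bag'\in T_y$, which places $\Bag'$ simultaneously in $\Comp_{c(x)}$ and $\Comp_{c(y)}$, forcing $c(x)=c(y)$. Chaining this equality along all edges of a path $P$ whose every node avoids $\Bag$ yields $c(u)=c(v)$ for its endpoints $u,v$.

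Finally I would assemble the proof of the lemma. Given $u\in V_i$, $v\in V_j$ with $i\neq j$ and a path $P:u\Path v$, suppose for contradiction that $P\cap\Bag=\emptyset$, i.e., no node of $P$ lies in $\Bag$; in particular $u,v\notin\Bag$, so by the first step $c(u)=i$ and $c(v)=j$, while by the second step $c(u)=c(v)$, contradicting $i\neq j$. Hence $P$ must contain a node of $\Bag$. (If instead $u$ or $v$ already lies in $\Bag$, which is possible since a node of $G$ may appear both in $\Bag$ and in bags of several components, the conclusion is immediate from the endpoint.)

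The only real obstacle I anticipate is the bookkeeping distinction between ``appearing in a bag of component $\Comp_i$'' and ``not lying in $\Bag$'': a node can belong to $\Bag$ and simultaneously to several sets $V_i$, so the clean component assignment $c(\cdot)$ is available only for nodes outside $\Bag$, and the degenerate case where an endpoint of $P$ lies in $\Bag$ must be separated out first. Once that is in place, the remainder is a routine chaining argument (equivalently, a short induction on $|P|$) using only axioms (C1)--(C3).
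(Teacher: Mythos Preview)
Your argument is correct and is the standard proof of the separator property, using only axioms (C1)--(C3) exactly as you outline. Note, however, that the paper does not supply its own proof of this lemma: it is stated as a citation of \cite[Lemma~3]{Bodlaender98} and used as a black box, so there is no in-paper proof to compare against.
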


\noindent{\bf Program-valid treewidth.}
Let $G=(V,E)$ be a $\Alphabet_k$-labeled program-valid graph, and $\Partition=\{V_1,\dots, V_{\ell}\}$ a program-valid partitioning of $G$.
For each $1\leq i\leq \ell$, let $G_i=(V_i, E_i)=G\restr{V_i}$.
We define the graph $G'_i=(V_i, E'_i)$ such that
\[
E'_i=E_i\bigcup_{1\leq j\leq k}{\left(V_c(\OpenParenthesis_j)\cap V_i\right) \times \left(V_r(\CloseParenthesis_j)\cap V_i\right)}
\]
and call $G'_i$ the \emph{maximal} graph of $G_i$.
In words, the graph $G'_i$ is identical to $G_i$, with the exception that $G'_i$ contains an extra edge for every
pair of nodes $u,v\in V_i$ such that $u$ has opening-parenthesis-labeled outgoing edges, and $v$ has closing-parenthesis-labeled incoming edges.
We define the treewidth of $\Partition$ to be the smallest integer $t$ such that the treewidth of each $G'_i$ is at most $t$.
We define the width of the pair $(G,\Partition)$ as the treewidth of $\Partition$,
and the \emph{program-valid treewidth} of $G$ to be the smallest treewidth among its program-valid partitionings.

\noindent{\bf The Library/Client Dyck reachability problem on program-valid graphs.}
Here we define the algorithmic problem that we solve in this section.
Let $G=(V,E)$ be a $\Alphabet_k$-labeled, program-valid graph and $\Partition$ a program-valid partitioning of $G$ that has constant treewidth
($k$ need not be constant).
The set $\Partition$ is further partitioned into two sets, $\Partition^1$ and $\Partition^2$ that correspond to the 
\emph{library} and \emph{client} partitions, respectively.
We let $V^1=\bigcup_{V_i\in \Partition^1} V_i$ and $V^2=\bigcup_{V_i\in \Partition^2} V_i$,
and define the \emph{library graph} $G^1=(V^1, E^1)=G\restr{V^1}$ and the \emph{client graph} $G^2=(V^2, E^2)=G\restr{V^1}$.

The task is to answer Dyck reachability queries on $G$, where the queries are either
(i)~single source queries from some $u\in V^2$, or (ii)~pair queries for some pair $u,v \in V^2$.
The computation takes place in two phases.
In the \emph{preprocessing phase}, only the library graph $G^1$ is revealed, and we are allowed to some preprocessing to compute reachability summaries.
In the \emph{query phase}, the whole graph $G$ is revealed, and our task is to handle queries fast, by utilizing the preprocessing done on $G^1$.

\begin{figure}
\begin{subfigure}[b]{0.25\textwidth}
\begin{algorithm}[H]
\small
\SetInd{0.4em}{0.4em}
\DontPrintSemicolon
\caption{$f_1(x,y)$}
\BlankLine
\eIf{$y\%2 =1$}{
$z\gets  x + y$
}
{
$z\gets x\cdot y$
}
\Return{$z$}
\end{algorithm}
\end{subfigure}
\quad 
\begin{subfigure}[b]{0.25\textwidth}
\begin{algorithm}[H]
\small
\SetInd{0.4em}{0.4em}
\DontPrintSemicolon
\caption{$g()$}
\BlankLine
$x\gets 2$\\
$y\gets 2$\\
$p\gets f(x,y)$\\
\Return{$p$}
\end{algorithm}
\end{subfigure}
\quad
\begin{subfigure}[b]{0.25\textwidth}
\begin{algorithm}[H]
\small
\SetInd{0.4em}{0.4em}
\DontPrintSemicolon
\caption{$f_2(x,y)$}
\BlankLine
\eIf{$x\%2 =1$}{
$z\gets  2\cdot x$
}
{
$z\gets 2\cdot x + 1$
}
\Return{$z$}
\end{algorithm}
\end{subfigure}
\\
\begin{subfigure}{0.8\textwidth}
\small
\centering
\begin{tikzpicture}[thick, >=latex, scale=0.9,
pre/.style={<-,shorten >= 1pt, shorten <=1pt, thick},
post/.style={->,shorten >= 1pt, shorten <=1pt,  thick},
und/.style={very thick, draw=gray},
bag/.style={ellipse, minimum height=7mm,minimum width=12mm,draw=gray!80, line width=1pt, inner sep=0},
internal/.style={very thick ,circle,draw=black!80, inner sep=1, minimum size=4.5mm},
virt/.style={circle,draw=black!50,fill=black!20, opacity=0}]

\newcommand{\ynodestep}{-0.9}
\newcommand{\xnodestep}{0.6}
\newcommand{\xdisposition}{5}
\newcommand{\xcaptiondisposition}{0.5}
\newcommand{\xtextdisposition}{1}
\newcommand{\legendy}{2.5}

\newcommand{\secondlineybias}{0}

\newcommand{\globalxdisposition}{-3.0}
\newcommand{\globalydisposition}{-7.5}
\renewcommand{\ynodestep}{-0.5}
\renewcommand{\xnodestep}{0.35}

\newcommand{\ybagstep}{-1}
\newcommand{\xbagstep}{0.7}

\node	[internal]		(xmain)	at	(\globalxdisposition+-1.5*\xnodestep,\globalydisposition+1*\ynodestep)		{$1$};
\node	[internal]		(ymain)	at	(\globalxdisposition+1.5*\xnodestep,\globalydisposition+1*\ynodestep)		{$2$};
\node	[internal]		(qmain)	at	(\globalxdisposition+0*\xnodestep,\globalydisposition+3*\ynodestep)		{$3$};
\node	[internal]		(returnmain)	at	(\globalxdisposition,\globalydisposition+4.5*\ynodestep)		{$4$};

\node	[internal]		(xf_)	at	(\globalxdisposition+-12*\xnodestep,\globalydisposition+0*\ynodestep)		{$1$};
\node	[internal]		(yf_)	at	(\globalxdisposition+-9*\xnodestep,\globalydisposition)		                {$2$};
\node	[internal]		(iff_)	at	(\globalxdisposition+-10.5*\xnodestep,\globalydisposition+1.25*\ynodestep)		{$3$};
\node	[internal]		(addf_)	at	(\globalxdisposition-13*\xnodestep,\globalydisposition+3*\ynodestep)		{$4$};
\node	[internal]  (multif_)	at	(\globalxdisposition+-8*\xnodestep,\globalydisposition+3*\ynodestep)		{$5$};
\node	[internal]		(phi_)	at	(\globalxdisposition+-10.5*\xnodestep,\globalydisposition+5*\ynodestep)		{$\phi$};
\node	[internal]		(retf_)	at	(\globalxdisposition+-10.5*\xnodestep,\globalydisposition+6.5*\ynodestep-0.2)		{$6$};

\node	[internal]		(xf__)	at	(\globalxdisposition+9*\xnodestep,\globalydisposition+0*\ynodestep)		{$1$};
\node	[internal]		(yf__)	at	(\globalxdisposition+12*\xnodestep,\globalydisposition)		                {$2$};
\node	[internal]		(iff__)	at	(\globalxdisposition+10.5*\xnodestep,\globalydisposition+1.25*\ynodestep)		{$3$};
\node	[internal]		(addf__)	at	(\globalxdisposition+8*\xnodestep,\globalydisposition+3*\ynodestep)		{$4$};
\node	[internal]  (multif__)	at	(\globalxdisposition+13*\xnodestep,\globalydisposition+3*\ynodestep)		{$5$};
\node	[internal]		(phi__)	at	(\globalxdisposition+10.5*\xnodestep,\globalydisposition+5*\ynodestep)		{$\phi$};
\node	[internal]		(retf__)	at	(\globalxdisposition+10.5*\xnodestep,\globalydisposition+6.5*\ynodestep -0.2)		{$6$};

\draw [->, dashed,thick] (xmain) to 	(qmain);
\draw [->, dashed,thick] (ymain) to 	(qmain);
\draw [->, thick] (qmain) to 	(returnmain);

\draw [->, thick] (yf_) to 		(iff_);
\draw [->, thick] (xf_) to 		(addf_);
\draw [->, thick, bend left =30 ] (yf_) to 		(addf_);
\draw [->, thick] (yf_) to 		(multif_);
\draw [->, thick,, bend right =30] (xf_) to 		(multif_);
\draw [->, thick] (addf_) to 		(phi_);
\draw [->, thick] (multif_) to 		(phi_);
\draw [->, thick] (phi_) to 		(retf_);

\draw [->, thick] (xf__) to 		(iff__);
\draw [->, thick] (xf__) to 		(addf__);
\draw [->, thick, bend right =30] (xf__) to 		(multif__);
\draw [->, thick] (addf__) to 		(phi__);
\draw [->, thick] (multif__) to 		(phi__);
\draw [->, thick] (phi__) to 		(retf__);

\draw [->, thick, bend right =40] (ymain) to 	node[above]{\Large $\}_1$}	(yf_);
\draw [->, thick, bend right =40] (xmain) to 	node[above]{\Large$\}_1$}	(xf_);
\draw [->, thick,bend left =40] (xmain) to 	node[above]{\Large$\}_2$}	(xf__);
\draw [->, thick,bend left =40] (ymain) to 	node[above]{\Large$\}_2$}	(yf__);
\draw [->, thick,bend right =10] (retf_) to node[below]{\Large$\}_1$}		(qmain);
\draw [->, thick,bend left =10] (retf__) to node[below]{\Large$\}_2$}		(qmain);

\node[ellipse, minimum width=2cm, minimum height=2.8cm, ultra thick, dashed, draw=gray] at (\globalxdisposition,\globalydisposition+-1.2) {};
\node[ellipse, minimum width=2.9cm, minimum height=4.3cm, ultra thick, dashed, draw=gray] at (\globalxdisposition+3.6,\globalydisposition+-1.5) {};
\node[ellipse, minimum width=2.9cm, minimum height=4.3cm, ultra thick, dashed, draw=gray] at (\globalxdisposition-3.6,\globalydisposition+-1.5) {};

\node[] at (\globalxdisposition-3.6,\globalydisposition+-4.2) {\large $f_1(x,y)$};
\node[] at (\globalxdisposition+3.6,\globalydisposition+-4.2) {\large $f_2(x,y)$};
\node[] at (\globalxdisposition,\globalydisposition+-4.2) {\large $g()$};

\end{tikzpicture}

\end{subfigure}
\caption{Example of a library/client program and the corresponding program-valid data-dependence graph.
The library consists of method $g()$ which has a callback function $f(x,y)$.
The client implements $f(x,y)$ either as $f_1(x,y)$ or $f_2(x,y)$.
The parenthesis-labeled edge model context-sensitive dependencies on parameter passing and return.
Depending on the implementation of $f$, there is a data dependence of the variable $p$ on $y$.
}
\label{fig:lib_example}
\end{figure}
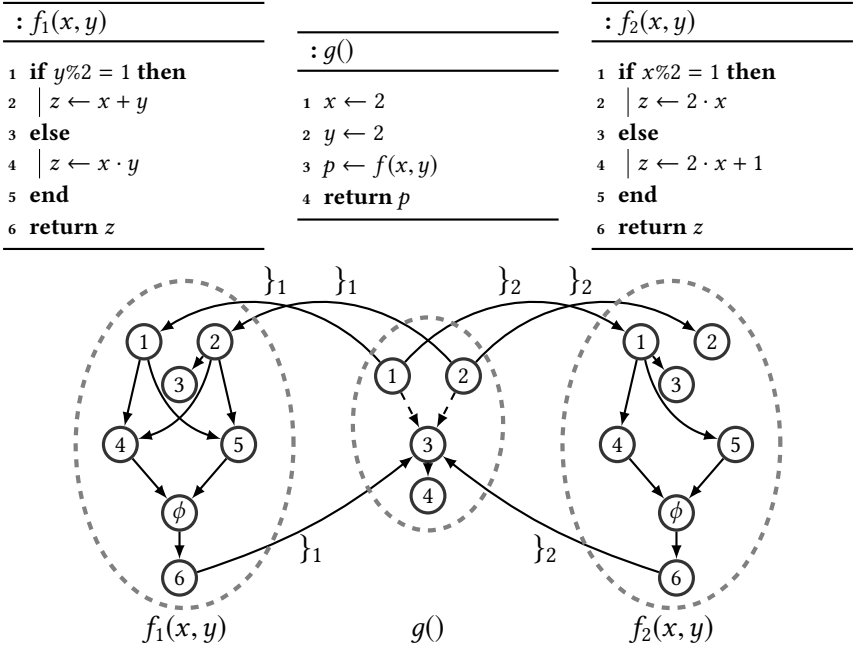

\subsection{Library/Client Dyck Reachability on Program-valid Graphs}\label{subsec:library}

We are now ready to present our method for computing library summaries on program-valid graphs in order to speed up the client-side Dyck reachability.
The approach is very similar to the work of \cite{CIP15} for data-flow analysis of recursive state machines.

\noindent{\bf Outline of our approach.}
Our approach consists of the following conceptual steps.
We let the input graph $G=(V,E)$ be any program-valid graph of constant treewidth,
with a partitioning of $V$ into the library component $V^1$ and the client component $V^2$.
Since $G$ is program-valid, it has a constant-treewidth, program-valid partitioning $\Partition$, and we consider $\Partition^{1}$ to be the restriction of $\Partition$ to the set $V^1$.
Hence we have  $\Partition^{1}=\{V_1, \dots V_{\ell}\}$ be a program-valid partitioning of $G\restr{V^1}$, which also has constant treewidth. Our approach consists of the following steps.

\begin{compactenum}
\item We construct a local graph $G_i=(V_i, E_i)$ and the corresponding maximal local graph $G'_i=(V_i, E'_i)$ for each $V_i\in \Partition$.
Recall that $G'_i$ is a conventional graph, since, by definition, $E'_i$ contains only $\epsilon$-labeled edges.
Since $\Partition$ has constant treewidth, each graph $G'_i$ has constant treewidth, and we construct a tree decomposition $\Tree(G'_i)$.
\item We exploit the constant-treewidth property of each $G'_i$ to build a data structure $\DataStructure$ which supports the following two operations:
(i) Querying whether a node $v$ is reachable from a node $u$ in $G'_i$, and
(ii) Updating $G_i$ by inserting a new edge $(x,y)$.
Moreover, each such operation is fast, i.e., it is performed in $O(\log n_i)$ time.
\item Recall that $V^1$, $V^2$ are the library and client partitions of $G$, respectively.
In the preprocessing phase, we use the data structure $\DataStructure$ to preprocess $G\restr{V^1}$ so that any pair of library nodes that is Dyck-reachable in $G\restr{V^1}$ is discovered and can be queried fast. Hence this library-side reachability information serves as the summary on the library side.
\item In the query phase, we use $\DataStructure$ to process the whole graph $G$, using the summaries computed in the preprocessing phase.
\end{compactenum}

\noindent{\bf Step 1. Construction of the local graphs $G_i$ and the tree decompositions.}
The local graphs $G_i$ are extracted from $G\restr{V^1}$ by means of its program-valid partitioning $\Partition^1=\{V_1,\dots V_{\ell}\}$.
We consider this partitioning as part of the input, since every local graph $G_i$ in reality corresponds to a unique method of the input program represented by $G$.
Let $n_i=|V_i|$.
The maximal local graphs $G'_i=(V_i,E'_i)$ are constructed as defined in Section~\ref{subsec:tree_decompositions}.
Each tree decomposition $\Tree(G'_i)$ is constructed in $O(n_i)$ time using Theorem~\ref{them:tree_dec}.
Observe that since $E_i\subseteq E'_i$ (i.e., $G_i$ is a subgraph of its maximal counterpart $G'_i$), 
$\Tree(G'_i)$ is also a tree decomposition of $G_i$.
We define $\Tree(G_i)=\Tree(G'_i)$ for all $1\leq i \leq  \ell$.

\noindent{\bf Step 2. Description of the data structure $\DataStructure$.}
Here we describe the data structure $\DataStructure$, which is built for a conventional graph $G_i=(V_i,E_i)$ (i.e., $E_i$ has only $\epsilon$-labeled edges) and its tree decomposition $\Tree(G_i)$. 
The purpose of $\DataStructure$ is to handle reachability queries on $G_i$.
The data structure supports three operations, given in Algorithm~\ref{algo:preprocess}, Algorithm~\ref{algo:update} and Algorithm~\ref{algo:query}.
\begin{compactenum}
\item The $\DataStructure.\Preprocessalgo$ (Algorithm~\ref{algo:preprocess}) operation builds the data structure for $G_i$.
\item The $\DataStructure.\Updatealgo$ (Algorithm~\ref{algo:update}) updates the graph $G_i$ with a new edge $(x,y)$, provided that there exists a bag $\Bag$ such that $x,y\in \Bag$.
\item The $\DataStructure.\Queryalgo$ (Algorithm~\ref{algo:query}) takes as input a pair of nodes $x,y$ and returns $\True$ iff $y$ is reachable from $x$ in $G_i$, considering all the update operations performed so far.
\end{compactenum}

\begin{minipage}[t]{.47\textwidth}
\begin{algorithm}[H]
\small
\SetInd{0.4em}{0.4em}
\DontPrintSemicolon
\caption{$\DataStructure.\Preprocessalgo$}\label{algo:preprocess}
\KwIn{A tree-decomposition $\Tree(G_i)$}
\BlankLine
Traverse $\Tree(G_i)$ bottom up\\
\ForEach{encountered bag $\Bag$}{
Construct the graph $G(\Bag)=(\Bag, \ReachMap(\Bag))$\\
Compute the transitive closure $G^*(\Bag)$\\
\ForEach{$(u,v)\in \Bag$}{
\uIf{$u\Path v$ in $G^*(\Bag)$}{
Insert $u,v$ in $\ReachMap$
}
}
}
\end{algorithm}
\end{minipage}
\qquad
\begin{minipage}[t]{.47\textwidth}
\begin{algorithm}[H]
\small
\SetInd{0.4em}{0.4em}
\DontPrintSemicolon
\caption{$\DataStructure.\Updatealgo$}\label{algo:update}
\KwIn{A new edge $(x,y)$}
\BlankLine
Traverse $\Tree(G)$ from $\Bag_{(u,v)}$ to the root\\
\ForEach{encountered bag $\Bag$}{
Construct the graph $G(\Bag)=(\Bag, \ReachMap(\Bag))$\\
Compute the transitive closure $G^*(\Bag)$\\
\ForEach{$u,v\in \Bag$}{
\uIf{$u\Path v$ in $G^*(\Bag)$}{
Insert $(u,v)$ in $\ReachMap$
}
}
}
\end{algorithm}
\end{minipage}

\begin{minipage}[t]{.44\textwidth}
\begin{algorithm}[H]
\small
\SetInd{0.4em}{0.4em}
\DontPrintSemicolon
\caption{$\DataStructure.\Queryalgo$}\label{algo:query}
\KwIn{A pair of nodes $x,y$}
\BlankLine
Let $X\gets \{x\}, Y\gets \{y\}$\\
Traverse $\Tree(G)$ from $\Bag_x$ to the root\\
\ForEach{encountered bag $\Bag$}{
\ForEach{$u,v\in \Bag$}{
\uIf{$u\in X$ and $(u,v)\in \ReachMap$}{\label{line:query_add}
Add $v$ to $X$
}
}
}
Traverse $\Tree(G)$ from $\Bag_y$ to the root\\
\ForEach{encountered bag $\Bag$}{
\ForEach{$u,v\in \Bag$}{
\uIf{$v\in Y$ and $(u,v)\in \ReachMap$}{
Add $u$ to $Y$
}
}
}
\Return $\True$ iff $X\cap Y \neq\emptyset$\label{line:query_intersection}
\end{algorithm}
\end{minipage}
\qquad
\begin{minipage}[t]{.51\textwidth}
\begin{algorithm}[H]
\small
\SetInd{0.4em}{0.4em}
\DontPrintSemicolon
\caption{$\RSMalgo$}\label{algo:process}
\KwIn{Method graphs $(G_i=(V_i, E_i))_{1\leq i\leq \ell}$}
\BlankLine
\ForEach{$1\leq i\leq \ell$}{
Construct $\Tree(G_j)$\label{line:construct_td}\\
Run $\DataStructure.\Preprocessalgo$ on $\Tree(G_i)$\label{line:preprocess_all}
}
$\Pool\gets \{G_1,\dots G_{\ell}\}$\\
\While{$\Pool\neq \emptyset$}{\label{line:process_main_loop}
Extract $G_j$ from $\Pool$\label{line:extract_pool}\\
\ForEach{$u\in V_j \cap V_e, v\in V_j \cap V_x$}{\label{line:for_entry_exit}
\uIf{$\DataStructure.\Queryalgo(u,v)$}{\label{line:if_entry_exit}
\ForEach{$x, y: (x,u,\OpenParenthesis_i), (v,y,\OpenParenthesis_i)\in E$}{\label{line:for_call_return}
Let $G_r=(V_r, E_r)$ be the graph s.t. $x,y\in V_r$\label{line:let_gr}\\
\uIf{not $\DataStructure.\Queryalgo(x,y)$}{\label{line:if_new_query}
Run $\DataStructure.\Updatealgo$ on $\Tree(G_r)$ on $(x,y)$\label{line:update}\\
Insert $G_r$ in $\Pool$\label{line:insert_in_pool}
}
}
}
}
}
\end{algorithm}
\end{minipage}

\noindent{\bf The reachability set $\ReachMap$.}
The data structure $\DataStructure$ is built by storing a reachability set $\ReachMap$ between pairs of nodes. 
The set $\ReachMap$ has the crucial property that it stores information only between pairs of nodes that appear in some bag of $\Tree(G_i)$ together.
That is, $\ReachMap=\subseteq  \bigcup_{\Bag} \Bag\times\Bag$. Given a bag $\Bag$, we denote by $\ReachMap(\Bag)$ the restriction of $\ReachMap$ to the nodes of $\Bag$.
The reachability set is stored as a collection of $2\sum_i\cdot n_i$ sets $\ReachMap^{F}(u)$ and $\ReachMap^{B}(u)$, one for every node $u\in V_i$.
In turn, the set $\ReachMap^{F}(u)$ (resp. $\ReachMap^{B}(u)$) will store the nodes in $\Bag_u$ (recall that $\Bag_u$ is the root bag of node $u$) for which it has been discovered that can be reached from $u$ (resp., that can reach $u$).
It follows directly from the definition of tree decompositions that if $(u,v)\in E_i$ is an edge of $G_i$ then  $u\in \Bag_v$ or $v\in \Bag_u$.
Given a bag $\Bag$ and nodes $u,v\in \Bag$, querying whether $(u,v)\in \ReachMap$ reduces to testing whether $v\in \ReachMap^{F}(u)$ or $u\in \ReachMap^{B}(v)$. Similarly, inserting $(u,v)$ to $\ReachMap$ reduces to inserting either $v$ to $\ReachMap^F(u)$ (if $v\in \Bag_u$),
or $u$ to $\ReachMap^{B}(v)$ (if $u\in \Bag_v$).

\begin{remark}\label{rem:reachability_map}
The map $\ReachMap$ requires $O(n)$ space. Since each $G_i$ is a constant-treewidth graph, every insert and query operation on $\ReachMap$ requires $O(1)$ time.
\end{remark}

\noindent{\bf Correctness and complexity of $\DataStructure$.}
Here we establish the correctness and complexity of each operation of $\DataStructure$.

It is rather straightforward to see that for every pair of nodes $(u,v)\in \ReachMap$, we have that $v$ is reachable from $u$.
The following lemma states a kind of weak completeness: if $v$ is reachable from $u$ via a path of specific type,
then $(u,v)\in \ReachMap$.
Although this is different from strong completeness, which would require that $(u,v)\in \ReachMap$ whenever $v$ is reachable from $u$,
it is sufficient for ensuring completeness of the $\DataStructure.\Queryalgo$ algorithm.

\noindent{\bf Left-right-contained paths.}
We introduce the notion of left-right contained paths, which is crucial for stating the correctness of the data structure $\DataStructure$.
Given a bag $\Bag$ of $\Tree(G_i)$, we say that a path $P:x\Path y$ is 
\emph{left-contained} in $\Bag$ if for every node $w\in P$, if $w\neq x$, we have that $\Bag_w\in T(\Bag)$.
Similarly, $P$ is \emph{right-contained} in $\Bag$ if for every node $w\in P$, if $w\neq y$, we have that $\Bag_w\in T(\Bag)$.
Finally, $P$ is \emph{left-right-contained} in $\Bag$ if it is both left-contained and right-contained in $\Bag$.

\begin{lemma}\label{lem:ds_correctness}
The data structure $\DataStructure$ maintains the following invariant.
For every bag $\Bag$ and pair of nodes $u,v\in \Bag$,
if there is a $P_u^v:u\Path v$ which is left-right contained in $\Bag$,
then after $\DataStructure.\Preprocessalgo$ has processed $\Bag$,
we have $(u,v)\in \ReachMap$.
\end{lemma}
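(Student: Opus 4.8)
The plan is to prove, by an induction that follows the bottom-up order in which $\DataStructure.\Preprocessalgo$ visits the bags of $\Tree(G_i)$, a statement slightly stronger than the lemma: for every bag $\Bag$ and all $u,v\in\Bag$, if there is a path $P:u\Path v$ all of whose \emph{internal} nodes $w$ satisfy $\Bag_w\in T(\Bag)$, then when $\Preprocessalgo$ processes $\Bag$ we have $u\Path v$ in $G^*(\Bag)$; hence $(u,v)$ is inserted into $\ReachMap$ at that step and stays there afterwards, since $\ReachMap$ only grows. A left-right-contained path in $\Bag$ is of this form (its internal nodes, and in fact all its nodes, have root bag in $T(\Bag)$, while its endpoints lie in $\Bag$ by hypothesis), so the lemma is the special case of this claim. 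I would also record at the outset that $\ReachMap$ is seeded with the edge set $E_i$, which is legitimate because for every edge $(x,y)\in E_i$ one has $x\in\Bag_y$ or $y\in\Bag_x$; this takes care of length-one paths. I carry out a primary induction on the size of $T(\Bag)$ (so that all children of $\Bag$ are handled first) and, at each fixed $\Bag$, a secondary induction on the length $|P|$.

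For the base case $\Bag$ is a leaf, so $T(\Bag)=\{\Bag\}$ and every internal node of $P$ — indeed every node of $P$ — lies in $\Bag$; thus each edge of $P$ is an edge of $G_i$ with both endpoints in $\Bag$, hence present in $\ReachMap(\Bag)$, so $u\Path v$ in $G^*(\Bag)$. For the inductive step at an internal bag $\Bag$ with children $\Bag_1,\dots,\Bag_d$ (already processed), the case $|P|\le 1$ is immediate as above. If some internal node $w$ of $P$ lies in $\Bag$, I split $P$ at $w$ into $P_1:u\Path w$ and $P_2:w\Path v$; both are strictly shorter, both have all internal nodes in $T(\Bag)$, and both have their endpoints in $\Bag$, so the secondary hypothesis gives $u\Path w$ and $w\Path v$ in $G^*(\Bag)$, and transitivity of the closure yields $u\Path v$ in $G^*(\Bag)$. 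The remaining case is that no internal node of $P$ lies in $\Bag$, so every internal node has root bag in $T(\Bag)\setminus\{\Bag\}$, which is the disjoint union of the subtrees $T(\Bag_1),\dots,T(\Bag_d)$.

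In this last case I would establish two facts: (a) all internal nodes of $P$ lie in a single child subtree $T(\Bag_j)$, and (b) $u,v\in\Bag\cap\Bag_j$. For (a): for an internal node $w\notin\Bag$, the contiguity condition (C3) forces the bags containing $w$ to form a connected subtree that contains $\Bag_w$ but not $\Bag$, hence lies entirely inside one $T(\Bag_j)$; since consecutive internal nodes share a bag by (C2) and the subtrees $T(\Bag_1),\dots,T(\Bag_d)$ become pairwise disjoint once $\Bag$ is deleted (the separator picture of Lemma~\ref{lem:separator_property}), all internal nodes fall into the same $T(\Bag_j)$. For (b): $u$ is adjacent to the first internal node $w_1$, so they lie in a common bag $B'$, which contains $w_1$ and therefore sits inside $T(\Bag_j)$; since $u\in B'$ and $u\in\Bag$, applying (C3) along the $\Tree(G_i)$-path from $B'$ to $\Bag$ (which necessarily passes through $\Bag_j$) gives $u\in\Bag_j$, and symmetrically $v\in\Bag_j$. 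Now $P$ itself witnesses the hypothesis of the strengthened statement for the bag $\Bag_j$ (endpoints in $\Bag_j$, internal nodes' root bags in $T(\Bag_j)$), so the primary induction hypothesis gives $u\Path v$ in $G^*(\Bag_j)$; hence $(u,v)$ is inserted into $\ReachMap$ when $\Bag_j$ is processed, which is before $\Bag$ is processed, and since $u,v\in\Bag$ this makes $(u,v)$ an edge of $G(\Bag)$, so $u\Path v$ in $G^*(\Bag)$.

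I expect the main obstacle to be exactly the separator bookkeeping in this last case: deducing rigorously, purely from conditions (C2) and (C3) of a tree decomposition (i.e., from Lemma~\ref{lem:separator_property}), both that an internal-node stretch of the path cannot straddle two distinct child subtrees and that its two endpoints actually land inside the bag $\Bag_j$ rather than merely somewhere in $T(\Bag_j)$. Once this geometric fact is in place, everything else — the split-and-transitivity step, the leaf case, the monotonicity of $\ReachMap$, and the reduction of the stated lemma to the strengthened invariant — is routine. (The same argument structure then transfers verbatim to $\DataStructure.\Updatealgo$, which reprocesses precisely the bags on the path from $\Bag_{(x,y)}$ to the root.)
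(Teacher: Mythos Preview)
Your proposal is correct and follows the standard bottom-up induction over the tree decomposition that one expects for such invariants; the paper defers the proof of this lemma to its full version, but your argument---splitting at an internal node in $\Bag$ when one exists, and otherwise using the separator property (conditions C2 and C3, equivalently Lemma~\ref{lem:separator_property}) to push the whole path into a single child subtree $T(\Bag_j)$ with endpoints landing in $\Bag_j$---is exactly the natural route and matches the approach of the work the paper builds on. Your explicit note that $\ReachMap$ must be seeded with $E_i$ is well taken, and your strengthened invariant (requiring only that \emph{internal} nodes have root bag in $T(\Bag)$, with endpoints merely in $\Bag$) is the right formulation to make the induction close cleanly.
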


It is rather straightforward that at the end of $\DataStructure.\Queryalgo$, for every node $w\in X$ 
(resp. $w\in Y$) we have that $w$ is reachable from $x$ (resp. $y$ is reachable from $w$).
This guarantees that if $\DataStructure.\Queryalgo$ returns $\True$, then $y$ is indeed reachable from $x$,
via some node $w\in X\cap Y$ (recall that the intersection is not empty, due to Line~\ref{line:query_intersection}).
The following two lemmas state completeness, namely that if $y$ is reachable from $x$, then $\DataStructure.\Queryalgo$ will return $\True$,
and the complexity of $\DataStructure$ operations.

\begin{lemma}\label{lem:query_correctness}
On input $x,y$, if $y$ is reachable from $x$, then $\DataStructure.\Queryalgo$ returns $\True$.
\end{lemma}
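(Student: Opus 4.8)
\emph{Plan.} The argument reduces, through the separator property of tree decompositions, to two symmetric reachability claims about the two loops of $\DataStructure.\Queryalgo$. Write $\Bag_x=A_0,A_1,\dots,A_p$ for the bags on the path from $\Bag_x$ to the root of $\Tree(G_i)$, with $A_{j+1}$ the parent of $A_j$; these are exactly the bags visited by the first loop, in this order. I will prove: (i)~after the first loop finishes, $X$ contains every node that appears in some $A_j$ and is reachable from $x$ in $G_i$; and symmetrically (ii)~after the second loop finishes, $Y$ contains every node that appears in some bag on the path from $\Bag_y$ to the root and that reaches $y$ in $G_i$.

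\emph{Combining (i) and (ii).} Let $\Bag^{\ast}$ be the lowest common ancestor of $\Bag_x$ and $\Bag_y$ in $\Tree(G_i)$; since $\Bag^{\ast}$ is an ancestor-or-equal of each of $\Bag_x$ and $\Bag_y$, it lies on the root-path of $\Bag_x$ and on that of $\Bag_y$. As $y$ is reachable from $x$, fix a path $P:x\Path y$. If $x\in\Bag^{\ast}$ (in particular if $\Bag^{\ast}=\Bag_x$) set $w^{\ast}=x$; else if $y\in\Bag^{\ast}$ set $w^{\ast}=y$; otherwise $x$ and $y$ lie in distinct components of $\Tree(G_i)\setminus\Bag^{\ast}$, so by Lemma~\ref{lem:separator_property} the path $P$ contains a node $w^{\ast}\in\Bag^{\ast}$. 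In every case $x$ reaches $w^{\ast}$ (a prefix of $P$, or trivially) and $w^{\ast}$ reaches $y$ (a suffix of $P$, or trivially), and $w^{\ast}$ appears in $\Bag^{\ast}$, a bag on both root-paths. Hence (i) gives $w^{\ast}\in X$ and (ii) gives $w^{\ast}\in Y$, so $X\cap Y\neq\emptyset$ and $\DataStructure.\Queryalgo$ returns $\True$.

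\emph{Proving (i).} I establish the stronger invariant by induction on $j$: after the first loop has processed $A_j$, the set $X$ contains every node $w$ with $w\in A_{j'}$ for some $j'\le j$ that is reachable from $x$ by a path all of whose nodes appear in some bag of $T(A_j)$; the case $j=p$ is exactly (i), since $T(A_p)$ is all of $\Tree(G_i)$. Given such a $w$ and a witnessing path $Q$, one cuts $Q$ at its nodes lying in the highest bag that $Q$ touches — which is one of the $A_i$, because the root bags of consecutive nodes of $Q$ are comparable and $\Bag_x$ together with the root bag of $w$ lie on the root-path — and recurses into the relevant descendant subtrees on the resulting pieces; this produces a chain of subpaths, each left-right contained in a single bag, with consecutive endpoints lying in the intersection of two consecutive bags of a root-path. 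By Lemma~\ref{lem:ds_correctness} every such endpoint pair lies in $\ReachMap$, and since $\Preprocessalgo$ (and $\Updatealgo$) make $\ReachMap$ transitively closed inside each bag, processing $A_0,\dots,A_j$ in order propagates membership in $X$ along this chain from $x$ to $w$. The base case $j=0$ is this argument confined to $\Bag_x$: cutting $Q$ at its nodes in $\Bag_x$ yields pieces whose interiors stay strictly inside $T(\Bag_x)$, hence are left-right contained in $\Bag_x$, so their endpoint pairs are in $\ReachMap(\Bag_x)$ and the single pass over $\Bag_x$ adds them all to $X$. Statement (ii) is proved identically, walking the second loop upward from $\Bag_y$ and replacing ``reachable from $x$'' by ``reaches $y$''.

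\emph{Main obstacle.} The whole difficulty sits in the path decomposition inside the inductive step of (i): a node may appear in a bag without being \emph{rooted} there, so after cutting $Q$ at the nodes of the current bag it is not immediate that each resulting subpath is left-right contained in a single bag, nor that its endpoints land in the bag intersections that the first loop actually traverses. The contiguity property \emph{C3} of tree decompositions is what rescues the argument — a node appearing in a bag and in one of its ancestors appears in every bag in between, which lets one attribute each subpath to a unique bag and descend into subtrees — but turning this into a clean induction, and verifying that the staircase of subpaths so obtained matches the order in which $\DataStructure.\Queryalgo$ visits bags and closes $\ReachMap$, is where essentially all the bookkeeping of the proof lies.
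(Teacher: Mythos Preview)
Your plan is correct and is precisely the argument that the paper's scaffolding (Lemma~\ref{lem:ds_correctness} together with the separator Lemma~\ref{lem:separator_property}) is designed to support; the paper defers the actual proof to its full version, but nothing in the setup suggests a different route. In particular, your combining step---taking the LCA bag $\Bag^{\ast}$ of $\Bag_x$ and $\Bag_y$, extracting a node $w^{\ast}\in\Bag^{\ast}$ on the witness path via Lemma~\ref{lem:separator_property}, and concluding $w^{\ast}\in X\cap Y$ from the two symmetric claims (i) and (ii)---is clean and exactly right.

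One point in your sketch of (i) deserves tightening. When you cut $Q$ at the nodes lying in a bag $A_j$, a cut point $q\in A_j$ may have its \emph{root} bag $\Bag_q$ strictly above $A_j$ (some $A_{j'}$ with $j'>j$); then the adjacent piece is not left-right contained in $A_j$, so Lemma~\ref{lem:ds_correctness} does not apply at $A_j$. You flag exactly this in your ``main obstacle'' paragraph but do not say how it is discharged. The clean fix is to run the decomposition on root bags rather than on bag membership: take as the pivot the node $p^{\ast}$ of $Q$ with the highest root bag; since consecutive root bags along $Q$ are comparable and $\Bag_x=A_0$, this root bag is some $A_{i^{\ast}}$ on the root-path, and every subpath between consecutive such pivots is genuinely left-right contained in $A_{i^{\ast}}$ (now Lemma~\ref{lem:ds_correctness} applies). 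Recursing on the two extremal pieces (from $x$ to the first pivot, and from the last pivot to $w$) drops the highest root bag strictly, so the recursion terminates and yields a chain whose bag indices are non-decreasing along the root-path---exactly the order in which the first loop of $\DataStructure.\Queryalgo$ visits the $A_j$. With this adjustment your inductive invariant (phrased via ``root bag in $T(A_j)$'' rather than ``appears in some bag of $T(A_j)$'') goes through without further difficulty.
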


\begin{lemma}\label{lem:ds_complexity}
$\DataStructure.\Preprocessalgo$ requires $O(n_i)$ time.
Every call to $\DataStructure.\Updatealgo$ and $\DataStructure.\Queryalgo$ requires $O(\log n_i)$ time.
\end{lemma}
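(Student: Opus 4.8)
The plan is to analyze each of the three operations separately, in each case charging the work to a root-to-leaf traversal of the (balanced) tree decomposition $\Tree(G_i)$ and exploiting that bags have constant size.

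First, for $\DataStructure.\Preprocessalgo$: the algorithm visits every bag of $\Tree(G_i)$ exactly once in a bottom-up sweep. By Theorem~\ref{them:tree_dec} (invoked in Step~1) the tree decomposition has $O(n_i)$ bags and width $O(t)=O(1)$, so each bag has $O(1)$ nodes. For a fixed bag $\Bag$, the local graph $G(\Bag)=(\Bag,\ReachMap(\Bag))$ has $O(1)$ vertices and $O(1)$ edges, hence its transitive closure $G^*(\Bag)$ is computed in $O(1)$ time; iterating over the $O(1)$ pairs $(u,v)\in\Bag\times\Bag$ and inserting into $\ReachMap$ costs $O(1)$ per pair by Remark~\ref{rem:reachability_map}. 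Summing $O(1)$ over $O(n_i)$ bags gives $O(n_i)$.

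Next, for $\DataStructure.\Updatealgo$ on a new edge $(x,y)$ (with some bag containing both $x$ and $y$): the traversal goes from $\Bag_{(x,y)}$ up to the root, touching $O(\log n_i)$ bags since $\Tree(G_i)$ has height $O(\log n_i)$ by Theorem~\ref{them:tree_dec}. The per-bag work is again $O(1)$ for the same reasons as above. Hence the total is $O(\log n_i)$. For $\DataStructure.\Queryalgo(x,y)$, there are two traversals: one from $\Bag_x$ to the root growing the set $X$, and one from $\Bag_y$ to the root growing $Y$; each has length $O(\log n_i)$ and does $O(1)$ work per bag (iterating over $O(1)$ pairs in the bag, performing $O(1)$ membership tests in $X$ or $Y$ and $O(1)$ $\ReachMap$ lookups). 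The final intersection test $X\cap Y\neq\emptyset$ is over sets of size $O(\log n_i)$, costing $O(\log n_i)$ (or $O(1)$ per element with an auxiliary bitmask on the $O(1)$ nodes of each visited bag). So $\DataStructure.\Queryalgo$ runs in $O(\log n_i)$ time.

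The main subtlety — though not a deep one — is ensuring that every individual data-structure access in the per-bag loops is genuinely $O(1)$; this is precisely the content of Remark~\ref{rem:reachability_map}, which relies on $\ReachMap$ being stored as the forward/backward lists $\ReachMap^F(u),\ReachMap^B(u)$ indexed by node, so that a query or insert $(u,v)$ reduces to a single membership test or append in a list of size $O(1)$ (the bag size). Given that, all the bounds follow by multiplying the $O(1)$ per-bag cost by the number of bags visited, which is $O(n_i)$ for $\Preprocessalgo$ and $O(\log n_i)$ for $\Updatealgo$ and $\Queryalgo$ by the height bound of Theorem~\ref{them:tree_dec}.
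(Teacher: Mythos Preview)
Your proposal is correct and follows exactly the approach the paper sets up: charge $O(1)$ work per bag via constant bag size (Theorem~\ref{them:tree_dec}) and $O(1)$-time $\ReachMap$ accesses (Remark~\ref{rem:reachability_map}), then multiply by the $O(n_i)$ bags for $\Preprocessalgo$ and by the $O(\log n_i)$ height for $\Updatealgo$ and $\Queryalgo$. The only point worth tightening is the membership test ``$u\in X$'' inside $\Queryalgo$: rather than the bitmask aside, it is cleanest to observe that by property~C3 one can maintain the constant-size set $X\cap\Bag$ incrementally along the root path, so each test is genuinely $O(1)$ without any auxiliary array of size $n_i$.
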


\noindent{\bf Step 3. Preprocessing the library graph $G\restr{V^1}$.}
Given the library subgraph $G\restr{V^1}$ and one copy of the data structure $\DataStructure$ for each local graph $G_i$ of $G\restr{V^1}$, 
the preprocessing of the library graph
is achieved via the algorithm $\RSMalgo$, which is presented in Algorithm~\ref{algo:process}.
In high level, $\RSMalgo$ initially builds the data structure $\DataStructure$ for each local graph $G_i$ using $\DataStructure.\Preprocessalgo$.
Afterwards, it iteratively uses $\DataStructure.\Queryalgo$ to test whether there exists a local graph $G_j$ and two nodes $u\in V_j\cap V_e$, $v\in V_j\cap V_x$ such that $v$ is reachable from $u$ in $G_j$.
If so, the algorithm iterates over all nodes $x,y$ such that $(x,u,\OpenParenthesis_i)\in E$ and $(v,y,\CloseParenthesis_i)\in E$, and uses a $\DataStructure.\Queryalgo$ operation to test whether $y$ is reachable from $x$ in their respective local graph $G_r$.
If not, then $\RSMalgo$ uses a $\DataStructure.\Updatealgo$ operation to insert the edge $x,y$ in $G_r$.
Since this new edge might affect the reachability relations among other nodes in $V_r$, the graph $G_r$ is inserted in $\Pool$ for further processing.
See Algorithm~\ref{algo:process} for a formal description.
The following two lemmas state the correctness and complexity of $\RSMalgo$. 

\begin{lemma}\label{lem:process_correctness}
At the end of $\RSMalgo$, for every graph $G_i=(V_i, E_i)$ and pair of nodes $u,v\in V_i$,
we have that $v$ is reachable from $u$ in $G\restr{V^1}$ iff $\DataStructure.\Queryalgo$ returns $\True$.
\end{lemma}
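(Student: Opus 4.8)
The plan is to read ``$v$ is reachable from $u$ in $G\restr{V^1}$'' as ``$v$ is Dyck-reachable from $u$ in $G\restr{V^1}$'' (i.e., via a path $P$ with $\Label(P)\in\Dyck$), since $G\restr{V^1}$ carries parenthesis-labeled edges while $\DataStructure.\Queryalgo$ only answers plain reachability queries on a single $\epsilon$-labeled local graph. Under this reading the statement splits into a \emph{soundness} direction ($\DataStructure.\Queryalgo(u,v)=\True$ at the end of $\RSMalgo$ implies $v$ Dyck-reachable from $u$ in $G\restr{V^1}$) and a \emph{completeness} direction (the converse). Both are driven by a fixpoint characterization of $\RSMalgo$.

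I would first define a monotone operator $\mathcal{F}$ on tuples of ``summary-edge'' sets $(S_i\subseteq V_i\times V_i)_i$: $\mathcal{F}$ adds $(x,y)$ to $S_i$ whenever there is a local graph $G_j$, nodes $u'\in V_j\cap V_e$ and $v'\in V_j\cap V_x$ such that $v'$ is reachable from $u'$ in $(V_i,E_j\cup S_j)$, together with edges $(x,u',\OpenParenthesis_a),(v',y,\CloseParenthesis_a)\in E$. Then I would show that $\RSMalgo$ terminates — there are at most $\sum_i|V_i|^2$ possible summary edges and every $\DataStructure.\Updatealgo$ strictly enlarges the current set — and that the tuple of summary edges present at termination is exactly the least fixpoint of $\mathcal{F}$. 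Indeed, every edge inserted by $\RSMalgo$ is justified by $\mathcal{F}$ applied to the then-current state (so, by induction on the order of insertions, the final state is $\le$ the lfp, using the soundness of $\DataStructure$ stated above), and since $\Pool$ is empty at termination while each $G_j$ is placed in $\Pool$ initially and re-inserted after each update, on the last extraction of every $G_j$ all of its entry/exit pairs were tested against a state equal to the final one; hence the final state is closed under $\mathcal{F}$, so it is $\ge$ the lfp.

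For soundness, an induction on the derivation of the lfp shows that every summary edge $(x,y)\in S_i$ witnesses a Dyck path $x\Path y$ in $G\restr{V^1}$: such an edge arises from a plain path $u'\Path v'$ in $(V_i,E_j\cup S_j)$, in which each $S_j$-edge is recursively replaced by its Dyck witness and the result is framed by the $\OpenParenthesis_a$ and $\CloseParenthesis_a$ edges, producing a balanced label. Combined with the soundness of $\DataStructure$ (a plain $u\Path v$ path in the final local graph, with its summary edges expanded as above, is a Dyck path in $G\restr{V^1}$), the soundness direction follows. For completeness I would induct on the length of a Dyck path $P:u\Path v$ with $u,v\in V_i$. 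Using that every prefix of $\Label(P)$ is non-negatively balanced, and that $\epsilon$-edges stay within a partition while parenthesis edges change the bracket depth, every depth-$0$ node of $P$ lies in $V_i$, and $P$ splits at consecutive depth-$0$ positions into segments that are either a single $\epsilon$-edge of $E_i$, or a matched-call segment $w\xrightarrow{\OpenParenthesis_a}u'\Path v'\xrightarrow{\CloseParenthesis_a}w'$ where the program-validity conditions force $u'\in V_j\cap V_e$ and $v'\in V_j\cap V_x$ for a single local graph $G_j$ (the shared index $a$ makes caller and callee partitions consistent) and the enclosed sub-path is a strictly shorter Dyck path between nodes of $V_j$. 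By the induction hypothesis together with Lemma~\ref{lem:query_correctness} and the soundness of $\DataStructure$, $v'$ is reachable from $u'$ in the final $(V_i,E_j\cup S_j)$, so by lfp-closure the summary edge $(w,w')$ is present in the final $G_i$; concatenating the segments yields a plain $u\Path v$ path in the final local graph, and Lemma~\ref{lem:query_correctness} forces $\DataStructure.\Queryalgo(u,v)=\True$.

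I expect the main obstacle to be the completeness direction: making the depth-$0$ decomposition of a Dyck path fully rigorous and meshing the path-length induction with the worklist fixpoint computation. In particular, the delicate points are (i) justifying that each enclosed sub-path always connects an entry node to an exit node of one local graph, which is exactly where the $b$-bounded program-valid partitioning assumptions enter, and (ii) ensuring that every such matched call has actually been turned into a summary edge by the time $\RSMalgo$ halts, i.e., that termination with empty $\Pool$ genuinely certifies closure under $\mathcal{F}$.
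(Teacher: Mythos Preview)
Your proposal is correct and follows the canonical approach for correctness of summary-based interprocedural reachability (worklist computes the least fixpoint of the summary operator; soundness by induction on insertions, completeness by same-level decomposition of Dyck paths). The paper defers this proof to its full version, so no direct textual comparison is possible, but your argument is exactly the standard one the setting calls for and is what one would expect the full version to contain.

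Two small points worth tightening. First, in your definition of $\mathcal{F}$ you write ``$v'$ is reachable from $u'$ in $(V_i,E_j\cup S_j)$''; this should be $(V_j,E_j\cup S_j)$. Second, your claim that ``$\epsilon$-edges stay within a partition'' is not literally part of the formal definition of a program-valid partitioning in the paper; it is stated only in the informal structural description (each $G_i$ corresponds to a method and has only $\epsilon$-labeled edges). You rely on it to conclude that every depth-$0$ node of a Dyck path with endpoints in $V_i$ lies in $V_i$, so in a full write-up you should state this assumption explicitly rather than leave it implicit.
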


\begin{lemma}\label{lem:process_complexity}
Let $n=\sum_i n_i$, and $k_1$ be the number of labels appearing in $E\subseteq V^1\times V^1\times \Alphabet_k$ (i.e., $k_1$ is the number of call sites in $G\restr{V^1}$).
$\RSMalgo$ requires $O(n+k_1\cdot \log n)$ time.
\end{lemma}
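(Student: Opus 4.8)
The plan is to split the running time into the one-time preprocessing of Lines~\ref{line:construct_td}--\ref{line:preprocess_all} and the worklist loop of Line~\ref{line:process_main_loop}. For the preprocessing, each local graph $G_i$ is a subgraph of its maximal graph $G'_i$, which has constant treewidth since $\Partition$ does; hence by Theorem~\ref{them:tree_dec} the decomposition $\Tree(G_i)=\Tree(G'_i)$ is built in $O(n_i)$ time, and by Lemma~\ref{lem:ds_complexity} each call $\DataStructure.\Preprocessalgo$ on $\Tree(G_i)$ costs $O(n_i)$. Summing and using $\sum_i n_i = n$ gives $O(n)$ for this part.

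For the worklist loop, I would first bound the number of $\DataStructure.\Updatealgo$ operations. Every edge $(x,y)$ inserted in Line~\ref{line:update} has $x\in V_c(\OpenParenthesis_i)$ and $y\in V_r(\CloseParenthesis_i)$ for some call site $i$, and since $\Partition$ is $b$-bounded with $b=O(1)$ we have $|V_c(\OpenParenthesis_i)|,|V_r(\CloseParenthesis_i)|\le b$; so there are $O(1)$ candidate edges per call site, hence $O(k_1)$ in total. Reachability being monotone under edge insertions, once $(x,y)$ is present the test in Line~\ref{line:if_new_query} keeps returning $\True$, so each candidate edge is inserted at most once. Thus there are $O(k_1)$ updates, each costing $O(\log n)$ by Lemma~\ref{lem:ds_complexity}, for $O(k_1\log n)$, and each triggers one insertion into $\Pool$ (Line~\ref{line:insert_in_pool}); so the number of $\Pool$ extractions is $\ell + O(k_1) = O(n+k_1)$, using $\ell \le n$. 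An extraction of $G_j$ first runs the loop of Line~\ref{line:for_entry_exit} over the $O(1)$ pairs in $(V_j\cap V_e)\times(V_j\cap V_x)$, with one $\DataStructure.\Queryalgo$ of cost $O(\log n_j)$ per pair; over the $\ell$ initial extractions these cost $\sum_j O(\log n_j)\le \sum_j O(n_j)=O(n)$ (since $\log x\le x$), and over the $O(k_1)$ re-extractions they cost $O(k_1\log n)$.

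The crux is the cost of the inner loops of Lines~\ref{line:for_call_return}--\ref{line:insert_in_pool}. They are entered only for an entry-exit pair $(u,v)$ of $G_j$ that is reachable in $G_j$, and then iterate over all $(x,y)$ with $(x,u,\OpenParenthesis_i),(v,y,\CloseParenthesis_i)\in E$; by $b$-boundedness each call site contributes $O(1)$ such pairs, and by program-validity the call sites appearing here are exactly those whose callee is the method of $G_j$, so their number $k_1^{(j)}$ satisfies $\sum_j k_1^{(j)} = k_1$. The key observation is that this work should be charged to the one-time event of $(u,v)$ \emph{becoming} reachable: once the loop has run for $(u,v)$, every candidate caller edge has been attempted, and on later extractions the guard of Line~\ref{line:if_new_query} short-circuits each attempt, so it suffices to execute the inner loop at most once per entry-exit pair --- maintaining a per-pair ``fired'' flag, or, equivalently, propagating the consequences of $G_j$ only when its entry-exit reachability changes. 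As each of the $O(1)$ entry-exit pairs of $G_j$ becomes reachable at most once (monotonicity), the inner loops for $G_j$ cost $O(k_1^{(j)}\cdot\log n)$ in total, and summing over $j$ gives $O(k_1\log n)$. Adding the four contributions yields $O(n)+O(k_1\log n)=O(n+k_1\log n)$, as claimed.

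The main obstacle is exactly this last amortization. A literal count that, for each $G_j$, multiplies the number of times it is extracted (as large as the number of call sites $G_j$ makes) by the number of caller edges into it (as large as the number of call sites that target $G_j$) is quadratic --- $\Theta(k_1^2)$ for a ``hub'' method that both makes and receives many calls. Avoiding this requires that the expensive caller-edge enumeration be triggered only by a genuine change in the entry-exit reachability of $G_j$, of which there are $O(1)$ per graph and $O(n)$ overall, each enumerating $O(k_1^{(j)})$ edges for a total of $O(k_1)$. A secondary point is replacing the naive $O(\ell\log n)$ bound for the entry-exit queries by $O(n)$ via $\sum_j\log n_j\le\sum_j n_j$.
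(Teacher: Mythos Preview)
The paper defers the proof of this lemma to its full version, so there is no in-text argument to compare against directly. Your decomposition is the natural one and is correct: the $O(n)$ cost of Lines~\ref{line:construct_td}--\ref{line:preprocess_all} via Theorem~\ref{them:tree_dec} and Lemma~\ref{lem:ds_complexity}; the $O(k_1)$ bound on $\Updatealgo$ calls (each call site contributes at most $b^2=O(1)$ summary edges, and monotonicity of reachability prevents repeats); the resulting $\ell+O(k_1)$ bound on $\Pool$ extractions; and the $\sum_j \log n_j \le \sum_j n_j = n$ trick for the entry--exit queries over the initial $\ell$ extractions.

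You also correctly isolate the one genuine subtlety. As literally written, Algorithm~\ref{algo:process} re-enumerates all callers of $G_j$ on every extraction, and the guard in Line~\ref{line:if_new_query} still pays $O(\log n)$ per caller even when it returns $\True$; a method that both makes and receives $\Theta(k_1)$ calls would then incur $\Theta(k_1^2)$ queries there. Your fix---firing the caller enumeration of Line~\ref{line:for_call_return} at most once per entry--exit pair via a flag---is exactly the right amortization, and with it the stated bound follows. It is worth being explicit (as your final paragraph essentially is) that this is a small modification of the pseudocode rather than a pure analysis of it; given that the paper bases its construction on~\cite{CIP15}, which uses precisely this kind of change-triggered propagation, the optimization is almost certainly intended.
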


\noindent{\bf Step 4. Library/Client analysis.}
We are ready to describe the library summarization for Library/Client Dyck reachability.
Let $G=(V,E)$ be the program-valid graph representing library and client code, and $V^1, V^2$ a partitioning of $V$
to library and client nodes. 
\begin{compactenum}
\item In the preprocessing phase, the algorithm $\RSMalgo$ is used to preprocess $G\restr{V^1}$.
Note that since $G$ is a program valid graph, so is $G\restr{V^1}$, hence $\RSMalgo$ can execute on $G\restr{V^1}$.
The summaries created are in form of $\DataStructure.\Updatealgo$ operations performed on edges $(x,y)$.
\item In the querying phase, the set $V^2$ is revealed, and thus the whole of $G$.
Hence now $\RSMalgo$ processes $G$, without using $\DataStructure.\Preprocessalgo$ on the graphs $G_i$ that correspond to library methods,
as they have already been processed in step~1. Note that the graphs $G_i$ that correspond to library methods are used for querying and updating.
\end{compactenum}
It follows immediately from Lemma~\ref{lem:process_correctness} that at the end of the second step, 
for every local graph $G_i=(V_i, E_i)$ of the client graph, for every pair of nodes $u,v \in V_i$,
$v$ is Dyck-reachable from $u$ in the program-valid graph $G$ if and only if $\DataStructure.\Queryalgo$ returns $\True$ on input $u,v$.

Now we turn our attention to complexity.
Let $n_1=|V^1|$ and $n_2=|V^2|$.
By Lemma~\ref{lem:process_complexity}, the time spent for the first step is, $O(n_1+ k_1\cdot \log n_1)$,
and the time spent for the second step is $O(n_2 + k_1\cdot \log n_1 + k_2\cdot \log n_2)$.

\noindent{\bf Constant-time queries.}
Recall that our task is to support $O(1)$-time queries about the Dyck reachability of pairs of nodes on the client subgraph $G\restr{V^2}$.
As Lemma~\ref{lem:process_correctness} shows, after $\RSMalgo$ has finished, each such query costs $O(\log n_2)$ time.
We use existing results for reachability queries on constant-treewidth graphs~\cite[Theorem~6]{CIP16} which allow us to reduce the query time to $O(1)$,
while spending $O(n_2)$ time in total to process all the graphs.

\begin{theorem}\label{them:library}
Consider a $\Alphabet_k$-labeled program-valid graph $G=(V,E)$ of constant program-valid treewidth,
and the library and client subgraphs $G^1=(V^1, E^1)$ and $G^2=(V^2, E^2)$.
For $i\in \{1,2\}$ let $n_i=|V^i|$ be the number of nodes, and $k_i$ be the number of call sites in each graph $G^i$, with $k_1+k_2=k$.
The algorithm $\DynamicDyck$ requires 
\begin{compactenum}
\item $O(n_1+k_1\cdot \log n_1)$ time and $O(n_1)$ space in the preprocessing phase, and
\item $O(n_2 + k_1\cdot \log n_1 + k_2\cdot \log n_2)$ time and $O(n_1+n_2)$ space in the query phase,
\end{compactenum}
after which pair reachability queries are handled in $O(1)$ time.

\end{theorem}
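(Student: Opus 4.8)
The statement is essentially a bookkeeping corollary of the machinery developed in Steps~1--4, so the plan is to assemble correctness and complexity from the lemmas already established, being careful about the library/client split. First I would fix a constant-treewidth program-valid partitioning $\Partition$ of $G$ guaranteed by the hypothesis and let $\Partition^1$ be its restriction to $V^1$; since a restriction of a program-valid partitioning is again program-valid and of no larger treewidth, $G\restr{V^1}$ is a program-valid graph of constant treewidth with $n_1$ nodes and $k_1$ call sites, so $\RSMalgo$ applies to it. For correctness I would invoke Lemma~\ref{lem:process_correctness} twice: once for the preprocessing run of $\RSMalgo$ on $G\restr{V^1}$, which leaves every $\DataStructure$ instance answering intra-$G\restr{V^1}$ reachability correctly, and once for the query-phase run of $\RSMalgo$ on all of $G$ (reusing those instances), which then certifies that for every client local graph $G_i$ and pair $u,v\in V_i$, $v$ is Dyck-reachable from $u$ in $G$ iff $\DataStructure.\Queryalgo(u,v)$ returns $\True$. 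The only subtlety here is that skipping $\DataStructure.\Preprocessalgo$ on the library local graphs in the query phase is sound, which follows because $\DataStructure.\Preprocessalgo$ is idempotent and the set of summary edges produced by $\RSMalgo$ on $G\restr{V^1}$ is a subset of those it produces on $G$.

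For the preprocessing complexity I would apply Lemma~\ref{lem:process_complexity} directly to $G\restr{V^1}$, giving $O(n_1 + k_1\cdot\log n_1)$ time, and bound the space by $O(n_1)$ using Theorem~\ref{them:tree_dec} for the $O(n_1)$-size tree decompositions and Remark~\ref{rem:reachability_map} for the $O(n_1)$-size reachability map. For the query phase, the new work over and above the reused library state is (i)~building tree decompositions and running $\DataStructure.\Preprocessalgo$ only for the client local graphs, costing $O(n_2)$ by Theorem~\ref{them:tree_dec} and Lemma~\ref{lem:ds_complexity}; and (ii)~the worklist loop of $\RSMalgo$ on $G$. For (ii) the plan is to refine the amortization inside the proof of Lemma~\ref{lem:process_complexity}: each $\DataStructure.\Queryalgo$/$\DataStructure.\Updatealgo$ on a local graph $G_i$ of size $n_i$ costs $O(\log n_i)$ by Lemma~\ref{lem:ds_complexity}, and $n_i\le n_1$ if $G_i$ is a library method and $n_i\le n_2$ if it is a client method; the number of such operations charged to library methods is $O(k_1)$ and to client methods $O(k_2)$, using that each call site contributes only $O(b^2)=O(1)$ entry/exit-node pairs and $O(1)$ summary edges, amortized over the $\Pool$ re-insertions (themselves bounded by the summary edges). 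Summing yields $O(n_2 + k_1\cdot\log n_1 + k_2\cdot\log n_2)$, with query-phase space $O(n_1+n_2)$. Finally, to turn the $O(\log n_2)$ per-query cost of $\DataStructure.\Queryalgo$ into $O(1)$, I would apply \cite[Theorem~6]{CIP16} to each client local graph, absorbing its $O(n_2)$ total preprocessing cost into the bound already stated.

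The main obstacle is step (ii): one must argue carefully that the per-operation $O(\log n_i)$ costs split along the library/client boundary and that the operation counts are $O(k_1)$ and $O(k_2)$ respectively, rather than $O(k)$ on each side. The delicate case is a library method re-activated because a \emph{client} call site inserts a summary edge into it: such re-activations must be charged to client call sites, yet each costs only $O(\log n_1)$ per operation since the affected graph is a library local graph, and one checks the total of these charges is $O(k_2)$ because each client call site into the library contributes only $O(b^2)=O(1)$ summary edges and hence $O(1)$ re-activations. Everything else --- program-validity of restrictions, idempotence of $\DataStructure.\Preprocessalgo$, and the space bounds --- is routine.
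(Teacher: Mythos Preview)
Your approach is essentially the paper's own: the theorem is assembled from Lemma~\ref{lem:process_correctness}, Lemma~\ref{lem:process_complexity}, Theorem~\ref{them:tree_dec}, Remark~\ref{rem:reachability_map}, and the constant-time query result of \cite[Theorem~6]{CIP16}, exactly as in Step~4. In fact you supply more detail than the paper, which simply asserts the query-phase bound $O(n_2 + k_1\cdot\log n_1 + k_2\cdot\log n_2)$ as following from Lemma~\ref{lem:process_complexity} without spelling out the library/client split of the amortization.

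One point to clean up: your ``delicate case'' paragraph has the direction of summary-edge insertion backwards. In $\RSMalgo$, a summary edge $(x,y)$ is always inserted into the \emph{caller}'s local graph $G_r$, charged to the call site that lives in $G_r$. Thus a client call site into the library inserts its summary edge into the \emph{client} graph, not the library; conversely, a library method is re-activated in the query phase only when a \emph{library} call site (in particular a callback call site) inserts a summary edge into it, after its client callee acquires new entry-to-exit reachability. So these re-activations are charged to $k_1$, not $k_2$, and the cascade of library-side updates is globally bounded by the $O(k_1)$ total library summary edges, each costing $O(\log n_1)$ --- which is precisely the $k_1\cdot\log n_1$ term already in the query-phase bound. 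With that correction your amortization goes through and matches the paper's claimed split.
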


\section{Experimental Results}\label{sec:experiments}

In this section we report on experimental results obtained for the problems 
of (i) alias analysis via points-to analysis on SPGs, and (ii) library/client data-dependence analysis.

\subsection{Alias Analysis}\label{subsec:experiments_bidirected}

\noindent{\bf Implementation.}
We have implemented our algorithm $\BidirectedAlgo$ in C++ and evaluated its performance in 
performing Dyck reachability on bidirected graphs.
The algorithm is implemented as presented in Section~\ref{sec:bidirected},
together with the preprocessing step that handles the $\epsilon$-labeled edges.
Besides common coding practices we have performed no engineering optimizations.
We have also implemented~\cite[Algorithm~2]{Zhang13}, including the Fast-Doubly-Linked-List (FDLL),
which was previously shown to be very efficient in practice.

\noindent{\bf Experimental setup.}
In our experimental setup we used the DaCapo-2006-10-MR2 suit~\cite{Blackburn06}, which contains 11 real-world benchmarks.
We used the tool reported in~\cite{Yan11} to extract the Symbolic Points-to Graphs (SPGs), which in turn uses Soot~\cite{Soot} to process input Java programs.
Our approach is similar to the one reported in~\cite{PLDI36,Yan11,Zhang13}.
The outputs of the two compared methods were verified to ensure validity of the results.
No compiler optimizations were used.
All experiments were run on a Windows-based laptop with an Intel Core i7-5500U \@ $2.40$ GHz CPU and $16$ GB of memory, without any compiler optimizations.

\noindent{\bf SPGs and points-to analysis.}
For the sake of completeness, we outline the construction of SPGs and the reachability relation they define.
A more detailed exposition can be found in~\cite{PLDI36,Yan11,Zhang13}.
An SPG is a graph, the node set of which consists of the following three subsets:
(i)~variable nodes $\mathcal{V}$ that represent variables in the program,
(ii)~allocation nodes $\mathcal{O}$ that represent objects constructed with the \emph{new} expression, and
(iii)~symbolic nodes $\mathcal{S}$ that represent abstract heap objects.
Similarly, there are three types of edges, as follows, where $\mathsf{Fields}=\{f_i\}_{1\leq i\leq k}$ denotes the set of all fields of composite data types.

\begin{compactenum}
\item Edges of the form $\mathcal{V} \times\mathcal{O}\times\{\epsilon\}$ represent the objects that variables point to.
\item Edges of the form $\mathcal{V}\times \mathcal{S} \times \{\epsilon\} $ represent the abstract heap objects that variables point to.
\item Edges of the form $\left(\mathcal{O} \cup \mathcal{S}\right) \times \left(\mathcal{O} \cup \mathcal{S}\right) \times \mathsf{Fields}$ represent the fields of objects that other objects point to.
\end{compactenum}

We note that since we focus on context-insensitive points-to analysis, we have not included edges that model calling context in the definition of the SPG.
Additionally, only the forward edges labeled with $f_i$ are defined explicitly, and the backwards edges labeled with $\ov{f}_i$ are implicit, since the SPG is treated as bidirected.
Memory aliasing between two objects $o_1,o_2\in \mathcal{S}\cup \mathcal{O}$ occurs when there is a path $o_1\Path o_2$, such that every opening field access $f_i$ is properly matched by a closing field access $\ov{f}_i$.
Hence the Dyck grammar is given by $\StartNonTerminal\to \StartNonTerminal~ \StartNonTerminal ~ | ~ f_i ~ \StartNonTerminal~\ov{f}_i ~ |  ~ \epsilon$.
This allows to infer the objects that variable nodes can point to via composite paths that go through many field assignments.
See Figure~\ref{fig:spg} for a minimal example.
\begin{figure}[!h]
\centering
\begin{tikzpicture}[thick, >=latex,
pre/.style={<-,shorten >= 1pt, shorten <=1pt, thick},
post/.style={->,shorten >= 1pt, shorten <=1pt,  thick},
und/.style={very thick, draw=gray},
node1/.style={circle, minimum size=3.5mm, draw=black!100, line width=1pt, inner sep=0},
node2/.style={rectangle, minimum size=3.5mm, draw=black!100, fill=white!100, very thick, inner sep=0},
virt/.style={circle,draw=black!50,fill=black!20, opacity=0}]

\newcommand{\xdisposition}{0}
\newcommand{\ydisposition}{-0.2}
\newcommand{\xstep}{0.45}
\newcommand{\ystep}{0.35}

\node[] at (\xdisposition + 0*\xstep, \ydisposition + 0*\ystep) {$z.f=x$};
\node[] at (\xdisposition + 0*\xstep, \ydisposition + -1*\ystep) {$y=z.f$};

\renewcommand{\xdisposition}{3}
\renewcommand{\ydisposition}{0}
\renewcommand{\xstep}{1}
\renewcommand{\ystep}{0.8}

\node[node1] (x) at (\xdisposition + 0*\xstep, \ydisposition + 0*\ystep) {$x$};
\node[node1] (z) at (\xdisposition + 1*\xstep, \ydisposition + 0*\ystep) {$z$};
\node[node1] (y) at (\xdisposition + 2*\xstep, \ydisposition + 0*\ystep) {$y$};

\node[node2] (xx) at (\xdisposition + 0*\xstep, \ydisposition + -1*\ystep) {$x$};
\node[node2] (zz) at(\xdisposition + 1*\xstep, \ydisposition + -1*\ystep) {$z$};
\node[node2] (yy) at(\xdisposition + 2*\xstep, \ydisposition + -1*\ystep) {$y$};

\draw[->, very thick] (x) to (xx);
\draw[->, very thick] (y) to (yy);
\draw[->, very thick] (z) to (zz);

\draw[->, very thick] (zz) to node[above]{$f$} (xx);
\draw[->, very thick] (zz) to node[above]{$f$} (yy);

\end{tikzpicture}
\caption{A minimal program and its (bidirected) SPG. Circles and squares represent variable nodes and object nodes, respectively. Only forward edges are shown.}\label{fig:spg}
\end{figure}
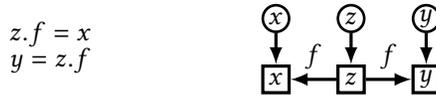

\noindent{\bf Analysis of results.}
The running times of the compared algorithms are shown in Table~\ref{tab:bidirected_experiments}
We can see that the algorithm proposed in this work is much faster than the existing algorithm of~\cite{Zhang13}
in all benchmarks.
The highest speedup is achieved in benchmark \emph{luindex}, where our algorithm is 13x times faster.
We also see that all times are overall small. 

\begin{table}
\centering
\renewcommand{\arraystretch}{0.9}
\small
\caption{Comparison between our algorithm and the existing from ~\cite{Zhang13}.
The first three columns contain the number of fields (Dyck parenthesis),
nodes and edges in the SPG of each benchmark.
The last two columns contain the running times, in seconds.}
\label{tab:bidirected_experiments}
\begin{tabular}{|c|||c|c|c||c|c|}
\hline
\textbf{Benchmark} &  \textbf{Fields} & \textbf{Nodes} & \textbf{Edges} &  \textbf{Our Algorithm} & \textbf{Existing Algorithm}\\
\hline
\hline
antlr & 172 & 13708 & 23547  & 0.428783 & 1.34152\\
\hline
bloat & 316 & 43671 & 103361  & 17.7888 & 34.6012\\
\hline
chart & 711 & 53500 & 91869  & 8.99378 & 34.9101\\
\hline
eclipse & 439 & 34594 & 52011  & 3.62835 & 12.7697\\
\hline
fop & 1064 & 101507 & 178338  & 42.5447 & 148.034\\
\hline
hsqldb & 43 & 3048 & 4134  & 0.012899 & 0.073863\\
\hline
jython & 338 & 56336 & 167040  & 40.239 & 55.3311\\
\hline
luindex & 167 & 9931 & 14671  & 0.068013 & 0.636346\\
\hline
lusearch & 200 & 12837 & 21010  & 0.163561 & 1.12788\\
\hline
pmd & 357 & 31648 & 58025  & 2.21662 & 8.92306\\
\hline
xalan & 41 & 2342 & 2979  & 0.006626 & 0.045144\\
\hline
\end{tabular}
\end{table}


\subsection{Library/Client Data-dependence Analysis}\label{subsec:experiments_library}

\noindent{\bf Implementation.}
We have implemented our algorithm $\DynamicDyck$ in Java and evaluated its performance in
performing Library/Client data-dependency analysis via Dyck reachability.
Our algorithm is built on top of Wala~\cite{Wala}, and is implemented as presented in Section~\ref{sec:library}.
Besides common coding practices we have performed no engineering optimizations.
We used the LibTW library~\cite{Dijk06} for computing the tree decompositions of the input graphs, under the \emph{greedy degree} heuristic.

\noindent{\bf Experimental setup.}
We have used the tool of \cite{Tang15} for obtaining the data-dependence graphs of Java programs.
In turn, that tool uses Wala~\cite{Wala} to build the graphs, and specifies the parts of the graph that correspond to library and client code.
Java programs are suitable for Library/Code analysis, since the ubiquitous presence of callback functions
makes the library and client code interdependent, so that the two sides cannot be analyzed in isolation.
Our algorithm was compared with the TAL reachability and CFL reachability approach,
as already implemented in~\cite{Tang15}. 
The comparison was performed in terms of running time and memory usage, first for the analysis of library code to produce the summaries,
and then for the analysis of the library summaries with the client code.
The outputs of all three methods were compared to ensure validity of the results.
The measurements for our algorithm include the time and memory used for computing the tree decompositions.
All experiments were run on a Windows-based laptop with an Intel Core i7-5500U \@ $2.40$ GHz CPU and $16$ GB of memory, without any compiler optimizations.

\noindent{\bf Benchmarks.}
Our benchmark suit is similar to that of~\cite{Tang15}, consisting of $12$ Java programs from SPECjvm2008~\cite{SPECjvm2008},
together with $4$ randomly chosen programs from GitHub~\cite{Github}.
We note that as reported in~\cite{Tang15}, they are unable to handle the benchmark \emph{serial} from SPECjvm2008, 
due to out-of-memory issues when preprocessing the library (recall that the space bound for TAL reachability is $O(n^4)$).
In contrast, our algorithm handles \emph{serial} easily, and is thus included in the experiments.

\noindent{\bf Analysis of results.}
Our experimental comparison is depicted in Table~\ref{tab:libraries_time} for running time and Table~\ref{tab:libraries_space} for memory usage
We briefly discuss our findings.

\emph{Treewidth.}
First, we comment on the treewidth of the obtained data-dependence graphs, which is reported on Table~\ref{tab:libraries_time} and Table~\ref{tab:libraries_space}.
Recall that our interest is not on the treewidth of the whole data-dependence graph, but on the treewidth of its program-valid partitioning,
which yields a subgraph for each method of the input program.
In each line of the tables we report the \emph{maximum} treewidth of each benchmark, i.e. the maximum treewidth over the subgraphs of its program-valid partitioning.
We see that the treewidth is typically very small (i.e., in most cases it is $5$ or $6$) in both library and client code.
One exception is the client of mpegaudio, which has large treewidth. Observe that even this corner case of large treewidth was easily handled by our algorithm.

\emph{Time.}
Table~\ref{tab:libraries_time} shows the time spent by each algorithm for analyzing library and client code separately.
We first focus on total time, taken as the sum of the times spent by each algorithm in the library and client graph of each benchmark.
We see that in every benchmark, our algorithm significantly outperforms both TAL and CFL reachability, reaching a 10x-speedup compared to TAL (in \emph{mpegaudio}),
and 5x-speedup compared to CFL reachability (in \emph{helloworld}).
Note that the benchmark \emph{serial} is missing from the figure, as TAL reachability runs out of memory.
The benchmark is found on Table~\ref{tab:libraries_time}, where our algorithm achieves a 630x-speedup compared to CFL reachability.

We now turn our attention to the trade-off between library preprocessing and client querying times.
Here, the advantage of TAL over CFL reachability is present for handling client code.
However, even for client code our algorithm is faster than TAL in all cases except one, and reaches even a 30x-speedup over TAL (in \emph{sunflow}).
Finally, observe that in all cases, the total running time of our algorithm on library and client code combined is much smaller than each of the other methods on library code alone.

\noindent\emph{Memory.} 
Table~\ref{tab:libraries_space} compares the total memory used for analyzing library and client code.
We see that our algorithm significantly outperforms both TAL and CFL reachability in all benchmarks.
Again, TAL uses more memory that CFL in the preprocessing of libraries, but less memory when analyzing client code.
However, our algorithm uses even less memory than TAL in all benchmarks.
The best performance gain is achieved in \emph{serial}, where TAL runs out of memory after having consumed more than $12$~GB.
For the same benchmark, CFL reachability uses more than $4.3$~GB.
In contrast, our algorithm uses only $130$~MB, thus achieving a 33x-improvement over CFL, and at least a 90x-improvement over TAL.
We stress that for memory usage, these are tremendous gains.
Finally, observe that for each benchmark, the maximum memory used by our algorithm for analyzing library and client code is smaller than the minimum memory used between library and client, by each of the other two methods.

\noindent{\bf Improvement independent of callbacks.}
We note that in contrast to TAL reachability, the improvements of our algorithm are not restricted to the presence of callbacks.
Indeed, the algorithms introduced here significantly outperform the CFL approach even in the presence of no callbacks.
This is evident from Table~\ref{tab:libraries_time}, which shows that our algorithm processes the library graphs much faster than both CFL and TAL reachability.

\begin{table}
\centering
\small
\renewcommand{\arraystretch}{0.9}
\setlength\tabcolsep{3.5pt}
\caption{Running time  of our algorithm vs the TAL and CFL approach for data-dependence analysis with library summarization.
Times are in milliseconds. MEM-OUT indicates that the algorithm run out of memory.
The number of nodes and treewidth reflects the average and maximum case, respectively, among all methods in each benchmark.}\label{tab:libraries_time}
\begin{tabular}{|c|||c|c||c|c|||c|c||c|c||c|c|}
\hline
 & \multicolumn{2}{c||}{Nodes} & \multicolumn{2}{c|||}{TW} & \multicolumn{2}{|c||}{\textbf{Our Algorithm}} & \multicolumn{2}{|c||}{\textbf{TAL}} & \multicolumn{2}{|c|}{\textbf{CFL}} \\
 \hline
\textbf{Benchmark} & Lib. & Cl. & Lib. & Cl. & Lib. & Cl. & Lib. & Cl. & Li. & Cl. \\
\hline
\hline
helloworld & 16003 & 296 & 5 &3 & 229 & 5 & 1044 & 31 & 855 & 578\\
\hline
check & 16604 & 3347 & 5 & 4 & 228 & 54 & 1062 & 72 & 821 & 620\\
\hline
compiler & 16190 & 536 &5 & 3 & 248 & 11 & 995 & 57 & 876 & 572\\
\hline
sample & 3941 & 28 & 4 & 1 & 86 & 1 & 258 & 14 & 368 & 113\\
\hline
crypto & 20094 & 3216 & 5 & 5 & 273 & 66 & 1451 & 196 & 961 & 776\\
\hline
derby & 23407 & 1106 & 6 & 3 & 389 & 22 & 1301 & 83 & 1003 & 1100\\
\hline
mpegaudio & 28917 & 27576 &5 & 24 & 204 & 177 & 5358 & 253 & 1864 & 1586\\
\hline
xml & 71474 & 2312 & 5 & 3 & 489 & 115 & 5492 & 100 & 1891 & 2570\\
\hline
mushroom & 3858 & 7 & 4 & 1 & 86 & 1 & 230 & 14 & 349 & 124\\
\hline
btree & 6710 & 1103 &4 & 4 & 144 & 34 & 583 & 111 & 571 & 197\\
\hline
startup & 19312 & 621 &5 & 3 & 279 & 17 & 1651 & 110 & 1087 & 946\\
\hline
sunflow & 15615 & 85 & 5 & 2 & 217 & 1 & 1073 & 31 & 811 & 549\\
\hline
compress & 16157 & 1483 & 5 & 3 & 240 & 23 & 1119 & 112 & 783 & 999\\
\hline
parser & 7856 & 112 & 4 & 1 & 172 & 3 & 443 & 21 & 572 & 241\\
\hline
scimark & 16270 & 2027 &5 & 5 & 220 & 34 & 1004 & 70 & 805 & 595\\
\hline
serial & 69999 & 468 & 8 & 3 & 440 & 9 & \footnotesize{MEM-OUT} & \footnotesize{MEM-OUT} & 117147 & 165958\\
\hline
\end{tabular}
\end{table}

\begin{table}
\centering
\small
\renewcommand{\arraystretch}{0.9}
\setlength\tabcolsep{4pt}
\caption{Memory usage of our algorithm vs the TAL and CFL approach for data-dependence analysis with library summarization.
Memory usage is in Megabytes. MEM-OUT indicates that the algorithm run out of memory.
The number of nodes and treewidth reflects the average and maximum case, respectively, among all methods in each benchmark.}\label{tab:libraries_space}
\begin{tabular}{|c|||c|c||c|c|||c|c||c|c||c|c|}
\hline
 & \multicolumn{2}{c||}{Nodes} & \multicolumn{2}{c|||}{TW} & \multicolumn{2}{|c||}{\textbf{Our Algorithm}} & \multicolumn{2}{|c||}{\textbf{TAL}} & \multicolumn{2}{|c|}{\textbf{CFL}} \\
 \hline
\textbf{Benchmark} & Lib. & Cl. & Lib. & Cl. & Lib. & Cl. & Lib. & Cl. & Li. & Cl. \\
\hline
\hline
helloworld & 16003 & 296 & 5 &3 & 31 & 27 & 321 & 44 & 104 & 126\\
\hline
check & 16604 & 3347 & 5 & 4 & 34 & 31 & 336 & 89 & 132 & 184\\
\hline
compiler & 16190 & 536 &5 & 3 & 31 & 28 & 329 & 44 & 108 & 137\\
\hline
sample & 3941 & 28 & 4 & 1 & 19 & 16 & 232 & 59 & 59 & 64\\
\hline
crypto & 20094 & 3216 & 5 & 5 & 45 & 45 & 261 & 61 & 127 & 188\\
\hline
derby & 23407 & 1106 & 6 & 3 & 46 & 41 & 600 & 88 & 204 & 265\\
\hline
mpegaudio & 28917 & 27576 &5 & 24 & 96 & 96 & 516 & 219 & 262 & 397\\
\hline
xml & 71474 & 2312 & 5 & 3 & 108 & 108 & 463 & 153 & 373 & 480\\
\hline
mushroom & 3858 & 7 & 4 & 1 & 19 & 16 & 230 & 59 & 58 & 58\\
\hline
btree & 6710 & 1103 &4 & 4 & 22 & 19 & 308 & 65 & 72 & 89\\
\hline
startup & 19312 & 621 &5 & 3 & 66 & 66 & 345 & 92 & 178 & 230\\
\hline
sunflow & 15615 & 85 & 5 & 2 & 30 & 27 & 315 & 43 & 102 & 124\\
\hline
compress & 16157 & 1483 & 5 & 3 & 32 & 29 & 338 & 50 & 105 & 131\\
\hline
parser & 7856 & 112 & 4 & 1 & 22 & 19 & 320 & 64 & 73 & 83\\
\hline
scimark & 16270 & 2027 &5 & 5 & 32 & 29 & 134 & 49 & 106 & 140\\
\hline
serial & 69999 & 468 & 8 & 3 & 130 & 130 & \footnotesize{MEM-OUT} & \footnotesize{MEM-OUT} & 3964 & 4314\\
\hline
\end{tabular}
\end{table}

\section{Conclusion}
In this work we consider Dyck reachability problems for alias and 
data-dependence analysis.
For alias analysis, bidirected graphs are natural, for which we present
improved upper bounds, and present matching lower bounds to show our algorithm
is optimal.
For data-dependence analysis, we exploit constant treewidth property to present
almost-linear time algorithm.
We also show that for general graphs Dyck reachability bounds cannot be 
improved without achieving a major breakthrough.

\begin{acks}                            
The research was partly supported by Austrian Science Fund (FWF) Grant No P23499- N23, 
FWF NFN Grant No S11407-N23 (RiSE/SHiNE), and ERC Start grant (279307: Graph Games).
\end{acks}

\bibliography{bibliography}

\end{document}